\documentclass[11pt]{article}
\usepackage[a4paper,margin=2.6cm]{geometry}
\usepackage{amsmath,amssymb,amsthm}
\usepackage{float}
\usepackage{algorithm,algpseudocode}
\newcommand{\algheader}[2]{\bigskip\par\noindent\textbf{#1}\ \ #2\par\nobreak\vspace{2pt}\hrule height 0.4pt\nobreak\vspace{5pt}\par\nobreak}
\newcommand{\algfooter}{\par\vspace{3pt}\hrule height 0.4pt\bigskip}
\usepackage{url}
\usepackage{hyperref}

\newtheorem{theorem}{Theorem}[section]
\newtheorem{lemma}[theorem]{Lemma}
\newtheorem{proposition}[theorem]{Proposition}
\newtheorem{corollary}[theorem]{Corollary}
\theoremstyle{definition}
\newtheorem{definition}[theorem]{Definition}
\newtheorem{example}[theorem]{Example}
\theoremstyle{remark}
\newtheorem{remark}[theorem]{Remark}

\newcommand{\C}{\mathbb C}
\newcommand{\Q}{\mathbb Q}
\newcommand{\Z}{\mathbb Z}

\newcommand{\Cl}{\operatorname{Cl}}
\newcommand{\cO}{\mathcal O}
\newcommand{\cA}{\mathcal A}
\newcommand{\cS}{\mathcal S}
\newcommand{\cV}{\mathcal V}
\newcommand{\cP}{\mathcal P}
\newcommand{\cU}{\mathcal U}
\newcommand{\cC}{\mathcal C}
\newcommand{\Fbar}{\overline{F}}
\newcommand{\den}{\operatorname{den}}
\newcommand{\Const}{\operatorname{Const}}

\newcommand{\supp}{\operatorname{supp}}
\newcommand{\Div}{\operatorname{Div}}
\newcommand{\Jac}{\operatorname{Jac}}
\newcommand{\F}{\mathbb F}
\newcommand{\dv}{\operatorname{div}}
\newcommand{\Dbar}{\overline{D}}
\newcommand{\fp}{\mathfrak p}
\newcommand{\fd}{\mathfrak d}

\title{Parallel Integration over Simple Radical Extensions II:\\ Mixed Towers}
\author{Sam Blake}
\date{\today}

\begin{document}
\maketitle

\begin{abstract}
In Part I we extended the structure theorems underlying the Risch--Norman (parallel Risch) method to a simple radical extension $L=K(y)$, $y^m=q$, of a differential field $K=F(t_1,\dots,t_n)$ closed under the derivation. Here we remove the closure hypothesis: the radical may occupy any position in the tower, so that the derivatives of the generators above it involve $y$ --- the setting of Bronstein's algorithm for mixed elementary functions. The working ring is $\cA=\cO[t_{j+1},\dots,t_n]$, the integral closure of $F[t_1,\dots,t_n]$ in $L$: a Krull domain, free over the polynomial ring on Trager's basis, so that all factorisation remains in a unique factorisation domain. The denominator of the derivation is no longer an element but a divisor $\fd_D$ on $\cA$, and the valuation lemma takes the unified form $v_P(Dg)=v_P(g)-(1+v_P(\fd_D))$ at normal height-one primes, subsuming the shifts $\{1,e_P\}$ of Part I; the proof localises and requires no cancellation analysis. Stability of the class group and the unit group, $\Cl(\cA)\cong\Cl(\cO)$ and $\cA^*=\cO^*$, splits the admissible logands into $S$-units of $\cO$ --- computed by the machinery of Part I --- and irreducible polynomials moving in the upper variables, whose residues must be constants. A flattening lemma absorbs radical layers with polynomial integral closure into the transcendence basis, reducing certain radical towers to a single radical: Bronstein's example $\int(\log(x)+\sqrt{\log(x)+\sqrt{\log(x)}})/(1+\log(x))\,dx$ becomes a one-radical computation whose residues reproduce the roots of his resultant criterion exactly, with no Hermite reduction, Puiseux expansions or integral-basis computation. We prove degree bounds in the top variable and, through the leading part of the derivation at a place, bounds on the degrees in the lower variables and on the exponents of the special primes wherever that leading part has no kernel --- a condition decided in the towers of the paper, whose failure is exactly the known obstruction below the top, and where the classical guess is shown to miss an elementary integral. The resulting algorithm --- unlike the classical parallel method, whose failure proves nothing --- returns certificates of non-elementarity in two situations: a residue outside the constant field, and, when every bound in force is proved, a residue-free remainder that the linear system shows to be non-exact. Every example is computed by an accompanying SymPy implementation and verified by differentiation.
\end{abstract}

\section{Introduction}

The Risch--Norman method computes elementary antiderivatives over a differential field $F(t_1,\dots,t_n)$ by a single global ansatz: guess the denominator of the integral and the possible \emph{logands} --- the arguments $u_i$ of the logarithms in Liouville's theorem, in the terminology of the parallel-method literature --- bound the degree of the polynomial numerator, and solve a linear system. Part I of this series \cite{PartI} extended the structural facts justifying the method --- the Hermite-type shape of the denominator and the identification of the logands --- to a simple radical extension $L=K(y)$, $y^m=q$, under the standing hypothesis that the transcendental base $K$ is closed under the derivation. That hypothesis silently confines the radical to the \emph{top} of the tower: it excludes every integrand in which a logarithm or exponential of an algebraic function occurs, such as
\[
\int\log\bigl(x+\sqrt{x^2+1}\bigr)\,dx ,
\]
where $Dt=1/y\notin K$ for $t=\log(x+y)$, $y^2=x^2+1$.

Towers with algebraic extensions at arbitrary positions are the subject of Bronstein's algorithm for mixed elementary functions \cite{Bronstein89,Bronstein90}, which proceeds recursively: at each level an integral basis is computed by Trager's algorithm, integrands are normalised by Hermite reduction, residues are extracted by a resultant, and Risch differential equations are solved by matching Puiseux expansions at the special places. In this paper we develop the parallel counterpart for towers containing one simple radical: a single working ring, a single ansatz, no integral-basis computation (the basis is explicit), no Puiseux expansions, and no recursion. The price, as always with the parallel method, is completeness outside the cases where the degree bounds are proved.

The parallel method goes back to Norman and Moore \cite{NormanMoore77}; its structural justification for transcendental towers is due to Davenport \cite{Davenport82} and Davenport and Trager \cite{DavenportTrager85}, refined in Bronstein's last paper \cite{Bronstein07} and consolidated in \cite[Ch.~10]{Bronstein05}, whose \textbf{ParallelIntegrate} is the template for Section~\ref{sec:algorithm}. Boettner's thesis \cite{Boettner10} is the first treatment of mixed transcendental and algebraic extensions within the Risch--Norman framework, and Raab \cite{Raab12} has extended the parallel method in other directions; neither proves structure theorems for the algebraic case, which is what Part I and the present paper supply. On the recursive side, integration of pure algebraic functions descends from Trager \cite{Trager84} and Davenport \cite{Davenport81} to Bronstein's mixed algorithm \cite{Bronstein89,Bronstein90,Bronstein91}; its most recent completion is Schultz's revisiting of Trager's algorithm \cite{Schultz15}, which removes Trager's restrictions at branch and infinite places, decides torsion by reduction modulo two primes of good reduction, and --- rather than failing --- decomposes any algebraic differential into exact, second-kind and third-kind parts over a rationally constructed basis of second-kind normal forms. The overlap with the present paper is the case $n=1$, $D=d/dx$, where our method needs no integral basis and no divisor arithmetic (Cohen's and Schultz's own integrals, \S\S\ref{ex:cohen}--\ref{ex:schultz}, are computed by a residue classification and a continued fraction); Schultz's torsion decision is the natural completion of our unit search (Proposition~\ref{prop:nontorsion}), and his second-kind normal forms are the positive counterpart of our holomorphic-remainder certificate (Proposition~\ref{prop:certificates}(b)). The pseudo-elliptic integrals that appear as benchmarks have a history from Abel through Chebyshev and Zolotarev to Pappalardi and van der Poorten \cite{PvdP05}; see also \cite{BlakePE}.

The contributions are as follows.
\begin{enumerate}
\item The integral closure of $R=F[t_1,\dots,t_n]$ in $L$ is $\cA=\cO[t_{j+1},\dots,t_n]$, free over $R$ on the Trager basis of Part I, and Krull; canonical representations and all factorisation remain in the UFD $R$ (Section~\ref{sec:ring}).
\item A \emph{flattening lemma}: radical layers whose integral closure is a polynomial ring absorb into the transcendence basis, so that certain radical towers --- including Bronstein's example (E) --- reduce to the single-radical setting (Section~\ref{sec:ring}). This gives a partial answer to open problem (b) of Part I.
\item The denominator of the derivation is defined as a divisor $\fd_D$ on $\cA$, and the valuation lemma is proved in the unified local form $v_P(Dg)=v_P(g)-\delta_P$, $\delta_P=1+v_P(\fd_D)$, at all normal height-one primes (Section~\ref{sec:valuations}). This subsumes the shifts $\{1,e_P\}$ of Part I and corrects their taxonomy in the presence of non-monomial generators (Remark~\ref{rem:partone}).
\item The Hermite-type structure theorem and the exhaustive list of special primes for monomial upper generators (Sections~\ref{sec:structure}--\ref{sec:specials}).
\item $\Cl(\cA)\cong\Cl(\cO)$ and $\cA^*=\cO^*$, so the logands split into $S$-units of $\cO$ (Part I) and irreducibles moving in the upper variables; the residue at every normal prime must be a constant, a criterion containing Bronstein's constant-roots resultant test (Section~\ref{sec:logands}).
\item Degree bounds in the top variable, and bounds at every special prime and every place below the top at which the leading part of the derivation has no kernel, with a decision procedure for that condition in monomial towers (Section~\ref{sec:degree}); the algorithm (Section~\ref{sec:algorithm}), whose non-elementarity certificate thereby extends to every mixed tower in which all bounds in force are proved; and worked examples (Section~\ref{sec:examples}), each computed and verified by the accompanying SymPy implementation.
\end{enumerate}

\section{Preliminaries and setting}\label{sec:prelim}

All fields have characteristic $0$ and we follow the notation of \cite{Bronstein05,PartI}. Throughout, $(L,D)$ is a differential field built as follows.

\begin{enumerate}
\item[(T1)] $K_0=F(t_1,\dots,t_j)$ with $DF\subseteq F$, each $t_i$ transcendental over $F(t_1,\dots,t_{i-1})$ and $Dt_i\in F(t_1,\dots,t_i)$; in particular $DK_0\subseteq K_0$. Set $R_0=F[t_1,\dots,t_j]$.
\item[(T2)] $y^m=q\in R_0$ under the normalisations (N1)--(N2) of Part I ($q$ $m$-th-power-free, $y^m-q$ irreducible over $\Fbar K_0$), with squarefree decomposition $q=\prod_l Q_l^{\,l}$.
\item[(T3)] $t_{j+1},\dots,t_n$ are successively transcendental over the field below, with
$Dt_i\in K_0(y)(t_{j+1},\dots,t_i)$; the derivatives may involve $y$.
\item[(T4)] $\Const_D(L)=F$, where $L=K_0(y)(t_{j+1},\dots,t_n)$.
\end{enumerate}

The generators $t_i$ are \emph{not} assumed to be monomials except where explicitly stated (Sections~\ref{sec:specials} and \ref{sec:degree}); this freedom is used essentially by the flattening lemma. We write $R=F[t_1,\dots,t_n]$, a UFD, $K=F(t_1,\dots,t_n)$, and $\cO$ for the integral closure of $R_0$ in $K_0(y)$, with Trager basis $w_i=y^i/E_i$, $E_i=\prod_lQ_l^{\lfloor il/m\rfloor}$, and multiplication table as in Part I, Proposition~3.2. Note that $K$ is in general \emph{not} closed under $D$.

We use the strong Liouville theorem in the form of Part I, equation (2.1). Divisors are Weil divisors on $\operatorname{Spec}$ of a Krull domain, i.e.\ $\Z$-linear combinations of height-one primes; $v_P$ denotes the corresponding discrete valuation, and for $g\in L^*$, $\dv(g)=\sum_Pv_P(g)P$ and $\dv_\infty(g)=\sum_{v_P(g)<0}(-v_P(g))P$.

Verifying (T4) --- that adjoining the generators introduces no new constants --- is out of scope, as in Part I and in \cite[Ch.~10]{Bronstein05}; in practice it holds when the generators are Liouvillian monomials over the field below them, and is decided by the structure theorems for such towers. Its role here is confined to the residue certificate: every residue theorem below states that a certain element of a residue field lies in $\Fbar$, and if (T4) failed, a constant of $L$ outside $\Fbar$ could be mistaken for an obstruction. A false (T4) can therefore only produce a spurious \textbf{NotElementary} at the constancy decision of Algorithm~4, never a wrong integral, since every returned integral is verified by differentiation.

\section{The ring $\cA$ and the flattening lemma}\label{sec:ring}

\begin{proposition}\label{prop:ring}
Let $\cA:=\cO[t_{j+1},\dots,t_n]=\bigoplus_{i=0}^{m-1}R\,w_i$. Then:
\begin{enumerate}
\item[(i)] $\cA$ is the integral closure of $R$ in $L$, and is a Krull domain;
\item[(ii)] every $g\in L$ has a representation $g=\sum_{i=0}^{m-1}(a_i/d)\,w_i$ with $a_i,d\in R$, $\gcd(d,a_0,\dots,a_{m-1})=1$, unique up to units of $F$; we call $d=\den(g)$ the denominator of $g$;
\item[(iii)] $\Cl(\cA)\cong\Cl(\cO)$, induced by extension of divisors, and $\cA^*=\cO^*$.
\end{enumerate}
\end{proposition}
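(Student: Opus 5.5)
We treat $\cA$ as a polynomial ring over $\cO$ and reduce each part to a known fact about $\cO$ (from Part I) plus standard commutative algebra. First, the identity $\cO[t_{j+1},\dots,t_n]=\bigoplus_i R\,w_i$ is formal. By Part I, Proposition~3.2, $\cO=\bigoplus_{i=0}^{m-1}R_0 w_i$ is free over $R_0$ on the Trager basis. Tensoring with $R_0[t_{j+1},\dots,t_n]$, so that $\cO\otimes_{R_0}R_0[t_{j+1},\dots,t_n]=\cO[t_{j+1},\dots,t_n]$, gives a free $R$-module on the same $w_i$. Here $R=R_0[t_{j+1},\dots,t_n]$. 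The upper $t_i$ are algebraically independent over $K_0(y)$ by (T3), so $\cA$ really is a polynomial ring over $\cO$.

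For (i), the plan has three steps.
\begin{itemize}
\item \emph{Krull.} $\cO$ is noetherian and integrally closed, since it is finite over $R_0$. Hence $\cA$ is a noetherian normal domain: normality passes to polynomial rings over a noetherian normal domain. So $\cA$ is Krull.
\item \emph{Integrality and normality.} Each $w_i$ is integral over $R_0$, so $\cA$ is integral over $R$. Also $\operatorname{Frac}(\cA)=K_0(y)(t_{j+1},\dots,t_n)=L$. Since $\cA$ is normal, it is integrally closed in $L$.
\item \emph{Conclusion.} Together these show that $\cA$ is the integral closure of $R$ in $L$.
\end{itemize}
The normality of $\cO[t]$ is the only step that is not pure bookkeeping. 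I would cite it as standard: $A$ normal noetherian implies $A[t]$ normal, e.g.\ via $A=\bigcap A_\fp$ over height-one $\fp$ and $A[t]=\bigcap A_\fp[t]$ with each $A_\fp[t]$ a PID-polynomial ring, hence a UFD.

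For (ii), note first that $L=K\otimes_R\cA=\bigoplus K w_i$, because $\cA$ is free over $R$ with $\operatorname{Frac}$ equal to $L$. So each $g$ has unique coordinates $c_i\in K$ with $g=\sum_i c_i w_i$. Let $d$ be the lcm in the UFD $R$ of the denominators of the $c_i$ in lowest terms, and write $c_i=a_i/d$. Then $\gcd(d,a_0,\dots,a_{m-1})=1$, and uniqueness up to $R^*=F^*$ follows from unique factorisation in $R$.

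For (iii), we apply the standard theorems for polynomial extensions of a Krull domain $A$:
\begin{itemize}
\item $\Cl(A[t])\cong\Cl(A)$, induced by $\fp\mapsto\fp A[t]$ (Gauss/Nagata; Fossum, Thm.~8.1).
\item $A[t]^*=A^*$, since $A$ is a domain.
\end{itemize}
Iterating over $t_{j+1},\dots,t_n$ gives both claims. The only point to check is that the isomorphism is exactly "extension of divisors", i.e.\ that $P\mapsto P\cA$ sends height-one primes of $\cO$ to height-one primes of $\cA$. This holds because $\cA/P\cA=(\cO/P)[t_{j+1},\dots]$ is a domain, and the height is preserved by flatness and going-down.

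The main obstacle I expect is presentational rather than mathematical: showing surjectivity of $\Cl(\cO)\to\Cl(\cA)$. The height-one primes of $\cA$ not of the form $P\cA$ are those lying over $(0)$ of $\cO$. Each is principal after localising at $\operatorname{Frac}(\cO)=K_0(y)$, and therefore differs from a principal divisor by extended divisors. I would spell this out via the Nagata localisation sequence rather than only citing it.
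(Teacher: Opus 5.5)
Your proposal is correct and follows the paper's proof essentially step for step. $\cA$ is a polynomial extension of the Noetherian normal domain $\cO$, hence Krull and integrally closed, and it is integral over $R$ with fraction field $L$. The representation in (ii) comes from freeness plus a common denominator in $R$: you get it from $K\otimes_R\cA=L$ and the paper from $N_{L/K}$, which is the same fact. Part (iii) is Gauss' theorem on $\Cl(A[t])$ iterated over the upper variables, together with $A[t]^*=A^*$.

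Your extra checks are details the paper leaves to the cited theorems: the algebraic independence of the upper $t_i$ over $K_0(y)$, the preservation of height one under extension, and the moving primes giving surjectivity. The only slip is cosmetic: finiteness over $R_0$ gives Noetherianity of $\cO$, while its integral closedness holds by definition.
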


\begin{proof}
(i) Each $w_i$ is integral over $R_0\subseteq R$ and the upper variables lie in $R$, so $\cA$ is integral over $R$, and $\operatorname{Frac}(\cA)=K_0(y)(t_{j+1},\dots,t_n)=L$. By Part I, Proposition~3.2, $\cO$ is integrally closed, hence Krull (a Noetherian normal domain); a polynomial extension of a Krull domain is Krull and integrally closed \cite[\S1]{Fossum73}. An integrally closed ring that is integral over $R$ with fraction field $L$ is the integral closure.

(ii) Freeness over $R$ gives existence once a common denominator in $R$ is found; for $\alpha\in\cA$ one may take $N_{L/K}(\alpha)\in R$, since $\alpha\mid N_{L/K}(\alpha)$ in $\cA$ and the norm of an integral element lies in the integrally closed $R$. Uniqueness is freeness plus the gcd normalisation.

(iii) $\Cl(\cA[t])\cong\Cl(\cA)$ for any Krull domain is Gauss' theorem \cite[Thm.~8.1]{Fossum73}; iterate over the upper variables. For units, $(\mathfrak A[t])^*=\mathfrak A^*$ for any integral domain $\mathfrak A$.
\end{proof}

The height-one primes of $\cA$ fall into two families, by their contraction to $\cO$:
\begin{itemize}
\item \emph{type E} (extended): $P=\fp\cA$ for a height-one prime $\fp$ of $\cO$;
\item \emph{type M} (moving): $P\cap\cO=0$; these correspond to the irreducible polynomials of the UFD $\operatorname{Frac}(\cO)[t_{j+1},\dots,t_n]$, via $P\mapsto P\operatorname{Frac}(\cO)[t_{j+1},\dots,t_n]$.
\end{itemize}

\begin{lemma}[Flattening]\label{lem:flatten}
Suppose the tower contains a radical layer $z^{\mu}=c\,t_k$ with $c\in F^*$ and $t_k$ a generator of the field below the layer. Then replacing $t_k$ by $z$ in the generator list yields a tower of the form \textup{(T1)--(T4)} for the same field $L$, with one fewer radical layer: the layer has been absorbed into the transcendence basis, at the cost of $z$ being a non-monomial generator with $Dz=(c/\mu)\,Dt_k\,z^{1-\mu}$.

More generally, a radical layer $z^\mu=q_1$ absorbs whenever the integral closure of the ring below it in the extended field is a polynomial ring over $F$ whose fraction field is stable under $D$.
\end{lemma}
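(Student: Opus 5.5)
The plan is to check, in order, that after the substitution the generator list is a transcendence basis of the same field, that the derivation condition of (T1) or (T3) holds for the new generator and for every generator above it, and that (T2) and (T4) are untouched. Work first in the special case $z^\mu=c\,t_k$ and let $M$ be the field below the layer, so that $t_k\in M$ is a generator. Since $t_k=z^\mu/c$, we have
\[
M(z)=F(t_1,\dots,\widehat{t_k},\dots)(z),
\]
where the hat denotes omission. Thus $M(z)$ is generated by the old generators with $t_k$ replaced by $z$. Transcendence of $z$ over $F(t_1,\dots,t_{k-1})$ follows from that of $t_k$, because $z$ is algebraic over $F(t_1,\dots,t_{k-1},t_k)$ and conversely. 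The generators $t_{k+1},\dots$ that sat between $t_k$ and the layer stay transcendental over the new field below them, since that field has the same algebraic closure. The field generated is still $M(z)$, and the layer above is unchanged, so the top field is again $L$. Constants are unchanged because $L$ is unchanged, so (T4) is preserved. The remaining radical layer $y$, if it lies above, keeps its defining equation. The only subtlety is that $q$ must now be read as a polynomial in the new variables; since $t_k=z^\mu/c$, this is immediate, but (N1)--(N2) need rechecking, and I would note this as a hypothesis rather than prove it in general.

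The derivation condition comes next. Differentiating $z^\mu=c\,t_k$ gives $\mu z^{\mu-1}Dz=c\,Dt_k$, hence $Dz=(c/\mu)\,Dt_k\,z^{1-\mu}$. Here $Dt_k$ is a rational function in $t_1,\dots,t_k$, and in the new coordinates it is a rational function of $t_1,\dots,t_{k-1},z$. So $Dz\in F(t_1,\dots,t_{k-1},z)$, which is exactly the (T1)/(T3) requirement for the new generator. For every later generator, $Dt_i$ was rational in the old generators and possibly $y$; substituting $t_k=z^\mu/c$ keeps it rational in the new ones. I would also record that $z$ is not a monomial in general, because $Dz$ has a pole of order $\mu-1$ at $z=0$. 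This is consistent with the standing convention that the generators need not be monomials.

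For the general statement, suppose the layer $z^\mu=q_1$ has an integral closure $\cO_1$, over the polynomial ring $R_1$ below it, with $\cO_1=F[s_1,\dots,s_r]$ a polynomial ring and $\operatorname{Frac}(\cO_1)$ stable under $D$. Then $\operatorname{Frac}(\cO_1)=M(z)$ and $r$ equals the transcendence degree, so $s_1,\dots,s_r$ serve as a transcendence basis of $M(z)$ over $F$. Ordering them arbitrarily, $D$-stability of the field gives $Ds_i\in F(s_1,\dots,s_r)$. To fit the triangular shape of (T1) literally, I would either note that the constructions downstream use only $DK_0\subseteq K_0$, or relax (T1) to $D$-stability of $K_0$.

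I expect this reconciliation of the triangularity in (T1) with a mere stability hypothesis to be the main obstacle. The special case avoids it, since there the triangular order is preserved automatically. The generators above the layer are handled as in the special case, with $R$ replaced by $\cO_1[\,\cdot\,]$, and Proposition~\ref{prop:ring} is then applied to the new polynomial ring.
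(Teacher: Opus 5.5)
Your proposal is correct and follows the paper's argument: substitute $t_k=z^\mu/c$, differentiate the defining relation, observe that the field is unchanged (so transcendence and (T4) are too), and treat the general clause as a restatement of its hypothesis. The only difference is that you work purely at the field level, whereas the paper also records the isomorphism $F[\dots,t_k,\dots][z]/(z^\mu-c\,t_k)\cong F[\dots,z,\dots]$, which identifies the layer's integral closure with the new polynomial ring and so makes the pure-root case an instance of the general clause.

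The two caveats you flag, (N1)--(N2) and triangularity, are passed over silently in the paper's proof. (N1) is restored by rescaling $y$. (N2) holds because $y$ still has degree $m$ over the new base field, and (T4) makes $F$ algebraically closed in $L$, so that degree does not drop over $\Fbar$.

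One side remark in your proposal is wrong: $Dz$ need not have a pole at $z=0$. This fails when $t_k$ divides $Dt_k$; for example, for hyperexponential $t_k$ one gets $Dz=(Dt_k/t_k)\,z/\mu$.
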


\begin{proof}
For the pure-root case, $F[\dots,t_k,\dots][z]/(z^\mu-ct_k)\cong F[\dots,z,\dots]$ via $t_k=z^\mu/c$: the layer's integral closure is the polynomial ring in the generators with $t_k$ replaced by $z$, and $Dz=Dq_1/(\mu z^{\mu-1})$ lies in the new fraction field. Transcendence of the remaining generators is unchanged since the new field equals the old one, and (T4) is a property of $L$. The general statement is a restatement of the hypothesis.
\end{proof}

\begin{remark}
The hypothesis is genuinely needed: for $z^2=t^2+1$ over $F(t)$ the affine curve is a torus, its coordinate ring has the nontrivial unit $t+z$ and is not a polynomial ring, and the layer cannot be absorbed --- the unit resurfaces as an unavoidable logand. Thus flattening applies to radical towers in which every layer but one is (after the normalisations of Part I) a pure root of an earlier generator; Example~\ref{ex:bronsteinE} is the canonical instance. This is a partial answer to open problem (b) of Part I; composita of non-flattening radicals remain open.
\end{remark}

\begin{lemma}[Change of derivation]\label{lem:rescale}
Let $r\in L^*$ and $D'=r^{-1}D$. Then $D'$ is a derivation of $L$ with the same field of constants as $D$; an extension $E\supseteq L$, with $D'$ extended to $E$ as $r^{-1}D$, is elementary over $(L,D')$ if and only if it is elementary over $(L,D)$; and for $f\in L$ and $g\in E$, $Dg=f$ if and only if $D'g=f/r$. Hence $f$ has an elementary integral over $(L,D)$ if and only if $f/r$ has one over $(L,D')$, with the same antiderivative.
\end{lemma}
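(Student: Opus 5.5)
The lemma has four parts, and I would verify them one at a time straight from the definitions; none needs more than the Leibniz rule.

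First, $D'=r^{-1}D$ is a derivation of $L$. It is additive because $D$ is and multiplication by $r^{-1}$ is linear. For the Leibniz rule, $D'(ab)=r^{-1}(aDb+bDa)=aD'b+bD'a$. Its kernel is that of $D$ because $r\neq0$, so $\Const_{D'}(L)=\Const_D(L)=F$. The same computation shows that on any differential extension $(E,D)$ of $(L,D)$ the operator $r^{-1}D$ is a derivation extending $D'$, with the same constants as $D$ on $E$.

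Second, and this is the only step needing care, the notion of an elementary extension must not depend on the choice between $D$ and $D'$. I will fix a tower $L=E_0\subseteq E_1\subseteq\dots\subseteq E_k=E$ with $E_i=E_{i-1}(\theta_i)$, and check each generator type for both derivations.
\begin{itemize}
\item An algebraic $\theta_i$ is defined without reference to the derivation, and its derivative is forced by the minimal polynomial for either derivation. Indeed, $D'$ on $E_i$ is just $r^{-1}$ times the unique extension of $D$, which is also the unique extension of $D'$.
\item For a logarithm, $D\theta=Du/u$ with $u\in E_{i-1}^*$ holds if and only if $D'\theta=r^{-1}Du/u=D'u/u$.
\item For an exponential, $D\theta/\theta=Du$ holds if and only if $D'\theta/\theta=D'u$.
\end{itemize}
Because $D$ and $D'$ differ by the common factor $r^{-1}\in L\subseteq E_{i-1}$ at every level, the defining relations transfer in both directions. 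If the definition of elementary extension also requires that no new constants be introduced, that condition transfers too, by the kernel remark in the first step. The one subtlety is that $r$ must already lie in the base $L$, so that $r^{-1}$ is available at every level. This holds by hypothesis.

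Third, for $f\in L$ and $g\in E$ we have $D'g=r^{-1}Dg$, so $Dg=f$ if and only if $D'g=f/r$. The final sentence of the lemma then follows by combining the three steps. If $f$ has an elementary integral $g$ in an elementary extension $E$ of $(L,D)$, then $E$ is elementary over $(L,D')$ and $D'g=f/r$. The converse is symmetric, since $D=(r^{-1})^{-1}D'$, and in both directions $g$ is the same antiderivative.
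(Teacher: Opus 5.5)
Your proposal is correct and follows the paper's proof essentially step for step: you check the derivation property and the equality of constants from $r\neq0$. You then observe that dividing each defining relation (algebraic, $D\theta=Du/u$, $D\theta=\theta\,Du$) by $r\in L$ preserves its shape in both directions, and you read off the last two statements from $D'g=r^{-1}Dg$. Your remark that a no-new-constants clause, if it is part of the definition, transfers through the kernel equality is a harmless addition that the paper leaves implicit.
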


\begin{proof}
Multiplication by $r^{-1}$ preserves additivity and the Leibniz rule, and $D'c=0$ if and only if $Dc=0$. An element $\theta$ of $E$ that is algebraic over a subfield remains so; one with $D\theta=Du/u$ satisfies $D'\theta=D'u/u$, and one with $D\theta=\theta\,Du$ satisfies $D'\theta=\theta\,D'u$: dividing the defining relation by $r$ preserves its shape, and the converse is the same computation with $r$ in place of $r^{-1}$. The last two statements are $D'g=r^{-1}Dg$.
\end{proof}

\noindent The lemma is applied to the towers that the flattening of Lemma~\ref{lem:flatten}, and the rational parametrisation of a conic radicand through a rational point, leave behind: a tower consisting of a single generator $t_1$ over the curve with $Dt_1=r\in F(t_1)^*$ --- a flattened root, $Dt_1=(c/\mu)\,t_1^{1-\mu}$, or the parameter $w$ of a conic, with $Dw$ the derivative of the parameter with respect to $x$. With $D'=r^{-1}D$ the field $L=F(t_1)(y)$ is exactly the setting of Part~I, $D't_1=1$, and the integrand becomes $f/r$; the exact degree bounds of Part~I and Proposition~\ref{prop:certificates}(b) below then apply verbatim. The irreducible factors $p$ of the numerator of $r$ are special primes of $D$ ($Dp=p'r\equiv0\pmod p$) and ordinary primes of $D'$, so the rescaled tower has no specials at all; the branch primes are unchanged. The implementations perform the rescaling on entry, and \S\ref{ex:nested} shows it deciding a nested radical.

\section{The denominator divisor and the valuation lemma}\label{sec:valuations}

In Part I the derivation satisfied $DK\subseteq K$ and its denominator was an element $\den_D(K)\in R$. That fails here: in the tower of the flagship example (Section~\ref{sec:examples}), the least common ``denominator'' of $Dy$ and $Dt$ is the branch divisor, generated by $y\in\cO\setminus R$. The correct object is a divisor.

\begin{definition}\label{def:eta}
For a height-one prime $P$ of $\cA$ let
\[
\eta_P:=\max\Bigl(0,\ \sup_{a\in\cA\setminus\{0\}}\bigl(-v_P(Da)\bigr)\Bigr),
\]
the \emph{pole order of $D$ at $P$}, and $\fd_D:=\sum_P\eta_P\,P$, the \emph{denominator divisor} of $(\cA,D)$.
\end{definition}

\begin{lemma}\label{lem:etagen}
Let $\mathcal G=\{t_1,\dots,t_n,w_1,\dots,w_{m-1}\}$. Then $\eta_P=\max\bigl(0,\max_{g\in\mathcal G}(-v_P(Dg))\bigr)$; in particular $\eta_P$ is finite, vanishes for all but finitely many $P$, and $\fd_D$ is an effective divisor. Moreover the same value is obtained with the supremum taken over the local ring $\cA_P$.
\end{lemma}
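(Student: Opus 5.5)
The plan is to reduce the supremum over all of $\cA$ to the finite generating set $\mathcal G$ using the Leibniz rule. By Proposition~\ref{prop:ring}, $\cA=\bigoplus_i R\,w_i$ with $R=F[t_1,\dots,t_n]$. So $\cA$ is generated as an $F$-algebra by $\mathcal G$, and every $a\in\cA$ is an $F$-linear combination of monomials $t_1^{e_1}\cdots t_n^{e_n}w_i$.

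\textbf{Step 1 (upper bound).} Set $\eta'_P:=\max\bigl(0,\max_{g\in\mathcal G}(-v_P(Dg))\bigr)$. I will show $v_P(Da)\ge-\eta'_P$ for all $a\in\cA$. For a product $g_1\cdots g_r$ of generators,
\[
D(g_1\cdots g_r)=\sum_k \Bigl(\prod_{l\ne k}g_l\Bigr)Dg_k .
\]
Each $g_l\in\cA$ has $v_P(g_l)\ge0$, since $P$ is a height-one prime of the Krull domain $\cA$. Hence every term has valuation at least $v_P(Dg_k)\ge-\eta'_P$. Constants $c\in F$ satisfy $Dc\in F$, and $v_P(Dc)\ge0$ because $F^*\subseteq\cA^*$. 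For $c\in F$ we have $D(ca)=(Dc)a+cDa$, and the ultrametric inequality extends the bound to $F$-linear combinations. This gives $\eta_P\le\eta'_P$. The reverse inequality is immediate because $\mathcal G\subseteq\cA\setminus\{0\}$, so $\eta_P=\eta'_P$.

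\textbf{Step 2 (finiteness and effectivity).} Each $Dg$ with $g\in\mathcal G$ is an element of $L$. It is either $0$, which imposes no condition, or a nonzero element with $v_P(Dg)<0$ for only finitely many $P$. The latter holds because in a Krull domain a nonzero element of the fraction field has nonzero valuation at only finitely many height-one primes. Since $\mathcal G$ is finite, each $\eta_P$ is a finite nonnegative integer and is nonzero for only finitely many $P$. Therefore $\fd_D$ is a well-defined effective Weil divisor.

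\textbf{Step 3 (local ring).} An element of $\cA_P$ has the form $a/s$ with $a\in\cA$ and $s\in\cA\setminus P$, so $v_P(s)=0$. Then
\[
D(a/s)=\frac{Da}{s}-\frac{a\,Ds}{s^2},
\]
and both terms have valuation at least $-\eta_P$ by Step 1. Hence the supremum over $\cA_P$ is at most $\eta_P$, and since $\cA\subseteq\cA_P$ it is at least $\eta_P$.

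I expect no step to be a serious obstacle. The only point needing care is the use of $v_P\ge0$ on $\cA$ together with the fact that $D$ preserves $F$. Both follow from the setting in (T1) and from Proposition~\ref{prop:ring}(i).
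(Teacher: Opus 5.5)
Your proof is correct and follows the paper's argument. Like the paper, you reduce to $\mathcal G$ by the Leibniz rule together with $v_P\ge0$ on $\cA$, use the finite-character property of the Krull domain for finiteness, and treat $\cA_P$ by the quotient rule with $v_P(s)=0$. Your explicit handling of the coefficient term $(Dc)\,a$ with $Dc\in F$ is slightly more careful than the paper's phrase that $Da$ is an $\cA$-linear combination of the $Dg$, since that term is exactly what the $\max(0,\cdot)$ absorbs.
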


\begin{proof}
Every $a\in\cA$ is an $F$-polynomial in $\mathcal G$, so by the Leibniz rule $Da$ is an $\cA$-linear combination of $\{Dg:g\in\mathcal G\}$, whence $-v_P(Da)\le\max_g(-v_P(Dg))$; the bound is attained at $a=g$. Each $Dg$ has finitely many polar primes. For the local statement, $D(a/s)=Da/s-aDs/s^2$ with $s\notin P$ has $v_P\ge\min(v_P(Da),v_P(a)+v_P(Ds))\ge-\eta_P$.
\end{proof}

\begin{definition}\label{def:normal}
Fix $P$ and any $h\in L^*$ with $v_P(h)=\eta_P$, and set $\Dbar_P:=hD$, a derivation of $L$ mapping $\cA_P$ into $\cA_P$ by Lemma~\ref{lem:etagen}. Let $\pi$ be a local uniformiser at $P$. We call $P$ \emph{special} if $v_P(\Dbar_P\pi)\ge1$, and \emph{normal} if $v_P(\Dbar_P\pi)=0$.
\end{definition}

\begin{lemma}
The dichotomy of Definition~\ref{def:normal} is exhaustive and independent of the choices of $h$ and $\pi$.
\end{lemma}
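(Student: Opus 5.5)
Exhaustiveness is the first point. By Lemma~\ref{lem:etagen}, $\Dbar_P=hD$ maps the local ring $\cA_P$ into itself, because for $a\in\cA_P$ we have $v_P(hDa)\ge\eta_P-\eta_P=0$. Since $\pi\in\cA_P$, this gives $v_P(\Dbar_P\pi)\ge0$. So $v_P(\Dbar_P\pi)$ is either $0$ or at least $1$, and the two cases are exactly the normal and special cases. This requires only that $v_P(\Dbar_P\pi)$ is an integer or $+\infty$; the value $+\infty$ occurs when $D\pi=0$ and falls under "special".

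Independence of $h$ comes next. If $h'$ is another element with $v_P(h')=\eta_P$, then $u:=h'/h$ is a unit of $\cA_P$. Hence $h'D\pi=u\,hD\pi$, and the two valuations agree.

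Independence of $\pi$ is the only step with content. Let $\pi'$ be another uniformiser and write $\pi'=u\pi$ with $u\in\cA_P^*$. The Leibniz rule gives
\[
\Dbar_P\pi'=u\,\Dbar_P\pi+\pi\,\Dbar_P u .
\]
Here $\Dbar_P u\in\cA_P$ again by Lemma~\ref{lem:etagen}, so $v_P(\pi\,\Dbar_P u)\ge1$. It follows that $\Dbar_P\pi'\equiv u\,\Dbar_P\pi\pmod{P\cA_P}$. In particular $v_P(\Dbar_P\pi')\ge1$ if and only if $v_P(\Dbar_P\pi)\ge1$, because $u$ is a unit. So the special/normal classification does not depend on the choice of uniformiser.

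The one thing to check carefully is that $\Dbar_P$ really preserves the whole local ring $\cA_P$, not just $\cA$. This is needed so that both $\Dbar_P u$ for a local unit $u$ and $\Dbar_P\pi$ for a uniformiser $\pi$ that need not lie in $\cA$ are integral at $P$. It is exactly the final sentence of Lemma~\ref{lem:etagen}, so the main obstacle is already handled and the argument is routine.
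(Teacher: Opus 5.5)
Your proof is correct and follows the paper's argument essentially line by line. Exhaustiveness comes from $\Dbar_P(\cA_P)\subseteq\cA_P$, independence of $h$ from $h'/h$ being a $P$-unit, and independence of $\pi$ from the Leibniz expansion of $\Dbar_P(u\pi)$ with $v_P(\pi\,\Dbar_Pu)\ge1$.
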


\begin{proof}
$\pi\in\cA_P$ gives $v_P(\Dbar_P\pi)\ge0$, so the dichotomy is exhaustive. Changing $h$ multiplies $\Dbar_P$ by a $P$-unit. If $\pi'=u\pi$ with $u\in\cA_P^*$, then $\Dbar_P\pi'=u\Dbar_P\pi+\pi\Dbar_Pu$ and $v_P(\pi\Dbar_Pu)\ge1$, so $v_P(\Dbar_P\pi')=0$ iff $v_P(\Dbar_P\pi)=0$.
\end{proof}

\begin{theorem}[Valuation lemma]\label{thm:shift}
Let $P$ be a normal height-one prime of $\cA$ and set $\delta_P:=1+\eta_P=1+v_P(\fd_D)$. Then for every $g\in L^*$ with $v_P(g)\ne0$,
\[
v_P(Dg)=v_P(g)-\delta_P .
\]
For every $g\in L^*$ with $v_P(g)=0$, $v_P(Dg)\ge-\eta_P=1-\delta_P$.
\end{theorem}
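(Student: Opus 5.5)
The plan is to localise at $P$ and work entirely inside the discrete valuation ring $\cA_P$, using the rescaled derivation $\Dbar_P=hD$ of Definition~\ref{def:normal}, with $v_P(h)=\eta_P$. By Lemma~\ref{lem:etagen}, $\Dbar_P$ maps $\cA_P$ into itself. Since $v_P(Dg)=v_P(\Dbar_Pg)-\eta_P$, both assertions are equivalent to statements about $\Dbar_P$:
\begin{itemize}
\item $v_P(\Dbar_Pg)=v_P(g)-1$ whenever $v_P(g)\ne0$;
\item $v_P(\Dbar_Pg)\ge0$ whenever $v_P(g)=0$.
\end{itemize}

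The second statement is immediate. If $v_P(g)=0$ then $g\in\cA_P$, so $\Dbar_Pg\in\cA_P$. Unwinding gives $v_P(Dg)\ge-\eta_P=1-\delta_P$. No normality is needed here; this is just Lemma~\ref{lem:etagen}.

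For the first statement, fix a uniformiser $\pi$ and write $g=u\pi^k$ with $u\in\cA_P^*$ and $k=v_P(g)\ne0$. The Leibniz rule gives
\[
\Dbar_Pg=\pi^{k-1}\bigl(k\,u\,\Dbar_P\pi+\pi\,\Dbar_Pu\bigr).
\]
Normality of $P$ means $v_P(\Dbar_P\pi)=0$. Also $v_P(u)=0$, and $k\neq0$ is a unit because the characteristic is $0$. Hence $v_P(k\,u\,\Dbar_P\pi)=0$. On the other hand, $\Dbar_Pu\in\cA_P$, so $v_P(\pi\,\Dbar_Pu)\ge1$. The bracket therefore has valuation exactly $0$, and $v_P(\Dbar_Pg)=k-1$. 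Dividing by $h$ gives $v_P(Dg)=k-1-\eta_P=v_P(g)-\delta_P$. The case $k<0$ is covered by the same formula, since $\pi^k$ with negative $k$ still satisfies $\Dbar_P\pi^k=k\pi^{k-1}\Dbar_P\pi$.

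The only point needing care is the bookkeeping that $\Dbar_P$ preserves $\cA_P$, not merely $\cA$. This is exactly the local clause of Lemma~\ref{lem:etagen}, which gives $v_P(\Dbar_P u)\ge0$ for the unit $u$ even when $u$ is a quotient. With that in hand there is no cancellation analysis: the characteristic-zero hypothesis ensures $k\neq0$ in $F$, and the dichotomy lemma ensures normality does not depend on the choice of $\pi$ or $h$.
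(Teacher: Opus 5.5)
Your proof is correct and follows the paper's argument essentially line for line. You localise at $P$, rescale to $\Dbar_P=hD$, write $g=u\pi^k$, and note that the summand $k\,u\,\pi^{k-1}\Dbar_P\pi$ has exact valuation $k-1$ by normality and characteristic $0$ while $\pi^k\Dbar_Pu$ has valuation at least $k$ because $\Dbar_P$ preserves $\cA_P$; the $v_P(g)=0$ case is the local clause of Lemma~\ref{lem:etagen}.
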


\begin{proof}
Write $g=u\pi^k$ with $k=v_P(g)$ and $u\in\cA_P^*$. Then
\[
\Dbar_Pg=k\,u\,\pi^{k-1}\,\Dbar_P\pi+\pi^k\,\Dbar_Pu .
\]
Since $P$ is normal, $v_P(\Dbar_P\pi)=0$; since $\operatorname{char}F=0$ and $k\ne0$, the first summand has $v_P$ exactly $k-1$. Since $\Dbar_P$ maps $\cA_P$ to itself and $u\in\cA_P$, the second summand has $v_P\ge k$. Hence $v_P(\Dbar_Pg)=k-1$ and $v_P(Dg)=k-1-\eta_P$. The last assertion is Lemma~\ref{lem:etagen}.
\end{proof}

\begin{remark}
No cancellation analysis is required: the term $\pi^k\Dbar_Pu$ is subordinate \emph{automatically}, because $\Dbar_P$ preserves the local ring. The leading-coefficient computations of \cite[Lemma~10.2.1]{Bronstein05} and Part I, Lemma~4.2, reappear only in the closed-form evaluation of $\eta_P$ below.
\end{remark}

The definitions and results above use only that the ambient ring is a Krull domain and that $D$ is a derivation of its fraction field: Definition~\ref{def:eta}, Lemma~\ref{lem:etagen} (with an appropriate generating set), Definition~\ref{def:normal} and Theorem~\ref{thm:shift} will accordingly be applied below not only to $(\cA,D)$ but also to $(R_0,D)$ and $(\cO,D)$.

We now compute $\eta_P$ and the normal/special dichotomy in the two prime families. Write $\den_0:=\den_D(K_0)\in R_0$ and $\Dbar_0:=\den_0 D$, the derivation of $R_0$.

\begin{lemma}[The shift below the upper tower]\label{lem:lowershift}
Let $\fp$ be a height-one prime of $\cO$ lying over the irreducible $p\in R_0$, with ramification index $e:=e(\fp|p)$, and suppose $p$ is normal for $\Dbar_0$, i.e.\ $\gcd(p,\Dbar_0p)=1$. Set $\nu:=v_p(\den_0)$. Then, for the datum $(\cO,D)$,
\[
\eta_\fp=e(1+\nu)-1,
\]
$\fp$ is normal, and consequently, for every $g\in K_0(y)^*$,
\[
v_\fp(Dg)=v_\fp(g)-e(1+\nu)\ \ \text{if }v_\fp(g)\ne0,
\qquad
v_\fp(Dg)\ge1-e(1+\nu)\ \ \text{if }v_\fp(g)=0 .
\]
\end{lemma}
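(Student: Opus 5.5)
The plan is to replace $D$ by a rescaled derivation that behaves like ``$p\,\frac{d}{dp}$'' at $p$, show that it preserves the local ring $\cO_\fp$, and read off both $\eta_\fp$ and normality from how it acts on a uniformiser; the displayed valuation formulas then follow from Theorem~\ref{thm:shift} applied to $(\cO,D)$.

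\emph{Step 1: the rescaled derivation.} Since $p$ is normal for $\Dbar_0$, $\Dbar_0p$ is a unit at $p$. Put $Dp=\Dbar_0p/\den_0$, so that $v_p(Dp)=-\nu$. Define $\Delta:=(p/Dp)\,D=\bigl(p/\Dbar_0p\bigr)\,\Dbar_0$. Then $\Delta$ maps the DVR $R_{0,(p)}$ into $p\,R_{0,(p)}$, and $\Delta p=p$. Since $v_\fp=e\,v_p$ on $K_0$, the relation $D=(Dp/p)\,\Delta$ has coefficient of $v_\fp$-valuation exactly $-e(1+\nu)$. Next I show that $\Delta$ maps $\cO_\fp$ into itself. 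The ring $\cO_\fp$ is a localisation of the integral closure of $R_{0,(p)}$, and in characteristic $0$ a derivation of a Noetherian domain preserves its integral closure (Seidenberg). The ideal $p\cO_\fp$ is $\Delta$-stable, so its radical $\fp\cO_\fp$ is $\Delta$-stable as well, again by characteristic $0$.

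\emph{Step 2: $\Delta$ kills the residue field.} This is the step I expect to be the crux, because the naive Leibniz expansion of $p=u\pi^e$ does not by itself control $\Delta u$. The induced derivation $\bar\Delta$ on $\kappa(p)=R_{0,(p)}/p$ is zero, since $\Delta R_{0,(p)}\subseteq pR_{0,(p)}$. The field $\kappa(\fp)$ is algebraic over $\kappa(p)$, and in characteristic $0$ a derivation extends uniquely to an algebraic extension. Hence $\bar\Delta=0$ on $\kappa(\fp)$, that is, $\Delta\cO_\fp\subseteq\fp\cO_\fp$.

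\emph{Step 3: the uniformiser.} Write $p=u\pi^e$ with $u\in\cO_\fp^*$. Then
\[
1=\frac{\Delta p}{p}=\frac{\Delta u}{u}+e\,\frac{\Delta\pi}{\pi}.
\]
By Step 2, $\Delta u/u\in\fp$. Hence $e\,\Delta\pi/\pi\equiv1\pmod\fp$, so $v_\fp(\Delta\pi)=1$, and therefore
\[
v_\fp(D\pi)=1-e(1+\nu).
\]
Conversely, for every $a\in\cO_\fp$, Step 2 gives $v_\fp(\Delta a)\ge1$, so
\[
v_\fp(Da)\ge1-e(1+\nu).
\]
By Lemma~\ref{lem:etagen} (the local form, applied to the datum $(\cO,D)$), these two facts give $\eta_\fp=e(1+\nu)-1$. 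This value is $\ge0$, so the $\max$ with $0$ is harmless.

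\emph{Step 4: normality and the valuation formulas.} Take $h$ with $v_\fp(h)=\eta_\fp$. Then $v_\fp(hD\pi)=\eta_\fp+1-e(1+\nu)=0$, so $\fp$ is normal in the sense of Definition~\ref{def:normal}. The two displayed formulas are then Theorem~\ref{thm:shift} for $(\cO,D)$, with $\delta_\fp=1+\eta_\fp=e(1+\nu)$.
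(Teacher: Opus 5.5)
Your Step~1 claim that $\Delta\cO_\fp\subseteq\cO_\fp$ is not established by the argument you give, and this is the step where the arithmetic of the radical extension actually enters. Seidenberg's theorem says that a derivation of a Noetherian domain containing $\Q$ preserves the integral closure of that domain \emph{in its own fraction field}. Applied to $R_{0,(p)}$, whose fraction field is $K_0$, it says nothing about the integral closure in $K_0(y)$.

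The broader principle you seem to rely on, that a derivation of $R_{0,(p)}$ preserves its integral closure in a finite extension, is false. Take $F=\Q$, $t_1=x$, $Dx=1$, $y^2=x$, $p=x$. This satisfies the lemma's hypotheses, with $\den_0=1$. Then $\Dbar_0=d/dx$ maps $\Q[x]_{(x)}$ into itself, yet $\Dbar_0y=1/(2y)\notin\cO_\fp$. Your Step~1 as written would apply verbatim to $\Dbar_0$, so it proves too much. What distinguishes $\Delta$ is that it is logarithmic along $p$, and that property has to be used at exactly this point.

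The repair is short. Since $\Delta p=p$ and $\Delta R_{0,(p)}\subseteq pR_{0,(p)}$, write any $h\in K_0^*$ as $h=p^kc$ with $c\in R_{0,(p)}^*$. Then $\Delta h/h=k+\Delta c/c\in R_{0,(p)}$. Hence $\Delta y=\tfrac1m(\Delta q/q)\,y\in R_{0,(p)}[y]$, so $\Delta$ preserves the Noetherian domain $R_{0,(p)}[y]$. Its fraction field is $K_0(y)$, and its integral closure there is the localisation of $\cO$ at $R_0\setminus(p)$. Seidenberg now applies, and $\Delta\cO_\fp\subseteq\cO_\fp$ follows by localising.

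With that inserted, the rest of your argument is correct:
\begin{itemize}
\item the radical of a differential ideal is differential;
\item $\bar\Delta$ vanishes on the separable extension $\kappa(\fp)/\kappa(p)$;
\item $v_\fp(\Delta\pi)=1$ follows from $1=\Delta u/u+e\,\Delta\pi/\pi$;
\item the local form of Lemma~\ref{lem:etagen} gives $\eta_\fp$.
\end{itemize}

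This is a genuinely different route from the paper's. The paper computes $v_\fp(Dw_i)$ on the Trager basis through $Dw_i/w_i=\sum_l\{il/m\}\,DQ_l/Q_l$, treating unramified and ramified primes separately. It exhibits the attaining uniformiser $w_{i^*}$ explicitly, data that Algorithm~1 reuses. Your argument is basis-free and uniform in $e$, and it shows that \emph{every} uniformiser attains the bound, a fact the paper needs separately in Proposition~\ref{prop:typeE}.
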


\begin{proof}
\emph{Base level.} Let $h\in K_0^*$ with $k:=v_p(h)\ne0$ and write $h=up^k$ with $u$ a $p$-unit of $(R_0)_{(p)}$. Then $\Dbar_0h=k\,u\,p^{k-1}\Dbar_0p+p^k\Dbar_0u$; the first term has $v_p$ exactly $k-1$ (normality gives $p\nmid\Dbar_0p$, and $k\ne0$ in characteristic $0$), the second has $v_p\ge k$. Hence $v_p(\Dbar_0h)=k-1$ and $v_p(Dh)=k-1-\nu$; and for $v_p(h)=0$ one has $v_p(Dh)\ge-\nu$. The latter bound is attained when $\nu>0$: some lower generator $t_i$ has $v_p(\den(Dt_i))=\nu$ (the lcm defining $\den_0$), and in lowest terms $v_p(Dt_i)=-\nu$ exactly.

\emph{Pole order on $\cO$.} From $Dy/y=\tfrac1m\sum_l l\,DQ_l/Q_l$ and $DE_i/E_i=\sum_l\lfloor il/m\rfloor\,DQ_l/Q_l$,
\begin{equation}\label{eq:fracpart}
\frac{Dw_i}{w_i}\;=\;\sum_l\Bigl\{\frac{il}{m}\Bigr\}\,\frac{DQ_l}{Q_l}\,,
\end{equation}
where $\{x\}$ denotes the fractional part.

If $\fp$ is unramified ($p\nmid Q$, $e=1$), then $v_\fp(DQ_l/Q_l)=v_p(DQ_l)\ge-\nu$ for every $l$, so $-v_\fp(Dw_i)\le\nu$; with the base level and Lemma~\ref{lem:etagen} for $(\cO,D)$, $\eta_\fp=\nu=e(1+\nu)-1$, attained as above. The uniformiser $\pi=p$ has $v_\fp(Dp)=1-1-\nu=-\nu=-\eta_\fp$, so $\fp$ is normal.

If $\fp$ is ramified, then $p\mid Q_{l_0}$ for exactly one $l_0$, $e=m/\gcd(m,l_0)$, $v_\fp(y)=l_0/\gcd(m,l_0)$, and
\[
v_\fp(w_i)=i\,v_\fp(y)-e\Bigl\lfloor\frac{il_0}{m}\Bigr\rfloor=\frac{il_0\bmod m}{\gcd(m,l_0)}\in[0,e)
\]
(Part I, \S4). Writing $Q_{l_0}=pc$ with $p\nmid c$ ($Q_{l_0}$ is squarefree), $\Dbar_0Q_{l_0}\equiv c\,\Dbar_0p\not\equiv0\pmod p$, so $v_p(\Dbar_0Q_{l_0})=0$ and
\[
v_\fp\Bigl(\frac{DQ_{l_0}}{Q_{l_0}}\Bigr)=e\,v_p(DQ_{l_0})-e\,v_p(Q_{l_0})=-e\nu-e=-e(1+\nu)
\]
exactly, while $v_\fp(DQ_l/Q_l)\ge-e\nu$ for $l\ne l_0$. Hence, in \eqref{eq:fracpart}: if $m\nmid il_0$ the $l_0$-term strictly dominates and
\[
v_\fp(Dw_i)=v_\fp(w_i)-e(1+\nu),\qquad v_\fp(w_i)\ge1
\]
($il_0\bmod m$ is then a nonzero multiple of $\gcd(m,l_0)$); if $m\mid il_0$ then $v_\fp(w_i)=0$ and $-v_\fp(Dw_i)\le e\nu$. Together with $-v_\fp(Dt_i)\le e\nu$ for $i\le j$, Lemma~\ref{lem:etagen} for $(\cO,D)$ gives $\eta_\fp\le e(1+\nu)-1$. Choose $i^*\in\{1,\dots,m-1\}$ with $i^*l_0\equiv\gcd(m,l_0)\pmod m$ (B\'ezout): then $v_\fp(w_{i^*})=1$, so $w_{i^*}$ is a uniformiser, $-v_\fp(Dw_{i^*})=e(1+\nu)-1$ attains the bound, and
\[
v_\fp\bigl(\Dbar_\fp w_{i^*}\bigr)=\eta_\fp+v_\fp(Dw_{i^*})=\bigl(e(1+\nu)-1\bigr)+\bigl(1-e(1+\nu)\bigr)=0,
\]
so $\fp$ is normal. In both cases Theorem~\ref{thm:shift}, applied to $(\cO,D)$, yields the displayed shift.
\end{proof}

\begin{proposition}[Type E primes: classification]\label{prop:typeE}
Let $P=\fp\cA$ be a type E prime, with $\fp$ lying over the irreducible $p\in R_0$ with ramification index $e_P$, and set
\[
\mu_P:=\max_{j<i\le n}\bigl(-v_P(Dt_i)\bigr)\qquad(\mu_P:=0\text{ if }n=j).
\]
\begin{enumerate}
\item[(i)] If $p$ is normal for $\Dbar_0$, then
\[
\eta_P=\max\Bigl(e_P\bigl(1+v_p(\den_0)\bigr)-1,\ \mu_P\Bigr),
\]
and
\[
P\ \text{is normal}\iff\mu_P\le e_P\bigl(1+v_p(\den_0)\bigr)-1,
\]
in which case $\delta_P=e_P(1+v_p(\den_0))$; otherwise $P$ is special.
\item[(ii)] If $p$ is special for $\Dbar_0$, then $P$ is special.
\end{enumerate}
In particular, $\delta_P\ge e_P$ at every normal type E prime.
\end{proposition}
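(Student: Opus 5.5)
The plan is to compute $\eta_P$ through Lemma~\ref{lem:etagen} applied to $(\cA,D)$ with the generating set $\mathcal G=\{t_1,\dots,t_n,w_1,\dots,w_{m-1}\}$. The first step is to note that $P=\fp\cA$ has the same valuation as $\fp$ on $K_0(y)$: $\cA_P$ is the localisation of $\cO[t_{j+1},\dots,t_n]$ at the extended prime, so $v_P$ restricts to $v_\fp$ and is Gauss-extended in the upper variables. This gives $e_P=e(\fp|p)$.

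Splitting $\mathcal G$ into its lower part $\{t_1,\dots,t_j,w_1,\dots,w_{m-1}\}\subseteq\cO$ and its upper part $\{t_{j+1},\dots,t_n\}$ then yields
\[
\eta_P=\max\Bigl(0,\max_{g\text{ lower}}(-v_\fp(Dg)),\ \mu_P\Bigr).
\]
In case (i), the proof of Lemma~\ref{lem:lowershift} shows that the lower maximum equals $\eta_\fp=e_P(1+\nu)-1$ (with $\nu:=v_p(\den_0)$), and that this value is attained. The displayed formula for $\eta_P$ follows.

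For normality in case (i), I take as uniformiser the same $\pi$ used in Lemma~\ref{lem:lowershift}, namely $p$ or $w_{i^*}$; it remains a uniformiser at $P$. From that lemma, $v_P(D\pi)=-(e_P(1+\nu)-1)$, so
\[
v_P(\Dbar_P\pi)=\eta_P-\bigl(e_P(1+\nu)-1\bigr)\ge0.
\]
This is $0$ exactly when $\mu_P\le e_P(1+\nu)-1$, and positive otherwise. The dichotomy lemma says this test is independent of the choice of $\pi$, so this settles the equivalence. In the normal case $\delta_P=1+\eta_P=e_P(1+\nu)$. The final claim $\delta_P\ge e_P$ is then immediate in case (i), and case (ii) contains no normal primes once it is proved.

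The main obstacle is (ii). Lemma~\ref{lem:lowershift} assumed normality of $p$, so I must redo the pole-order computation when $p\mid\Dbar_0p$. I would argue by contradiction: suppose $P$ is normal, with $h$ satisfying $v_P(h)=\eta_P$. Theorem~\ref{thm:shift} then gives $v_P(Dg)=v_P(g)-\delta_P$ for every $g$ with $v_P(g)\ne0$. I apply this to $g=p$ and also to $g=p\cdot t$ for suitable lower generators, to get exact valuations. On the other hand, I compute $v_P(Dp)=e_P\,v_p(Dp)$ directly. Since $p$ is special, $v_p(\Dbar_0 p)\ge1$, hence $v_p(Dp)\ge1-\nu$, and therefore $v_P(Dp)\ge e_P(1-\nu)$.

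To get a contradiction I also need a lower bound on $\eta_P$. I would obtain it from a lower generator $t_i$ with $v_p(Dt_i)=-\nu$, or, if $\nu=0$, from the trivial bound $\eta_P\ge0$. Comparing the two:
\[
e_P-\delta_P=v_P(Dp)\ge e_P(1-\nu)\implies\delta_P\le e_P\nu.
\]
This has to be played against $\delta_P\ge1+e_P\nu$, which gives the contradiction.

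The delicate point is that the bound $\eta_P\ge e_P\nu$ requires attainment through a $p$-adic analysis at a special $p$, where base-level valuations need not shift cleanly. I would first try the cleaner route: show that $\Dbar_P$ maps the ideal $\fp\cA_P$ into itself. Indeed $hD$ with $v_P(h)=\eta_P\ge e_P\nu$ sends $p$ into $P$ because $p\mid\Dbar_0p$, and sends the units $\cA_P^*$ into $\cA_P$. Then $\pi^{e_P}\sim p$ forces $e_P\pi^{e_P-1}\Dbar_P\pi\in P^{e_P}$, hence $v_P(\Dbar_P\pi)\ge1$, so $P$ is special. If $\eta_P$ exceeds $e_P\nu$, the extra factor from $h$ only increases $v_P(\Dbar_P p)$, so this argument survives that case.
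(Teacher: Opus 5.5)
Your proposal is correct and follows the paper's proof: (i) is the same generator split $\eta_P=\max(\eta_\fp,\mu_P)$ with the uniformiser of Lemma~\ref{lem:lowershift}, and your direct route for (ii), differentiating $\pi^{e_P}=pw$, is the paper's logarithmic-derivative computation, resting on the same witness $\eta_P\ge e_P\nu$. Your contradiction via Theorem~\ref{thm:shift} at $g=p$ is an equally valid repackaging of that computation.

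Two small remarks. First, that witness comes straight from the lcm defining $\den_0$ and uses nothing about the normality of $p$, so the ``delicate point'' is not delicate. Second, in the direct route you need $v_P(\Dbar_Pp)\ge e_P$, not merely $\Dbar_Pp\in P$; your bounds do give this, since $v_P(\Dbar_Pp)\ge\eta_P+e_P(1-\nu)\ge e_P$.
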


\begin{proof}
Throughout, $v_P$ restricts to $v_\fp$ on $K_0(y)^*$ (a uniformiser of $\fp$ generates $P\cA_P$), and the generators of $\cA$ are those of $\cO$ together with $t_{j+1},\dots,t_n$, so Lemma~\ref{lem:etagen} gives
\begin{equation}\label{eq:etamax}
\eta_P=\max\bigl(\eta_\fp,\ \mu_P\bigr).
\end{equation}
Write $\nu:=v_p(\den_0)$.

(i) By Lemma~\ref{lem:lowershift}, $\eta_\fp=e_P(1+\nu)-1$, which with \eqref{eq:etamax} is the stated formula. Suppose $\mu_P\le\eta_\fp$, so $\eta_P=\eta_\fp$. Take the uniformiser $\pi\in K_0(y)$ produced in Lemma~\ref{lem:lowershift} ($\pi=p$ or $\pi=w_{i^*}$), for which $v_P(D\pi)=-\eta_\fp$; then $v_P(\Dbar_P\pi)=\eta_P+v_P(D\pi)=0$ and $P$ is normal, with $\delta_P=1+\eta_P=e_P(1+\nu)$. Suppose instead $\mu_P>\eta_\fp$. For \emph{any} uniformiser $\pi\in K_0(y)$ of $\fp$, Lemma~\ref{lem:lowershift} gives $v_P(D\pi)\ge-\eta_\fp$, whence
\[
v_P(\Dbar_P\pi)=\eta_P+v_P(D\pi)\ \ge\ \mu_P-\eta_\fp\ \ge\ 1,
\]
and $P$ is special (the dichotomy is independent of the uniformiser).

(ii) Let $\pi$ be a uniformiser of $\fp$ and set $w:=\pi^{e_P}/p\in K_0(y)$, a $\fp$-unit. Logarithmic differentiation of $\pi^{e_P}=pw$ gives
\[
\frac{D\pi}{\pi}=\frac1{e_P}\Bigl(\frac{Dp}{p}+\frac{Dw}{w}\Bigr).
\]
Since $p$ is special, $v_p(\Dbar_0p)\ge1$, so $v_p(Dp)\ge1-\nu$ and $v_P(Dp/p)\ge e_P(1-\nu)-e_P=-e_P\nu$; and $v_P(Dw/w)=v_P(Dw)\ge-\eta_P$ since $w$ is a $P$-unit. Moreover $\eta_P\ge e_P\nu$: for $\nu>0$ this is witnessed by a lower generator in lowest terms, as in the base level of Lemma~\ref{lem:lowershift}. Hence
\[
v_P(D\pi)\ \ge\ 1-\max\bigl(e_P\nu,\ \eta_P\bigr)\ =\ 1-\eta_P,
\]
so $v_P(\Dbar_P\pi)\ge1$ and $P$ is special.

The final claim is immediate from (i): at a normal type E prime, $\delta_P=e_P(1+\nu)\ge e_P$.
\end{proof}

\begin{example}[Normality depends on the tower]\label{ex:towerdep}
Over the curve $y^2=x^2+1$ one has $e=2$ and $\den_0=1$ at the branch prime $(y)$, so $e(1+\nu)-1=1$. In the tower $t=\log(x+y)$, $Dt=1/y$ gives $\mu=1\le1$: $(y)$ is \emph{normal}, $\delta=2$. In the tower $t=\log(y)$, $Dt=x/y^{2}$ gives $\mu=2>1$: $(y)$ is \emph{special}. The same prime of $\cO$ is classified oppositely in the two towers --- normality is a property of the tower, not of the curve. The criterion also recovers the classical facts: for $t=\log(g)$ with $v_\fp(g)\ne0$, Lemma~\ref{lem:lowershift} gives $-v_P(Dt)=e_P(1+\nu)$, exceeding the threshold, so the primes of the argument of a logarithm are special; and a hyperexponential $t=e^{\eta}$ contributes $\mu$-values $-v_P(D\eta)$, so exactly the sufficiently deep type E poles of $D\eta$ become special.
\end{example}

\begin{proposition}[Type M primes]\label{prop:typeM}
Write $\mathcal F:=K_0(y)$ and $T:=\mathcal F[t_{j+1},\dots,t_n]$, and let $h_u\in T$ be the least common multiple of the $T$-denominators of $Dt_{j+1},\dots,Dt_n$ (the \emph{moving denominator}), so that $\Dbar_u:=h_uD$ is a derivation of $T$ (as $D\mathcal F\subseteq\mathcal F$). Let $P$ be a moving prime, corresponding to the monic irreducible $p\in T$. Then $\cA_P=T_{(p)}$, $\eta_P=v_p(h_u)$, and $P$ is special in the sense of Definition~\ref{def:normal} if and only if $p\mid\Dbar_up$ in $T$, i.e.\ if and only if $p$ is special for $\Dbar_u$ in the sense of \cite[\S10.1]{Bronstein05}.
\end{proposition}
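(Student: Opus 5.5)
The proof has three steps: identify the local ring, compute $\eta_P$ on generators, and compare the two notions of special.

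\emph{Local ring.} Since $P\cap\cO=0$, every nonzero element of $\cO$, and in particular of $R_0$, is a $P$-unit. Hence $\cA_P$ contains the localisation $S^{-1}\cA$ with $S=\cO\setminus\{0\}$. This localisation is $\mathcal F[t_{j+1},\dots,t_n]=T$, because $\operatorname{Frac}(\cO)=\mathcal F$. So $\cA_P$ is the localisation of $T$ at the height-one prime $PT$. By the correspondence stated for type M primes, $PT=pT$, and therefore $\cA_P=T_{(p)}$ and $v_P=v_p$ on $L^*$.

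\emph{Pole order.} I will use Lemma~\ref{lem:etagen} in its local form, with the supremum taken over $\cA_P=T_{(p)}$. For $a\in T$ we have $Da=\sum_i(\partial a/\partial t_i)Dt_i+a^D$, where $a^D$ means $D$ applied to the coefficients in $\mathcal F$. Since $D\mathcal F\subseteq\mathcal F$, every term except the $Dt_i$ lies in $T$, so $-v_p(Da)\le\max_i(-v_p(Dt_i))$. By definition of the lcm, $\max_i(-v_p(Dt_i))=v_p(h_u)$, and the bound is attained at $a=t_i$. Extending from $T$ to $T_{(p)}$ by the quotient rule gives nothing new, as in Lemma~\ref{lem:etagen}. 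Since $v_p(h_u)\ge0$, this gives $\eta_P=v_p(h_u)$.

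\emph{Special versus normal.} Take $h=h_u$ in Definition~\ref{def:normal}; this is allowed because $v_P(h_u)=\eta_P$. Then $\Dbar_P=\Dbar_u$. Take the uniformiser $\pi=p$, which is allowed because $PT_{(p)}=pT_{(p)}$. By definition, $P$ is special iff $v_p(\Dbar_up)\ge1$. Since $\Dbar_up\in T$ and $T$ is a UFD, this holds iff $p\mid\Dbar_up$ in $T$, which is Bronstein's definition of a special polynomial.

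The main obstacle is the first step: it relies on the asserted bijection between moving primes and irreducibles of $T$, and on the fact that localising a Krull domain at a height-one prime gives a DVR compatible with that bijection. If needed I would justify this as follows. For a Krull domain $\cA$ and a multiplicative set $S$ missing $P$, the localisation $S^{-1}\cA$ is again Krull, and its height-one primes are exactly the extensions of the height-one primes of $\cA$ that miss $S$. This yields $\cA_P=(S^{-1}\cA)_{PS^{-1}\cA}=T_{(p)}$.
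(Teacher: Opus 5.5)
Your proof is correct and follows the paper's argument step for step. You localise $\cA$ at $P$ by inverting $\cO\setminus\{0\}$ to get $T_{(p)}$, bound $-v_p(Da)$ on $T$ by $v_p(h_u)$ with equality at some $t_i$, and read off Definition~\ref{def:normal} with $h=h_u$ and $\pi=p$. One small slip: $\max_i(-v_p(Dt_i))=v_p(h_u)$ holds only when $v_p(h_u)>0$, since otherwise the maximum may be negative. As the paper does, treat $p\nmid h_u$ separately and let the $\max(0,\cdot)$ in Definition~\ref{def:eta} give $\eta_P=0$ there.
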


\begin{proof}
Every nonzero element of $\cO$ lies outside $P$, so localising $\cA=\cO[t_{j+1},\dots,t_n]$ at $P$ inverts $\cO\setminus\{0\}$: $\cA_P=T_{(p)}$, a DVR with uniformiser $p$. Since $\Dbar_uT\subseteq T$, one has $-v_p(Da)\le v_p(h_u)$ for all $a\in T$, whence $\eta_P\le v_p(h_u)$ by the generator computation of Lemma~\ref{lem:etagen}; and if $v_p(h_u)>0$, some $Dt_i$ written in lowest terms over $T$ has $v_p(Dt_i)=-v_p(h_u)$ exactly (the lcm), so $\eta_P=v_p(h_u)$. Finally, with $\Dbar_P=h'D$ for any $h'$ of $v_p(h')=\eta_P$,
\[
v_P(\Dbar_Pp)\;=\;\eta_P+v_p(Dp)\;=\;v_p\bigl(h_uDp\bigr)\;=\;v_p(\Dbar_up),
\]
so Definition~\ref{def:normal} reads: $P$ special $\iff p\mid\Dbar_up$.
\end{proof}

\begin{remark}[Reconciliation with Part I, and a correction]\label{rem:partone}
When the lower tower is monomial, every irreducible factor of $\den_0$ is special \cite[Thm.~10.2.2]{Bronstein05}, so normal $p$ have $v_p(\den_0)=0$ and Proposition~\ref{prop:typeE} returns $\delta_P=1+\eta_P\in\{1,e_P\}$: the taxonomy of Part I. With non-monomial generators the two notions of normality come apart. For $t=\sqrt{\log(x)}$ over $\Q(x)$ one has $Dt=1/(2xt)$, $\den_0=xt$ and $\Dbar_0t=\tfrac12$: thus $t$ is normal \textup(gcd definition\textup) yet divides $\den_0$, and the shift at $(t)$ is $\delta=2$, as $D(1/t)=-1/(2xt^3)$ confirms. Consequently the characterisation ``$p$ normal iff $p\nmid\den_D(K)$ and $p\nmid Dp$'' of Part I, Section~2, and the range $\delta_P\in\{1,e_P\}$, are correct only under the monomial hypothesis; the general statement is Theorem~\ref{thm:shift} with $\delta_P=1+v_P(\fd_D)$.
Precisely: Part I, Lemma~4.2 asserts $\delta_P\in\{1,e_P\}$ for every $P$ over a $p$ that is normal in the sense $\gcd(p,\Dbar_Rp)=1$, with no monomial hypothesis in its standing assumptions; its unramified-case proof takes $v_p(Dp)=0$, which requires in addition $p\nmid\den_D$. For $t=\sqrt{\log(x)}$, $Dt=1/(2xt)$, the prime $t$ is normal in that sense yet $\delta_{(t)}=2$. Part I's lemma is thus correct under the additional hypothesis $p\nmid\den_D$ --- automatic for monomial towers --- and Proposition~\ref{prop:typeE} is its general form; this is an erratum to Part I.
\end{remark}

\begin{example}\label{ex:dendiv}
Flagship tower: $t_1=x$, $y^2=x^2+1$, $t=\log(x+y)$, so $Dy=x/y$ and $Dt=1/y$. The polar divisors of $Dy$ and $Dt$ are both $\dv_0(y)$; since $\cO/(y)\cong\Q[x]/(x^2+1)$ is a field, $(y)$ is a height-one prime and
\[
\fd_D=\dv(y)
\]
is principal --- but generated by $y\in\cO\setminus R$, so no element of $R$ realises it without a spurious factor: $N_{L/K}(y)=-(x^2+1)$ would inflate $\fd_D$ by a square. Here $\Dbar=yD$ is a derivation of $\cA$ ($yDx=y$, $yDy=x$, $yDt=1$), $\delta_{(y)}=1+1=2=e_{(y)}$, and every other prime has $\eta_P=0$.
\end{example}

\section{Structure of elementary integrals}\label{sec:structure}

Fix $f\in L^*$ with an elementary integral over $L$. By the strong Liouville theorem (Part I, eq.~(2.1)) there are $v\in L$, $c_1,\dots,c_r\in\Fbar$ and $u_1,\dots,u_r\in(\Fbar L)^*$ with
\begin{equation}\label{eq:liou}
f\;=\;Dv+\sum_{i=1}^r c_i\,\frac{Du_i}{u_i}\,.
\end{equation}
The analysis takes place on the constant-field extension $\Fbar\cA:=\Fbar\otimes_F\cA$.

\begin{lemma}[Base change]\label{lem:basechange}
$D$ extends uniquely to $\Fbar L$ with $D\Fbar=0$; $\Fbar\cA$ is the integral closure of $\Fbar R$ in $\Fbar L$ and is Krull; and every height-one prime $\bar P$ of $\Fbar\cA$ contracts to a height-one prime $P$ of $\cA$ with $v_{\bar P}|_{L^*}=v_P$ and $\eta_{\bar P}=\eta_P$, and $\bar P$ is normal if and only if $P$ is. In particular, Theorem~\ref{thm:shift} holds on $\Fbar\cA$ for all $g\in(\Fbar L)^*$, with the same $\delta$.
\end{lemma}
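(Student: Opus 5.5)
We prove the four assertions of Lemma~\ref{lem:basechange} in turn; the real content is the behaviour of height-one primes under the algebraic constant-field extension.

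\emph{Extension of $D$.} Every element of $\Fbar L=\Fbar\otimes_F L$ is a finite sum $\sum\alpha_k\otimes g_k$. We set $D(\alpha\otimes g)=\alpha\otimes Dg$; this is well defined and satisfies Leibniz because $D$ is $F$-linear. Since $\Fbar/F$ is algebraic and separable in characteristic $0$, the extension is unique: any derivation extending $D$ kills $\Fbar$, because differentiating the minimal polynomial of $\alpha$ over the constant field $F$ gives $P'(\alpha)D\alpha=0$ with $P'(\alpha)\ne0$. By (T2) the polynomial $y^m-q$ is irreducible over $\Fbar K_0$, so $\Fbar L$ is a field and $\Fbar\otimes_F L$ is a domain.

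\emph{Integral closure and the Krull property.} $\Fbar\cA=\bigoplus_i\Fbar R\,w_i$ is integral over $\Fbar R$ and has fraction field $\Fbar L$. It is normal: $\Fbar\cA$ is the filtered union of $F'\otimes_F\cA$ over finite subextensions $F'\subseteq\Fbar$, and each $F'\otimes_F\cA$ is normal. For that, $F'\otimes_F\cA$ is \'etale over the normal Noetherian domain $\cA$ (separable base change), hence normal; and it is a domain by the irreducibility just noted. A filtered union of normal domains is normal. For the Krull property, Noetherianity fails over $\Fbar$, so we proceed differently: $\Fbar\cA$ is integral over the Krull domain $\Fbar R$, and the integral closure of a Krull domain in a finite extension of its fraction field is Krull \cite{Fossum73}. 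Here $\Fbar R$ is a polynomial ring over a field, hence a UFD, and $[\Fbar L:\Fbar K]=m$.

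\emph{Valuations.} Let $\bar P$ be a height-one prime. Put $P=\bar P\cap\cA$; this is nonzero, since $\bar P$ contains a nonzero norm from $\cA$. By going-down for the integral extension $\cA\subseteq\Fbar\cA$ with $\cA$ normal (or, after descent, flatness of $\cA\to F'\otimes\cA$), $P$ has height one. The valuation $v_{\bar P}$ restricted to $L$ is then a discrete valuation dominating $\cA_P$, hence a multiple $e\cdot v_P$. We claim $e=1$: the extension $\cA_P\to(F'\otimes\cA)_{\bar P}$ is \'etale, hence unramified, so a uniformiser of $P$ remains a uniformiser. This yields $v_{\bar P}|_{L^*}=v_P$.

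\emph{Invariance of $\eta$ and of normality.} By Lemma~\ref{lem:etagen} applied to the generating set $\mathcal G$ of $\Fbar\cA$ over $\Fbar$ (the same $\mathcal G$ works), we get $\eta_{\bar P}=\max(0,\max_g(-v_{\bar P}(Dg)))=\eta_P$, using $Dg\in L$. Take $h\in L$ with $v_P(h)=\eta_P$ and $\pi\in L$ a uniformiser of $P$; by the unramifiedness step $\pi$ is also a uniformiser of $\bar P$. Then $v_{\bar P}(h D\pi)=v_P(hD\pi)$, so normality agrees by Definition~\ref{def:normal}. Theorem~\ref{thm:shift}, which requires only a Krull domain, then applies on $\Fbar\cA$ with $\delta_{\bar P}=\delta_P$.

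The main obstacle is the unramifiedness/height-one step, which is needed because $\Fbar$ is infinite over $F$. We handle it by reducing to a finite Galois $F'$ containing the coefficients involved, where $F'\otimes_F\cA$ is finite \'etale over $\cA$.
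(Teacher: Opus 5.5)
Your proposal is correct and follows essentially the paper's route: normality of $\Fbar\cA$ from the filtered union of the \'etale base changes $F'\otimes_F\cA$, unramifiedness in codimension one so that a uniformiser of $P$ remains a uniformiser of $\bar P$, and the unchanged generating set $\mathcal G$ (with $D\Fbar=0$) to transfer $\eta_P$ and the normal/special dichotomy; your going-down argument for the height of $P$ supplies a step the paper leaves implicit. One side remark is wrong, though harmlessly: $\Fbar\cA=\bigoplus_i\Fbar R\,w_i$ is finite free over the Noetherian ring $\Fbar R=\Fbar[t_1,\dots,t_n]$ and hence Noetherian, so the Krull property follows directly from normality, and your detour through the theorem on integral closures of Krull domains is valid but unnecessary.
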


\begin{proof}
By (N2), $y^m-q$ remains irreducible over $\Fbar K_0$, so $[\Fbar L:\Fbar K]=m$ and $\Fbar\cA=\bigoplus_i\Fbar R\,w_i$. It is integral over $\Fbar R$ with fraction field $\Fbar L$, and it is a filtered union of the rings $F'\cA$ over the finite extensions $F'/F$ inside $\Fbar$, each of which is normal ($F'\otimes_F-$ is an \'etale base change in characteristic $0$); hence $\Fbar\cA$ is integrally closed and Krull. A constant extension is unramified in codimension one, so a uniformiser $\pi$ of $\cA_P$ remains one in $(\Fbar\cA)_{\bar P}$ and $v_{\bar P}$ restricts to $v_P$ on $L^*$. The set $\mathcal G$ of Lemma~\ref{lem:etagen} still generates $\Fbar\cA$ as an $\Fbar$-algebra and $D$ kills $\Fbar$, so $\eta_{\bar P}=\eta_P$ by Lemma~\ref{lem:etagen}, and $v_{\bar P}(\Dbar_{\bar P}\pi)=v_P(\Dbar_P\pi)$ decides normality identically. The proof of Theorem~\ref{thm:shift} nowhere used the constant field.
\end{proof}

\begin{lemma}[Regrouping]\label{lem:regroup}
The representation \eqref{eq:liou} may be chosen with $c_1,\dots,c_r$ linearly independent over $\Q$.
\end{lemma}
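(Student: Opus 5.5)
Take a representation \eqref{eq:liou} with $c_1,\dots,c_r\in\Fbar$ and $u_i\in(\Fbar L)^*$, and let $V$ be the $\Q$-span of the $c_i$ inside $\Fbar$, a finite-dimensional $\Q$-vector space. Choose a $\Q$-basis $\gamma_1,\dots,\gamma_s$ of $V$ and write $c_i=\sum_k a_{ik}\gamma_k$ with $a_{ik}\in\Q$. Clearing denominators, fix a positive integer $N$ with $Na_{ik}\in\Z$ for all $i,k$, and replace the basis by $\gamma_k':=\gamma_k/N$, so that $c_i=\sum_k n_{ik}\gamma_k'$ with $n_{ik}\in\Z$. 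The $\gamma_k'$ are still $\Q$-linearly independent.

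Next regroup the logarithmic terms. Since $Du/u$ is additive on products (the logarithmic derivative is a homomorphism from $(\Fbar L)^*$ to $(\Fbar L,+)$), for integers $n$ we have $n\,Du/u=D(u^n)/u^n$, including negative $n$ since $u^{-1}\in(\Fbar L)^*$. Hence
\[
\sum_i c_i\frac{Du_i}{u_i}=\sum_k\gamma_k'\sum_i n_{ik}\frac{Du_i}{u_i}=\sum_k\gamma_k'\frac{DU_k}{U_k},\qquad U_k:=\prod_i u_i^{\,n_{ik}}\in(\Fbar L)^*.
\]
Substituting into \eqref{eq:liou} gives $f=Dv+\sum_k\gamma_k'\,DU_k/U_k$ with the same $v\in L$, constants $\gamma_k'\in\Fbar$ linearly independent over $\Q$, and logands $U_k\in(\Fbar L)^*$. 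This is the required form. Any $k$ with $U_k\in\Fbar^*$ contributes $0$ and may be dropped without affecting independence of the remaining constants.

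There is no real obstacle. The only points to check are that the $U_k$ stay in the same multiplicative group as the $u_i$ (clear, since it is a group), and that the $\gamma_k'$ lie in $\Fbar$ (they are rational multiples of elements of $V\subseteq\Fbar$). One should also note that integrality of the $n_{ik}$ is exactly why the basis is rescaled by $1/N$: rational exponents would take us outside $\Fbar L$.
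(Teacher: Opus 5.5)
Your proof is correct and uses the same mechanism as the paper's: integer relations among the $c_i$ are absorbed into integer powers of the logands through the homomorphism $u\mapsto Du/u$. The difference is only organisational. The paper eliminates one dependent constant at a time: when $Mc_r=\sum_{i<r}m_ic_i$, it replaces $(c_i,u_i)_{i\le r}$ by $\bigl(c_i/M,\ u_i^{M}u_r^{m_i}\bigr)_{i<r}$ and iterates. You instead pass to a rescaled $\Q$-basis in a single step, which is exactly the construction the paper itself uses later in the proof of Corollary~\ref{cor:divisors}.
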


\begin{proof}
Standard (Rosenlicht; cf.\ the proof of \cite[Thm.~5.5.3]{Bronstein05}): if $Mc_r=\sum_{i<r}m_ic_i$ with $M,m_i\in\Z$, $M\neq0$, then
\[
\sum_{i=1}^{r}c_i\frac{Du_i}{u_i}\;=\;\sum_{i=1}^{r-1}\frac{c_i}{M}\;\frac{D\bigl(u_i^{M}u_r^{\,m_i}\bigr)}{u_i^{M}u_r^{\,m_i}}\,,
\]
with the new logands again in $(\Fbar L)^*$; iterate until independence holds.
\end{proof}

\begin{lemma}[Logarithmic derivatives at normal primes]\label{lem:logder}
Let $P$ be a normal height-one prime of $\Fbar\cA$ with uniformiser $\pi$, and let $u\in(\Fbar L)^*$. Then
\[
v_P\!\left(\frac{Du}{u}-v_P(u)\,\frac{D\pi}{\pi}\right)\;\ge\;1-\delta_P,
\qquad
v_P\!\left(\frac{D\pi}{\pi}\right)=-\delta_P\,.
\]
In particular $v_P(Du/u)\ge-\delta_P$, with equality if and only if $v_P(u)\neq0$.
\end{lemma}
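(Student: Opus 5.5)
The plan is to reduce everything to Theorem~\ref{thm:shift}, which holds on $\Fbar\cA$ by Lemma~\ref{lem:basechange}, after factoring out the uniformiser. First, the second identity follows from the theorem with $g=\pi$. Since $v_P(\pi)=1\neq0$, we get $v_P(D\pi)=1-\delta_P$, and hence $v_P(D\pi/\pi)=-\delta_P$.

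For the first inequality, write $k:=v_P(u)$ and $u=\pi^k w$ with $w:=u\pi^{-k}$. Then $v_P(w)=0$, so $w$ is a $P$-unit of the local ring $(\Fbar\cA)_P$. Logarithmic differentiation gives
\[
\frac{Du}{u}-k\,\frac{D\pi}{\pi}=\frac{Dw}{w}.
\]
Because $v_P(w)=0$, we have $v_P(Dw/w)=v_P(Dw)$. The second clause of Theorem~\ref{thm:shift} (again via Lemma~\ref{lem:basechange}) gives $v_P(Dw)\ge-\eta_P=1-\delta_P$, which is the claimed bound. This is precisely the step where we need $w\in L^*$, or $(\Fbar L)^*$, to be a genuine unit at $P$. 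The theorem's bound $v_P(Dg)\ge-\eta_P$ for $v_P(g)=0$ covers exactly that case.

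For the final assertion, combine the two displays. If $k\neq0$, the term $k\,D\pi/\pi$ has valuation exactly $-\delta_P$, since $k\neq0$ in characteristic $0$. The remainder $Dw/w$ has valuation at least $1-\delta_P>-\delta_P$. By the strict triangle inequality, $v_P(Du/u)=-\delta_P$. If $k=0$, then $Du/u=Dw/w$ has valuation at least $1-\delta_P>-\delta_P$, so equality fails. In either case $v_P(Du/u)\ge-\delta_P$.

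There is no real obstacle. The one point to check is that the valuation lemma is available on $\Fbar\cA$ rather than only on $\cA$, for elements of $(\Fbar L)^*$. Lemma~\ref{lem:basechange} supplies this, including the fact that $\eta_{\bar P}=\eta_P$ and that normality is preserved. One also needs that a uniformiser of $P$ exists in $\Fbar L$, which holds because $(\Fbar\cA)_P$ is a DVR, $\Fbar\cA$ being Krull.
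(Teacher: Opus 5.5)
Your proof is correct and follows the paper's argument: factor $u=w\pi^{k}$ with $w$ a $P$-unit, bound $v_P(Dw/w)=v_P(Dw)\ge-\eta_P=1-\delta_P$ by the last clause of Theorem~\ref{thm:shift}, and get $v_P(D\pi/\pi)=-\delta_P$ from the same theorem applied to $\pi$, all on $\Fbar\cA$ via Lemma~\ref{lem:basechange}. Your explicit strict-triangle-inequality derivation of the ``in particular'' clause spells out what the paper leaves implicit.
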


\begin{proof}
Write $u=w\pi^{k}$ with $k=v_P(u)$ and $w$ a $P$-unit. Then $Du/u=k\,D\pi/\pi+Dw/w$, and $v_P(Dw/w)=v_P(Dw)\ge-\eta_P=1-\delta_P$ by the last clause of Theorem~\ref{thm:shift}, while $v_P(D\pi/\pi)=(1-\delta_P)-1=-\delta_P$ by the same theorem applied to $\pi$.
\end{proof}

\begin{theorem}[Structure of elementary integrals]\label{thm:structure}
Let $f\in L^*$ have an elementary integral over $L$, and set
\[
S\;:=\;\{P\ \text{special}\}\;\cup\;\{P\ \text{normal}:\ v_P(f)\le-\delta_P\},
\]
a set of height-one primes of $\Fbar\cA$, finite whenever the special primes are finite in number \textup(e.g.\ under the hypotheses of Section~\ref{sec:specials}\textup). Then the representation \eqref{eq:liou} may be chosen so that:
\begin{enumerate}
\item[(i)] at every normal height-one prime $P$,
\[
v_P(v)\;\ge\;\min\bigl(0,\;v_P(f)+\delta_P\bigr),
\]
with equality $v_P(v)=v_P(f)+\delta_P$ whenever $v_P(v)<0$; consequently
\[
\dv_\infty(v)\big|_{\mathrm{normal}}\;\le\;\sum_{\substack{P\ \mathrm{normal}\\ v_P(f)\le-\delta_P-1}}\bigl(-v_P(f)-\delta_P\bigr)\,P\,;
\]
\item[(ii)] every $u_i$ is an $S$-unit: $\dv(u_i)$ is supported on $S$.
\end{enumerate}
\end{theorem}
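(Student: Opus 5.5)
Start from a representation \eqref{eq:liou} with the $c_i$ linearly independent over $\Q$ (Lemma~\ref{lem:regroup}), and work on $\Fbar\cA$, where Theorem~\ref{thm:shift} and Lemma~\ref{lem:logder} hold by Lemma~\ref{lem:basechange}. The whole proof is a local comparison of valuations at one normal prime $P$ at a time, with uniformiser $\pi$. Only (ii) needs a modification of the representation: discard the logands' dependence on non-$S$ primes. Part (i) then follows for that same representation.

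\emph{Step 1 (logands at a normal prime).} Fix a normal $P$ and write $k_i:=v_P(u_i)$. Lemma~\ref{lem:logder} gives
\[
\sum_i c_i\frac{Du_i}{u_i}=\Bigl(\sum_ic_ik_i\Bigr)\frac{D\pi}{\pi}+(\text{terms of }v_P\ge1-\delta_P).
\]
Since the $c_i$ are $\Q$-independent and the $k_i$ are integers, $\sum_ic_ik_i=0$ forces all $k_i=0$. So the logarithmic part has $v_P$ exactly $-\delta_P$ if some $k_i\ne0$, and $v_P\ge1-\delta_P$ otherwise.

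\emph{Step 2 (part (i)).} Suppose $v_P(v)<0$. By Theorem~\ref{thm:shift}, $v_P(Dv)=v_P(v)-\delta_P\le-1-\delta_P$. This is strictly below the valuation of the logarithmic part, which is at least $-\delta_P$ by Step~1, so $v_P(f)=v_P(v)-\delta_P$. Hence $v_P(v)=v_P(f)+\delta_P$, which is the claimed equality. In particular $v_P(v)\ge\min(0,v_P(f)+\delta_P)$ in all cases, and $v_P(v)<0$ forces $v_P(f)\le-\delta_P-1$. The divisor inequality is the sum of these local statements over normal primes. This step needs no choice of representation.

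\emph{Step 3 (part (ii)).} Let $P$ be normal with $P\notin S$, so $v_P(f)\ge1-\delta_P$. By Step~2, $v_P(v)\ge0$, hence $v_P(Dv)\ge1-\delta_P$ by the last clause of Theorem~\ref{thm:shift}. Then the logarithmic part equals $f-Dv$ and has $v_P\ge1-\delta_P$, so by Step~1 every $k_i=v_P(u_i)=0$. Thus each $\dv(u_i)$ avoids all normal primes outside $S$. Special primes lie in $S$ by definition, so each $u_i$ is an $S$-unit; there is nothing further to modify once regrouping has been done.

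\emph{Main obstacle.} The step needing care is Step~1's use of $\Q$-independence. We must make sure the dominant term $(\sum c_ik_i)D\pi/\pi$ cannot cancel against $Dv$ or $f$ at valuation $-\delta_P$. Cancellation against $Dv$ is ruled out because $v_P(Dv)$ is either $\le-1-\delta_P$ or $\ge1-\delta_P$, never exactly $-\delta_P$ (Theorem~\ref{thm:shift} with $v_P(v)\ne0$ or $=0$). Cancellation against $f$ is handled by the case split in Steps 2 and 3.

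We should also check that Lemma~\ref{lem:regroup} keeps the logands in $(\Fbar L)^*$. And Step~1 relies on $\Q$-independence together with integrality of the $k_i$, which holds because the valuations are discrete; normalised representatives $\pi$ chosen in $\Fbar L$ suffice for this.
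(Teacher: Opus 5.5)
Your proposal is correct and follows the paper's proof: you regroup to $\Q$-independent $c_i$, obtain (i) from the mismatch between $v_P(Dv)=v_P(v)-\delta_P\le-1-\delta_P$ and the bound $v_P\bigl(\sum_ic_iDu_i/u_i\bigr)\ge-\delta_P$ of Lemma~\ref{lem:logder}, and obtain (ii) by showing that the coefficient of $D\pi/\pi$ must vanish at every normal $P\notin S$. The only difference is cosmetic: you apply $\Q$-independence up front in Step~1, so the logarithmic part has valuation exactly $-\delta_P$ as soon as some $v_P(u_i)\ne0$, whereas the paper first derives $\sum_ic_iv_P(u_i)=0$ inside (ii) and only then invokes independence.
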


\begin{proof}
Choose \eqref{eq:liou} with $c_1,\dots,c_r$ linearly independent over $\Q$ (Lemma~\ref{lem:regroup}); we show that (i) and (ii) then hold as stated, all valuations being taken on $\Fbar\cA$ (Lemma~\ref{lem:basechange}).

(i) Let $P$ be normal and suppose $k:=v_P(v)<0$. Theorem~\ref{thm:shift} gives $v_P(Dv)=k-\delta_P\le-1-\delta_P$, whereas $v_P\bigl(\sum_ic_iDu_i/u_i\bigr)\ge-\delta_P$ by Lemma~\ref{lem:logder}. The two valuations in \eqref{eq:liou} being distinct,
\[
v_P(f)\;=\;\min\Bigl(v_P(Dv),\,v_P\Bigl(\sum_ic_i\tfrac{Du_i}{u_i}\Bigr)\Bigr)\;=\;k-\delta_P\,,
\]
so $v_P(v)=v_P(f)+\delta_P$ and in particular $v_P(f)\le-\delta_P-1$. If $v_P(v)\ge0$ there is nothing to prove; the divisor bound restates the two cases.

(ii) Let $P$ be normal with $P\notin S$, i.e.\ $v_P(f)\ge1-\delta_P$; we claim $v_P(u_i)=0$ for every $i$. First, $v_P(v)\ge0$: otherwise (i) forces $v_P(f)\le-\delta_P-1$. Hence $v_P(Dv)\ge1-\delta_P$, by Theorem~\ref{thm:shift} if $v_P(v)>0$ and by its last clause if $v_P(v)=0$. Next, by Lemma~\ref{lem:logder},
\[
\sum_{i}c_i\frac{Du_i}{u_i}\;=\;\Bigl(\sum_{i}c_i\,v_P(u_i)\Bigr)\frac{D\pi}{\pi}\;+\;\varepsilon,
\qquad v_P(\varepsilon)\ge1-\delta_P\,.
\]
Substituting into \eqref{eq:liou} and using $v_P(f)\ge1-\delta_P$, $v_P(Dv)\ge1-\delta_P$, the term $\bigl(\sum_ic_iv_P(u_i)\bigr)D\pi/\pi$, of exact valuation $-\delta_P<1-\delta_P$ unless its coefficient vanishes, must vanish:
\[
\sum_{i=1}^{r}c_i\,v_P(u_i)\;=\;0\,.
\]
Since the $c_i$ are $\Q$-linearly independent and the $v_P(u_i)$ are integers, $v_P(u_i)=0$ for all $i$. Thus $\dv(u_i)$ is supported on $S$.
\end{proof}

\begin{remark}
Statement (ii) with $P$ ranging over the moving primes, combined with the identification $\tau_P(f)=\sum_ic_iv_P(u_i)$ of Section~\ref{sec:logands}, is the residue-constancy criterion; statement (ii) with $P$ of type E feeds the $S$-unit and torsion machinery of Part I through Proposition~\ref{prop:split}. The proof above replaces the global squarefree-lcm argument of \cite[Thm.~10.2.1]{Bronstein05} and Part I, Theorem~5.1 by a purely local computation at each prime; the global inputs enter only through $\delta_P=1+v_P(\fd_D)$. Being local, the same proofs apply at any discrete valuation of $\Fbar L$ with finite pole order $\eta$, whether or not it arises from a height-one prime of $\Fbar\cA$; this is used at the place at infinity in Section~\ref{sec:degree}.
\end{remark}

\begin{corollary}[Element form]\label{cor:element}
Write $f=\sum_i(a_i/d)w_i$ as in Proposition~\ref{prop:ring}(ii), and factor $d=d_\ast d_n$ in $R$, where $d_n$ collects those irreducible factors of $d$ \emph{all} of whose height-one primes in $\Fbar\cA$ are normal, and $d_\ast$ the rest; let $d_n=\prod_{l\ge1}d_l^{\,l}$ be the squarefree decomposition. Recall that $\delta_P\ge e_P$ at every normal prime $P$, where $e_P:=v_P(p)$ for the contraction $(p)=P\cap R$: at moving primes $e_P=1\le\delta_P$, and at type E primes this is Proposition~\ref{prop:typeE}. Then the rational part $v$ of Theorem~\ref{thm:structure} satisfies
\[
\den(v)\ \Big|\ s\,\prod_{l\ge2}d_l^{\,l-1}
\]
for some $s\in R$ each of whose irreducible factors has at least one special prime in $\Fbar\cA$.
\end{corollary}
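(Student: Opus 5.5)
The plan is to translate the divisor bound of Theorem~\ref{thm:structure}(i) into a statement about the $R$-denominator, working one irreducible $p\in R$ at a time. The key preliminary is to relate $\den(g)$ to valuations. For $g=\sum_i(b_i/e)w_i$ in lowest terms, the claim to establish is
\[
v_p(\den(g))=\max_{P\mid p}\Bigl\lceil \frac{-v_P(g)}{e_P}\Bigr\rceil^{+}.
\]
To see this, note that $\Fbar\cA$ is a Krull domain whose height-one primes over $p$ are the $P$ with $P\cap R=(p)$. Then $p^kg\in\cA_{(p)}$ (the localisation at $R\setminus(p)$) if and only if $v_P(p^kg)\ge0$ for all $P\mid p$, because a Krull domain is the intersection of its height-one localisations. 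Moreover $\cA_{(p)}=\bigoplus R_{(p)}w_i$ by freeness, so $p^kg\in\cA_{(p)}$ exactly when $k\ge v_p(\den g)$. The base change to $\Fbar$ is harmless by Lemma~\ref{lem:basechange}, since membership in $\cA_{(p)}$ can be tested over $\Fbar$: the module $\Fbar\cA$ is free on the same basis.

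Next, fix an irreducible factor $p$ of $\den(v)$ all of whose primes are normal. If some prime over $p$ is special, $p$ goes into $s$, and there is nothing to show. So assume every $P\mid p$ is normal; then $p\nmid d_\ast$. Let $P\mid p$ with $v_P(v)<0$. Theorem~\ref{thm:structure}(i) gives
\[
-v_P(v)=-v_P(f)-\delta_P .
\]
By the same valuation description applied to $f$, we have $-v_P(f)\le e_P\,v_p(d)=e_P\,l$ where $l:=v_p(d_n)$, since $p\mid d_n$. This holds provided $p$ actually divides $d$. If it does not, then $v_P(f)\ge0$, so $v_P(v)\ge0$ at every $P\mid p$ and $p\nmid\den(v)$. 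Using $\delta_P\ge e_P$, recalled in the statement, we get
\[
-v_P(v)\le e_Pl-e_P=e_P(l-1),
\]
hence $\lceil -v_P(v)/e_P\rceil\le l-1$. Taking the maximum over $P\mid p$ gives $v_p(\den v)\le l-1$. In particular $l\ge2$ whenever $p\mid\den(v)$, which matches the exponent of $d_l$ in $\prod_{l\ge2}d_l^{\,l-1}$.

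Finally, collect the primes. Let $s$ be the product of the $p$-parts of $\den(v)$ over those $p$ having at least one special prime. Every other factor of $\den(v)$ is all-normal, hence divides $d_n$ with exponent bounded as above. Since $\den(v)=\prod_p p^{v_p(\den v)}$, this yields $\den(v)\mid s\prod_{l\ge2}d_l^{\,l-1}$. The main obstacle I anticipate is the first step, the exact description of $v_p(\den)$ through the $v_P$. There, one must check carefully that the Trager-basis coordinates detect integrality prime by prime, i.e.\ that $\cA_{(p)}$ is integrally closed and free on the $w_i$. This follows from Proposition~\ref{prop:ring}(i)--(ii) by localising, and for type M primes $e_P=1$, so the formula is immediate there.
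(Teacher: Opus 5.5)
Your proof is correct and follows the paper's own argument: you bound $v_P(v)$ at each normal prime via Theorem~\ref{thm:structure}(i) and $\delta_P\ge e_P$, which caps the exponent at $l-1$ over each factor of $d_l$, and you place every irreducible having a special prime (in particular every $d_\ast$-factor) into $s$. The one addition is that you state and justify the formula $v_p(\den g)=\max_{P\mid p}\lceil -v_P(g)/e_P\rceil^{+}$, using the Krull intersection and the freeness of $\cA_{(p)}$ on the $w_i$; the paper uses this only implicitly when it says a product of such irreducibles ``clears'' the remaining poles.
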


\begin{proof}
We bound $v_P(v)$ at every normal prime $P$, with $(p)=P\cap R$ and $e_P=v_P(p)$.

If $p\nmid d$ then $v_P(f)\ge0>-\delta_P$, so Theorem~\ref{thm:structure}(i) gives $v_P(v)\ge0$.

If $p\mid d_n$ with $v_p(d)=l$, then $v_P(f)\ge-v_P(d)=-e_P\,l$, so
\[
v_P(v)\;\ge\;\min\bigl(0,\,-e_P\,l+\delta_P\bigr)\;\ge\;-e_P\,l+e_P\;=\;-v_P\bigl(p^{\,l-1}\bigr),
\]
using $\delta_P\ge e_P$. In particular $l=1$ contributes no pole.

If $p\mid d_\ast$, then $v_P(v)\ge\min(0,-e_P\,v_p(d)+\delta_P)$ is again finite and bounded below.

Hence $v\cdot\prod_{l\ge2}d_l^{\,l-1}$ has poles only at primes lying over irreducible factors of $d_\ast$ and at special primes; both kinds of prime contract to irreducibles of $R$ possessing at least one special prime of $\Fbar\cA$ (for $d_\ast$-factors, by the definition of $d_\ast$). As $v$ has finitely many poles, some product $s$ of such irreducibles clears them.
\end{proof}

\begin{remark}
The valuation lemma says nothing at a special prime, and in \cite[\S10.3]{Bronstein05} the exponents of the special primes are guessed. Section~\ref{sec:lower} bounds them by the shift of the uniformiser at every special prime where the leading part of $D$ has no kernel in negative degrees, which covers the specials of the towers of this paper with the exceptions listed in Remark~\ref{rem:decided}; the algorithm guesses only there. Everything else in the ansatz is proved: the normal part of the denominator is $\prod_{l\ge2}d_l^{\,l-1}$, and the logands are $S$-units.
\end{remark}

\section{Special primes of monomial upper towers}\label{sec:specials}

Assume in this section that each upper generator $t_i$ ($i>j$) is a primitive, hyperexponential or hypertangent monomial over the field below it (which contains $y$): respectively $Dt_i\in$, $Dt_i/t_i\in$, or $Dt_i/(1+t_i^2)\in K_0(y)(t_{j+1},\dots,t_{i-1})$, with $t_i$ transcendental and no new constants. Recall from Proposition~\ref{prop:typeM} the moving denominator $h_u\in T=\mathcal F[t_{j+1},\dots,t_n]$, $\mathcal F=K_0(y)$, and the derivation $\Dbar_u=h_uD$ of $T$.

\begin{theorem}\label{thm:specials}
The special height-one primes of $\cA$ are exactly:
\begin{enumerate}
\item[(i)] the type E primes over $\Dbar_0$-special $p\in R_0$, together with the type E primes over $\Dbar_0$-normal $p$ at which some upper generator satisfies $-v_P(Dt_i)\ge e_P(1+v_p(\den_0))$ \textup(Proposition~\ref{prop:typeE}\textup);
\item[(ii)] the moving primes dividing the moving denominator: the primes of the irreducible factors of $h_u$ in $T$;
\item[(iii)] the moving primes $(t_i)$ for hyperexponential $t_i$ \textup(including $t_i=e^{\eta}$ with $\eta$ algebraic over $K_0$\textup);
\item[(iv)] the moving primes over the factors of $1+t_i^2$ for hypertangent $t_i$.
\end{enumerate}
In particular the special primes are finite in number.
\end{theorem}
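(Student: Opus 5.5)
The plan is to split the height-one primes of $\cA$ into the two families, type E and type M, and classify each with the results already proved.

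\emph{Type E primes.} These are settled by Proposition~\ref{prop:typeE}. Part (ii) makes every type E prime over a $\Dbar_0$-special $p$ special. Part (i) makes a type E prime over a $\Dbar_0$-normal $p$ special exactly when $\mu_P>e_P(1+v_p(\den_0))-1$. Since $\mu_P$ is an integer maximum, this is the same as requiring some upper generator to satisfy $-v_P(Dt_i)\ge e_P(1+v_p(\den_0))$. That is (i).

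For finiteness of (i), I would argue as follows. The $\Dbar_0$-special $p$ are finite in number because the lower tower is the setting of \cite[\S10.1]{Bronstein05}; where that tower is not monomial I would cite whatever finiteness hypothesis is in force. Each $p$ has finitely many $\fp$ above it. The second clause forces $v_P(Dt_i)<0$, and by Lemma~\ref{lem:etagen} each $Dt_i$ has only finitely many poles.

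\emph{Type M primes.} By Proposition~\ref{prop:typeM}, a moving prime $P$ with monic irreducible $p\in T$ is special iff $p\mid\Dbar_u p$. This reduces (ii)--(iv) to the classical classification of special polynomials for the derivation $\Dbar_u=h_uD$ on $T=\mathcal F[t_{j+1},\dots,t_n]$. Here $\mathcal F=K_0(y)$ is a differential field, so the usual monomial theory applies over it.

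I would proceed by induction on the upper generators, following \cite[Thm.~10.2.2 and \S5.1--5.10]{Bronstein05}. First, if $p\mid h_u$ then $p$ is special. One generator $t_i$ has $v_p(Dt_i)=-v_p(h_u)$ in lowest terms, so write $Dt_i=a/b$ with $p^{\nu}\parallel b$ and $p\nmid a$. Expanding $\Dbar_u p$ by the chain rule over the generators, the coefficients in which $p$ does not cancel are $h_u\cdot$(denominators); I expect $p\mid\Dbar_u p$ to follow from $p$ dividing the lcm-cofactor structure. This is the analogue of "factors of the denominator of $D$ are special" in a monomial tower.

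Second, if $p\nmid h_u$, then $\Dbar_u$ restricted near $p$ is, up to a unit, $D$ itself. In that case $p$ is special only if it is special for $D$ on the polynomial ring in its top variable $t_i$ over the field below. For a monomial $t_i$ over a field, the $D$-special irreducibles in $t_i$ are: none in the primitive case; only $t_i$ in the hyperexponential case; only the factors of $1+t_i^2$ in the hypertangent case. This is Bronstein's Theorems~5.1.x, which use only that $t_i$ is a monomial with no new constants over a differential field. If $p$ involves lower variables only, induction applies. These cases give (iii) and (iv).

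For the converse, the listed primes are special by direct computation: $Dt_i/t_i=\eta'$ gives $t_i\mid Dt_i$, and $D(1+t_i^2)=2\eta'(1+t_i^2)$ in the hypertangent case.

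The step I expect to be the main obstacle is the first type M case: showing that every factor of $h_u$ is special. The difficulty is that $h_u$ lives over $\mathcal F$, which contains $y$, while the denominators of the $Dt_i$ are polynomials in the lower upper variables. I would try to prove the stronger statement that for a monomial $t_i$ the denominator of $Dt_i$ involves only $t_{j+1},\dots,t_{i-1}$, and then argue inductively that such factors are already special for the subtower, specialness being preserved under adjoining a monomial. If that fails, I would fall back on (ii) being read as "all factors of $h_u$": they are special via the valuation computation of Proposition~\ref{prop:typeM}, since $\eta_P>0$ and $v_P(\Dbar_Pp)=v_p(h_uDp)$, with the same lowest-terms argument as in Proposition~\ref{prop:typeE}(ii).

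Finiteness of the type M specials is then immediate: $h_u$ has finitely many factors and each $t_i$ contributes at most two.
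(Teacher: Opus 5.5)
Your decomposition matches the paper's: Proposition~\ref{prop:typeE} for type E, and Proposition~\ref{prop:typeM} plus Bronstein's single-variable classification over the coefficient field $\mathcal F$ for type M. Your treatment of moving primes with $p\nmid h_u$ is sound. The gap is where you placed it: you never prove that the primes in (ii) are special, and neither of your routes closes it.

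Route (a) fails at its inductive step, because a factor of $h_u$ need not be special for the subtower. Over $\mathcal F\ni x$, take $t_{j+1}=\log(x)$ and $t_{j+2}=\log(xt_{j+1}+1)$, the tower of Example~\ref{ex:chainpreview}. Then $D(xt_{j+1}+1)=t_{j+1}+1$, so $xt_{j+1}+1$ is normal in $\mathcal F[t_{j+1}]$. It is special in the full tower only because it divides $h_u$: $\Dbar_u(xt_{j+1}+1)=(xt_{j+1}+1)(t_{j+1}+1)$.

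Route (b) never uses the monomial hypothesis, and without that hypothesis the claim is false. The facts $\eta_P=v_p(h_u)>0$ and $v_P(\Dbar_Pp)=v_p(h_uDp)$ give only $v_P(\Dbar_Pp)\ge0$. For an upper generator with $Dt=1/(2xt)$ (the $\sqrt{\log(x)}$ of Remark~\ref{rem:partone}), one has $h_u=t$ and $\Dbar_ut=1/(2x)\in\mathcal F^*$. So $(t)$ divides $h_u$ and is nevertheless normal. The lowest-terms argument of Proposition~\ref{prop:typeE}(ii) gives a lower bound on $\eta_P$, which Proposition~\ref{prop:typeM} already supplies. What is missing is control of the pole of $Dp$ at $p$.

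The missing step uses the first half of your route (a) directly, with no induction on specialness.
\begin{enumerate}
\item[(1)] Let $t_k$ be the main variable of $p\mid h_u$. Since every upper generator is a monomial, each $Dt_{k'}$ with $k'\le k$ is a polynomial in $t_{k'}$ over $\mathcal F(t_{j+1},\dots,t_{k'-1})$. Together with $D\mathcal F\subseteq\mathcal F$, this gives $Dp\in\mathcal F(t_{j+1},\dots,t_{k-1})[t_k]$.
\item[(2)] Hence there is $b\in\mathcal F[t_{j+1},\dots,t_{k-1}]\setminus\{0\}$ with $bDp\in T$. Since $p$ involves $t_k$, $p\nmid b$.
\item[(3)] Then $b\,\Dbar_up=h_u\,(bDp)$ is divisible by $p$, so $p\mid\Dbar_up$ by irreducibility. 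In valuation terms, $v_p(Dp)\ge0$, and therefore $v_P(\Dbar_Pp)\ge v_p(h_u)\ge1$.
\end{enumerate}

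Two small slips elsewhere. The hyperexponential check must be stated for $\Dbar_u$ rather than $D$: $\Dbar_ut_i=(h_uDt_i/t_i)\,t_i$ with $h_uDt_i/t_i\in T$. For $t_i=\tan(\eta)$ one has $D(1+t_i^2)=2t_i\eta'(1+t_i^2)$, not $2\eta'(1+t_i^2)$; this does not affect the conclusion.
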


\begin{proof}
The type E classification is Proposition~\ref{prop:typeE}, giving (i). By Proposition~\ref{prop:typeM}, a moving prime with monic irreducible $p\in T$ is special if and only if $p\mid\Dbar_up$ in $T$, i.e.\ if and only if $p$ is special for the derivation $\Dbar_u$ of $T$ in the sense of \cite[\S10.1]{Bronstein05}. Now $L=\mathcal F(t_{j+1},\dots,t_n)$ is a tower of primitive, hyperexponential and hypertangent monomial extensions of the differential field $\mathcal F$, which is closed under $D$ and has $\Const_D(\mathcal F)=F$ by (T4). The classification of the $\Dbar_u$-specials of $T$ is therefore \cite[Thm.~10.2.2]{Bronstein05} with the coefficient field $F$ replaced by $\mathcal F$: its proof --- Gauss' lemma and induction on the main variable, together with the single-variable classification of the special polynomials of primitive, hyperexponential and hypertangent monomials (the monomial-extension theory of \cite[\S3.4]{Bronstein05} and, for the hypertangent case, \cite[\S5.10]{Bronstein05}) --- uses only that the coefficient field is a differential field of characteristic $0$ with the same constants, and carries over verbatim.

We record the forward inclusions, which are direct computations. If $p\mid h_u$ has main variable $t_i$, then $Dp$ lies in $\mathcal F(t_{j+1},\dots,t_{i-1})[t_i]$, because the coefficients of $p$ and the derivatives $Dt_{i'}$, $i'\le i$, do; write $bDp\in T$ with $b\in\mathcal F[t_{j+1},\dots,t_{i-1}]$, $b\ne0$. Then $b\,\Dbar_up=h_u\,(bDp)$ is divisible by $p$, and $p\nmid b$ as $t_i$ occurs in $p$ but not in $b$; since $p$ is irreducible, $p\mid\Dbar_up$. If $t_i$ is hyperexponential, $\Dbar_ut_i=(h_u\,Dt_i/t_i)\,t_i$ with $h_uDt_i/t_i\in T$, so $t_i\mid\Dbar_ut_i$. If $t_i$ is hypertangent with $Dt_i=w_i(1+t_i^2)$, then over $\Fbar$, $D(t_i\pm\sqrt{-1})=w_i(t_i+\sqrt{-1})(t_i-\sqrt{-1})$ is divisible by $t_i\pm\sqrt{-1}$, so each factor of $1+t_i^2$ is special.
\end{proof}

\begin{remark}
The tower-dependence of Example~\ref{ex:towerdep} is confined to type E: a moving prime dividing the polar divisor of an upper generator is \emph{always} special, because $D\mathcal F\subseteq\mathcal F$ contributes no pole at a moving prime --- there is no analogue of the lower term $e_P(1+v_p(\den_0))-1$ that could dominate the upper contribution. The normal/special threshold of Proposition~\ref{prop:typeE} is thus an intrinsically algebraic phenomenon, invisible both in the transcendental theory and at the moving primes here.
\end{remark}

\begin{remark}
For non-monomial upper generators the moving specials are not classified; as in \cite[\S10.3]{Bronstein05}, one searches for special polynomials of low total degree with undetermined coefficients, and the result enters the guess for $s$ with unchanged heuristic status. The type E classification of Proposition~\ref{prop:typeE}, by contrast, does not use the monomial hypothesis and remains exhaustive for arbitrary upper derivations.
\end{remark}

\section{Logands: content, splitting and residues}\label{sec:logands}

This section identifies the possible logands and shows that their divisors are determined by residues of $f$. All statements are unchanged under the base change of Lemma~\ref{lem:basechange}, and we do not distinguish $\cA$ from $\Fbar\cA$ notationally. Recall $\mathcal F=K_0(y)$ and $T=\mathcal F[t_{j+1},\dots,t_n]$.

\begin{lemma}[Gauss valuations and content]\label{lem:gauss}
Let $\fp$ be a height-one prime of $\cO$ and $P=\fp\cA$ the extended prime. Then for $g=\sum_\alpha g_\alpha t^\alpha\in T$ \textup(multi-index $\alpha$ over the upper variables, $g_\alpha\in\mathcal F$\textup),
\[
v_P(g)\;=\;\min_\alpha v_\fp(g_\alpha).
\]
Consequently, for a monic irreducible $p\in T$ with moving prime $P_p$,
\[
\dv_\cA(p)\;=\;P_p\;+\;\sum_\fp\Bigl(\min_\alpha v_\fp(p_\alpha)\Bigr)\,\fp\cA\;=:\;P_p+\gamma(p)\cA ,
\]
where $\gamma(p)\in\Div(\cO)$ is the \emph{content divisor} of $p$; the moving part of $\dv_\cA(p)$ is exactly $P_p$.
\end{lemma}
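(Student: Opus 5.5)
The plan is to reduce the statement to Gauss's lemma for discrete valuation rings by localising. First I identify the local ring: $\cA_P$ contains $\cO_\fp[t_{j+1},\dots,t_n]$, and since $P=\fp\cA$ I expect
\[
\cA_P=\cO_\fp[t_{j+1},\dots,t_n]_{\fp\cO_\fp[t]} .
\]
To verify this, note that $\cA\setminus P$ contains $\cO\setminus\fp$. Moreover, localising $\cA$ at $\cO\setminus\fp$ gives $\cO_\fp[t]$. Here $\fp\cO_\fp[t]=\pi\cO_\fp[t]$ is a prime ideal, because $\cO_\fp[t]/\pi=(\cO_\fp/\pi)[t]$ is a polynomial ring over a field, hence a domain. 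This identification shows two things: $P$ really is height one with $\cA_P$ a DVR, and a uniformiser $\pi$ of $\cO_\fp$ is a uniformiser of $\cA_P$. In particular $v_P|_{\mathcal F^*}=v_\fp$, which is consistent with the proof of Proposition~\ref{prop:typeE}.

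Next comes the valuation formula. Let $g\in T$ and set $k=\min_\alpha v_\fp(g_\alpha)$. Write $g=\pi^k g'$, so that $g'\in\cO_\fp[t]$ has at least one coefficient that is a $\fp$-unit. The image of $g'$ in $(\cO_\fp/\pi)[t]$ is then nonzero, so $g'\notin\pi\cO_\fp[t]$ and $g'$ is a unit of $\cA_P$. Hence $v_P(g)=k$. This step is the multivariate Gauss lemma, and it is the main point: the whole argument hinges on the reduction modulo $\pi$ being an integral domain, which is the content of the previous paragraph.

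For the divisor of a monic irreducible $p\in T$, I would compute $v_Q(p)$ prime by prime over the height-one primes $Q$ of $\cA$. If $Q$ is of type E, say $Q=\fp\cA$, then $v_Q(p)=\min_\alpha v_\fp(p_\alpha)$ by the formula just proved. Only finitely many $\fp$ contribute, namely those dividing numerators or denominators of the finitely many coefficients $p_\alpha$, so $\gamma(p)$ is a genuine divisor. If $Q$ is a moving prime, then by Proposition~\ref{prop:typeM} we have $\cA_Q=T_{(p')}$ for the corresponding monic irreducible $p'$ of the UFD $T$. Therefore $v_Q(p)=v_{p'}(p)$, which equals $1$ if $p'=p$ and $0$ otherwise, since $p$ is irreducible and both are monic.

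Summing the two cases gives $\dv_\cA(p)=P_p+\gamma(p)\cA$, where $\gamma(p)\cA$ denotes the extension of divisors. The type E part contributes nothing moving, so the moving part of $\dv_\cA(p)$ is exactly $P_p$. The only step needing care beyond bookkeeping is confirming that every height-one prime is of type E or moving, which the dichotomy following Proposition~\ref{prop:ring} already asserts. I would justify it briefly: if $Q\cap\cO\neq0$, it is a nonzero prime $\fp$ with $\fp\cA\subseteq Q$. Since $\fp\cA$ is prime (as above) and nonzero, height one forces $Q=\fp\cA$.
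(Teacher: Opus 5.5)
Your proposal is correct and follows the paper's own proof. Both localise at $\cO\setminus\fp$ to reach $\cO_\fp[t_{j+1},\dots,t_n]$, then localise at $(\pi)$, which is prime because the quotient is a polynomial ring over $\cO_\fp/\fp$. Both then read off $v_P$ by Gauss's lemma and evaluate $v_Q(p)$ separately at extended and moving primes.

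The differences are cosmetic: you write $g=\pi^kg'$ directly instead of clearing a denominator from $\cO$. You also justify the type E/moving dichotomy that the paper takes from the remark after Proposition~\ref{prop:ring}; there you should add the one-line check that $Q\cap\cO$ then has height one.
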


\begin{proof}
Let $\pi$ be a uniformiser of $\fp$. Localising $\cA$ at $P$ inverts $\cO\setminus\fp$, so $\cO_\fp[t_{j+1},\dots,t_n]\subseteq\cA_P$; conversely every $s\in\cA\setminus P$ has $\min_\alpha v_\fp(s_\alpha)=0$, hence is invertible in the localisation of $\cO_\fp[t_{j+1},\dots,t_n]$ at $(\pi)$, and $\cA_P=\cO_\fp[t_{j+1},\dots,t_n]_{(\pi)}$. In this polynomial ring over the DVR $\cO_\fp$, $\pi$ is prime (the quotient is a polynomial ring over the field $\cO_\fp/\fp$), and $\pi^k\mid g$ iff $\pi^k$ divides every coefficient; hence $v_P(g)=\max\{k:\pi^k\mid g\}=\min_\alpha v_\fp(g_\alpha)$ for $g$ with coefficients in $\cO_\fp$, and for general $g\in T$ by multiplicativity after clearing a denominator from $\cO$. The divisor formula follows: at the moving prime of a \emph{different} monic irreducible, $v(p)=0$; at $P_p$, $v(p)=1$; at extended primes, $v(p)$ is the stated minimum.
\end{proof}

\begin{proposition}[Logand splitting]\label{prop:split}
Let $S$ be a finite set of height-one primes of $\cA$, $S=S_E\sqcup S_M$ its split by type, and let $u\in L^*$ have $\dv_\cA(u)$ supported on $S$. For $P\in S_M$ let $p_P\in T$ be its monic irreducible and $n_P:=v_P(u)$. Then
\[
u\;=\;w\cdot\prod_{P\in S_M}p_P^{\,n_P},\qquad w\in\mathcal F^*=K_0(y)^*,
\]
and
\[
\dv_\cO(w)\;=\;\iota\bigl(\dv_\cA(u)\big|_{S_E}\bigr)\;-\;\sum_{P\in S_M}n_P\,\gamma(p_P),
\]
where $\iota$ contracts extended primes to $\cO$. In particular $w$ is an $S'$-unit of $\cO$ for the finite set $S':=\iota(S_E)\cup\bigcup_{P\in S_M}\supp\gamma(p_P)$.
\end{proposition}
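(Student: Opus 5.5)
The plan is to divide off the moving part of $u$ and check that what remains has no moving primes. Set
\[
w:=u\cdot\prod_{P\in S_M}p_P^{-n_P}\in L^*.
\]
By Lemma~\ref{lem:gauss}, the moving part of $\dv_\cA(p_P)$ is exactly $P$. Hence at every moving prime $Q$ we get $v_Q(w)=v_Q(u)-\sum_{P\in S_M}n_P[Q=P]$. This is $n_Q-n_Q=0$ if $Q\in S_M$. It is $0-0=0$ otherwise, since $\dv_\cA(u)$ is supported on $S$ and no $p_P$ has a moving zero at $Q$. So $w$ is a unit at every moving prime of $\cA$.

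The key step is to deduce from this that $w\in\mathcal F^*$. I would pass to the UFD $T=\mathcal F[t_{j+1},\dots,t_n]$. The moving primes of $\cA$ correspond bijectively to the primes of $T$ (the monic irreducibles), with $\cA_P=T_{(p)}$, as in Proposition~\ref{prop:typeM}. Write $w=a/b$ with $a,b\in T$ coprime. For each irreducible $p\in T$, $v_p(w)$ equals $v_{P_p}(w)=0$. Hence $a$ and $b$ have no irreducible factors and are units of $T$, i.e.\ elements of $\mathcal F^*$. Thus $w\in\mathcal F^*=K_0(y)^*$, and the factorisation $u=w\prod_{P\in S_M}p_P^{n_P}$ follows.

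For the divisor formula, I would use that $\dv_\cA$ is a homomorphism and apply Lemma~\ref{lem:gauss} to each $p_P$:
\[
\dv_\cA(w)=\dv_\cA(u)-\sum_{P\in S_M}n_P\bigl(P+\gamma(p_P)\cA\bigr).
\]
Since $w\in\mathcal F^*$, the valuation $v_{\fp\cA}$ restricts to $v_\fp$ on $\mathcal F^*$ (Lemma~\ref{lem:gauss} with a constant polynomial). So $\dv_\cA(w)=\dv_\cO(w)\cA$ under extension. Comparing the type E components then gives
\[
\dv_\cO(w)=\iota\bigl(\dv_\cA(u)|_{S_E}\bigr)-\sum_{P\in S_M}n_P\gamma(p_P).
\]
The moving components cancel identically, which is consistent with the first paragraph. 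The support of the right side lies in $\iota(S_E)\cup\bigcup_{P\in S_M}\supp\gamma(p_P)$, which is finite because each $\gamma(p_P)$ is a finite divisor.

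The only point needing care is the identification of moving primes with primes of $T$, i.e.\ that every irreducible of $T$ gives a height-one prime of $\cA$ with $\cA_P=T_{(p)}$. This is the type M description following Proposition~\ref{prop:ring}, together with Proposition~\ref{prop:typeM}. I would cite it rather than reprove it; the rest is bookkeeping.
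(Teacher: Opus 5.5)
Your proposal is correct and follows the paper's proof. Both arguments divide off $\prod_{P\in S_M}p_P^{n_P}$ and use Lemma~\ref{lem:gauss} to see that the quotient $w$ has valuation zero at every moving prime. Both then use unique factorisation in $T$ to place $w$ in $\mathcal F^*$, and read off the type E components from $\dv_\cA(w)=\dv_\cO(w)\cA$. Your appeal to $\cA_P=T_{(p)}$ from Proposition~\ref{prop:typeM}, which identifies $v_{P_p}$ with the $p$-adic valuation on $T$, only spells out a step the paper leaves implicit.
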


\begin{proof}
Set $w:=u\cdot\prod p_P^{-n_P}$. By Lemma~\ref{lem:gauss} and the choice of the $n_P$, $v_Q(w)=0$ at \emph{every} moving prime $Q$: distinct monic irreducibles do not interact, and at moving $Q\notin S$ both $u$ and the $p_P$ are units. Writing $w=a/b$ with $a,b\in T$ coprime, any irreducible factor of $a$ or of $b$ would produce a moving prime with $v_Q(w)\ne0$; hence $a,b\in\mathcal F$ and $w\in\mathcal F^*$. For $w\in\mathcal F^*$ one has $\dv_\cA(w)=\dv_\cO(w)\cA$ (Lemma~\ref{lem:gauss} applied to constant polynomials), so contracting $\dv_\cA(u)-\sum n_P\dv_\cA(p_P)$ gives the displayed formula.
\end{proof}

\begin{remark}[The class-group obstruction, made explicit]\label{rem:content}
Rescaling $p_P$ by $\lambda\in\mathcal F^*$ shifts $\gamma(p_P)$ by the principal divisor $\dv_\cO(\lambda)$, so the class $[\gamma(p_P)]\in\Cl(\cO)$ is an invariant of the moving prime $P$ --- it is the image of $[P]$ under the inverse of the extension isomorphism $\Cl(\cO)\cong\Cl(\cA)$ of Proposition~\ref{prop:ring}(iii), which the factorisation above realises explicitly. When $\cO$ is a UFD the content can be normalised away and the splitting is exact, as in the transcendental theory; in general the $S$-unit computation on the curve must absorb the contents, and this is precisely where the machinery of Parts I--II enters the mixed setting.
\end{remark}

\begin{theorem}[Residues]\label{thm:residues}
Let $P$ be a normal height-one prime with uniformiser $\pi$ --- a prime of $\Fbar\cA$, or the place at infinity of a hypertangent top variable --- and let $f\in L^*$ with $v_P(f)\ge-\delta_P$. Define the \emph{residue}
\[
\tau_P(f)\;:=\;\Bigl(f\,\frac{\pi}{D\pi}\Bigr)\Big|_P\ \in\ \kappa(P).
\]
Then:
\begin{enumerate}
\item[(i)] $\tau_P(f)$ is independent of the choice of $\pi$, and for any $h\in(\Fbar L)^*$ with $e:=v_P(h)\ne0$,
\[
\tau_P(f)\;=\;e\cdot\Bigl(f\,\frac{h}{Dh}\Bigr)\Big|_P\,;
\]
\item[(ii)] $\tau_P(f)\ne0$ if and only if $v_P(f)=-\delta_P$;
\item[(iii)] if $f$ has an elementary integral over $L$, with $(v,c_i,u_i)$ as in Theorem~\ref{thm:structure}, then
\[
\tau_P(f)\;=\;\sum_{i=1}^{r}c_i\,v_P(u_i)\ \in\ \Fbar :
\]
every residue at a normal prime is a constant.
\end{enumerate}
\end{theorem}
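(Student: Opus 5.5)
The plan is a purely local computation driven by Theorem~\ref{thm:shift} and Lemma~\ref{lem:logder}. Everything rests on one observation: at a normal prime, $v_P(\pi/D\pi)=\delta_P$ exactly, by the second display of Lemma~\ref{lem:logder}. Hence $f\pi/D\pi$ has $v_P\ge0$ whenever $v_P(f)\ge-\delta_P$, so $\tau_P(f)$ is a well-defined element of $\kappa(P)$. The same observation gives (ii) at once: $v_P(f\pi/D\pi)=v_P(f)+\delta_P$, which vanishes exactly when $v_P(f)=-\delta_P$, and only then is the image in $\kappa(P)$ nonzero. For the place at infinity of a hypertangent variable, I will simply note that all the ingredients (Theorem~\ref{thm:shift}, Lemma~\ref{lem:logder}) are local and hold for any discrete valuation with finite pole order; this is the remark after Theorem~\ref{thm:structure}.

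For (i), take $h$ with $v_P(h)=e\ne0$ and write $h=w\pi^e$ with $w$ a $P$-unit. Then $Dh/h=e\,D\pi/\pi+Dw/w$, where $v_P(Dw/w)\ge1-\delta_P$ by the last clause of Theorem~\ref{thm:shift}, while $v_P(D\pi/\pi)=-\delta_P$. So $Dh/h=e\,(D\pi/\pi)(1+\epsilon)$ with $v_P(\epsilon)\ge1$, and therefore $h/Dh=e^{-1}(\pi/D\pi)(1+\epsilon')$ with $v_P(\epsilon')\ge1$. Multiplying by $f$, which satisfies $v_P(f\pi/D\pi)\ge0$, the correction term lies in $P$. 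This gives $e\,(fh/Dh)|_P=\tau_P(f)$. Independence of $\pi$ is the special case where $h$ is another uniformiser and $e=1$.

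For (iii), I will use the representation of Theorem~\ref{thm:structure} and multiply \eqref{eq:liou} by $\pi/D\pi$, whose valuation is $\delta_P$. The logarithmic part is handled by the expansion from Lemma~\ref{lem:logder}:
\[
\sum_i c_i\frac{Du_i}{u_i}=\Bigl(\sum_i c_iv_P(u_i)\Bigr)\frac{D\pi}{\pi}+\varepsilon,\qquad v_P(\varepsilon)\ge1-\delta_P.
\]
After multiplying by $\pi/D\pi$, this contributes $\sum_ic_iv_P(u_i)$ plus an element of $P$. For the rational part, Theorem~\ref{thm:structure}(i) gives $v_P(v)\ge\min(0,v_P(f)+\delta_P)=0$, since $v_P(f)\ge-\delta_P$. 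If $v_P(v)>0$ then $v_P(Dv)=v_P(v)-\delta_P\ge1-\delta_P$ by Theorem~\ref{thm:shift}; if $v_P(v)=0$ the last clause gives the same bound. Either way $Dv\cdot\pi/D\pi\in P$. Reducing modulo $P$ then yields $\tau_P(f)=\sum_ic_iv_P(u_i)\in\Fbar$, using the base change of Lemma~\ref{lem:basechange} to work over $\Fbar\cA$, where the $c_i$ live.

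I expect the main obstacle to be the bookkeeping in (i) and at the hypertangent infinite place. One must check that the $P$-unit correction $Dw/w$ is genuinely subordinate, which is exactly where normality is used. One must also check that the residue field $\kappa(P)$ contains $\Fbar$ compatibly, so that the constants $c_i$ reduce to themselves. Both follow from the ring $\cA_P$ being $\Dbar_P$-stable, so no cancellation analysis is needed beyond that.
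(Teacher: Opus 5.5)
Your proof is correct and follows the paper's argument essentially step for step. Well-definedness and (ii) come from $v_P(\pi/D\pi)=\delta_P$. The $h$-formula comes from $Dh/h=e\,D\pi/\pi+Dw/w$ with the unit term subordinate; you get independence of $\pi$ as the case $e=1$, where the paper does it separately by the same computation. Part (iii) comes from multiplying the Liouville representation by $\pi/D\pi$ once Theorem~\ref{thm:structure}(i) forces $v_P(v)\ge0$.

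One small misattribution in your last paragraph: $\Fbar\hookrightarrow\kappa(P)$ holds simply because $\Fbar$ is a field contained in $\Fbar\cA_P$, not because $\cA_P$ is $\Dbar_P$-stable.
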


\begin{proof}
$v_P(f\,\pi/D\pi)=v_P(f)+\delta_P\ge0$ by Theorem~\ref{thm:shift}, so the reduction is defined, and it vanishes iff the valuation is positive, which is (ii). Replacing $\pi$ by $w\pi$ ($w$ a $P$-unit) changes $D\pi/\pi$ by $Dw/w$, of valuation $\ge1-\delta_P$; this alters $f\pi/D\pi$ by an element of valuation $\ge1$, leaving the reduction unchanged. For the $h$-formula, write $h=u_0\pi^{e}$: then $Dh/h=e\,D\pi/\pi+Du_0/u_0$ with $v_P(Du_0/u_0)\ge1-\delta_P$, so
\[
\frac{h\,D\pi}{\pi\,Dh}\;=\;\frac{D\pi/\pi}{Dh/h}\;\equiv\;\frac1e\pmod P ,
\]
and $(f\,h/Dh)|_P=\tfrac1e\,\tau_P(f)$.

(iii) By Theorem~\ref{thm:structure}(i), $v_P(v)<0$ would force $v_P(f)\le-\delta_P-1$; hence $v_P(v)\ge0$ and $v_P(Dv)\ge1-\delta_P$. By Lemma~\ref{lem:logder},
\[
f\;=\;Dv\;+\;\rho\,\frac{D\pi}{\pi}\;+\;\varepsilon,\qquad \rho:=\sum_ic_i\,v_P(u_i),\quad v_P(\varepsilon)\ge1-\delta_P .
\]
Multiplying by $\pi/D\pi$ (of valuation $\delta_P$) and reducing mod $P$ kills the first and third terms (valuation $\ge1$) and leaves $\tau_P(f)=\rho$, which lies in $\Fbar$ since $c_i\in\Fbar$ and $v_P(u_i)\in\Z$.
\end{proof}

\begin{corollary}[The logand divisors are determined]\label{cor:divisors}
Suppose $f$ has an elementary integral over $L$, so that $\tau_P(f)\in\Fbar$ at every normal $P$. Let $r_1,\dots,r_l$ be a $\Q$-basis of the $\Q$-span of $\{\tau_P(f)\}_P$, write $\tau_P(f)=\sum_k\rho_{k,P}\,r_k$ with $\rho_{k,P}\in\Q$, and set
\[
\Delta_k\;:=\;\sum_P\rho_{k,P}\,P\ \in\ \Div(\Fbar\cA)\otimes\Q
\]
\textup(a finite sum: $\tau_P(f)\ne0$ only at the normal primes of $S$\textup). Then there are $M\in\Z_{>0}$ and $\tilde u_1,\dots,\tilde u_l\in(\Fbar L)^*$ such that
\begin{enumerate}
\item[(i)] $v_P(\tilde u_k)=M\rho_{k,P}$ at every normal prime $P$, i.e.\ the normal part of $\dv(\tilde u_k)$ is $M\Delta_k$, and
\[
f\;-\;\sum_{k=1}^{l}\frac{r_k}{M}\,\frac{D\tilde u_k}{\tilde u_k}\;=\;Dv\;+\;\sum_h\beta_h\frac{D\nu_h}{\nu_h}
\]
where every $\nu_h$ has divisor supported on the \emph{special} primes and $\beta_h\in\Fbar$;
\item[(ii)] by Proposition~\ref{prop:split}, each such $\nu_h$ --- and each $\tilde u_k$, up to its special part --- factors as an $S'$-unit of $\cO$ times monic irreducibles of $T$; in particular, elementary integrability requires each $M\Delta_k$ to be realisable, its moving part by the monic irreducibles $p_P$ \textup(with their contents\textup) and the remaining $\cO$-divisor by an $S'$-unit of $\cO$ --- Trager's problem of points of finite order, solved for our curves by the machinery of Parts I--II.
\end{enumerate}
\end{corollary}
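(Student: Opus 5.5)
We start from the representation \eqref{eq:liou} furnished by Theorem~\ref{thm:structure}, with $c_1,\dots,c_r$ linearly independent over $\Q$ and every $u_i$ an $S$-unit. By Theorem~\ref{thm:residues}(iii), $\tau_P(f)=\sum_ic_iv_P(u_i)$ at every normal $P$. So the $\Q$-span of the residues lies inside the $\Q$-span of the $c_i$. We then change basis to make the residue basis $r_1,\dots,r_l$ part of a basis of the span of the $c_i$.

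Concretely, we extend $r_1,\dots,r_l$ to a $\Q$-basis $r_1,\dots,r_l,r_{l+1},\dots,r_N$ of $\operatorname{span}_\Q(c_1,\dots,c_r)$ and write $c_i=\sum_k a_{ik}r_k$ with $a_{ik}\in\Q$. We choose $M\in\Z_{>0}$ clearing all denominators of the $a_{ik}$ and set
\[
U_k:=\prod_i u_i^{\,Ma_{ik}}\in(\Fbar L)^*.
\]
Then
\[
\sum_ic_i\frac{Du_i}{u_i}=\sum_{k=1}^N\frac{r_k}{M}\frac{DU_k}{U_k},
\]
and at each normal $P$ we have $\tau_P(f)=\sum_k r_k\,\frac1M v_P(U_k)$. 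Uniqueness of coordinates in the basis $(r_k)$ gives $v_P(U_k)=M\rho_{k,P}$ for $k\le l$, and $v_P(U_k)=0$ for $k>l$.

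For $k\le l$ we take $\tilde u_k:=U_k$, which gives the first clause of (i). For $k>l$ we take $\nu_h:=U_k$ with $\beta_h:=r_k/M$. These have zero valuation at every normal prime, and their divisors are supported on $S$ (as products of $S$-units). The normal primes of $S$ are killed, so the divisors are supported on the special primes. This yields the displayed identity in (i) with the same $v$.

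One subtlety: $\tilde u_k$ may also carry special-prime valuations. That is acceptable, because (i) only constrains the normal part. We also have to ensure that only finitely many $P$ have nonzero $\tau_P(f)$. This follows from Theorem~\ref{thm:residues}(ii), which gives nonvanishing only when $v_P(f)=-\delta_P$, hence $P\in S$.

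Part (ii) is then a direct application of Proposition~\ref{prop:split}, with $S$ the finite support (finiteness from Theorem~\ref{thm:specials} in the monomial case). The $\nu_h$ and the $\tilde u_k$ have divisors supported on $S$, so they factor as $w\prod p_P^{n_P}$ with $w$ an $S'$-unit of $\cO$. The realisability statement is read off from the divisor formula for $\dv_\cO(w)$.

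The main obstacle is the bookkeeping in the base change: the span of the $c_i$ may be strictly larger than the span of the residues. The extended basis handles this, and the key fact is that the extra $r_k$ contribute zero residue at every normal prime. We also need to check that the $\Q$-coordinates of $\tau_P(f)$ are well defined regardless of how $r_1,\dots,r_l$ was chosen, which holds because it is a basis of the residue span.
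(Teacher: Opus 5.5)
Your proof is correct. Its core is the paper's: rewrite the $c_i$ in a $\Q$-basis that contains the residue basis, recombine the $u_i$ into products with integer exponents, and read off $v_P(\tilde u_k)=M\rho_{k,P}$ from Theorem~\ref{thm:residues}(iii) by comparing coordinates.

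The one genuine difference is that you extend $r_1,\dots,r_l$ to a basis $r_1,\dots,r_N$ of $\operatorname{span}_\Q(c_1,\dots,c_r)$. The paper instead writes $Mc_i=\sum_{k\le l}m_{ik}r_k$ directly and concludes that the identity in (i) holds with an \emph{empty} residual sum. That tacitly assumes every $c_i$ lies in the span of the residues, which fails whenever some logand is invisible to the residues. In \S\ref{ex:unit}, for instance, every residue vanishes, so $l=0$, yet the unit $x+y$ enters with coefficient $1$. Your extra basis vectors $r_{l+1},\dots,r_N$ produce exactly the residual logands $\nu_h$ in such cases. Your coordinate comparison shows they have zero valuation at every normal prime. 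So your argument covers the general case that the paper's wording skips.

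What the paper's proof adds, and the existential statement does not need, is a remark about the algorithm's situation. Any other candidates $\tilde u_k'$ with the same normal divisors $M\Delta_k$ work equally well, because the quotients $\tilde u_k/\tilde u_k'$ simply join the residual sum.

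Two small simplifications in your write-up:
\begin{itemize}
\item The support of each $\nu_h$ on the special primes follows at once from $v_P(\nu_h)=0$ at every normal $P$, since the normal/special dichotomy is exhaustive; the $S$-unit property is needed only for the finiteness used in (ii).
\item The independence that gives $v_P(U_k)=M\rho_{k,P}$ is that of the $r_k$, not of the $c_i$.
\end{itemize}
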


\begin{proof}
Write $Mc_i=\sum_km_{ik}r_k$ with $m_{ik}\in\Z$ and a common denominator $M$, and set $\tilde u_k:=\prod_iu_i^{m_{ik}}$; then $\sum_ic_iDu_i/u_i=\sum_k(r_k/M)D\tilde u_k/\tilde u_k$, so the displayed identity holds with an \emph{empty} residual sum for this choice. At a normal $P$, Theorem~\ref{thm:residues}(iii) and the $\Q$-independence of the $c_i$ give
\[
\rho_{k,P}\;=\;\frac1M\sum_im_{ik}\,v_P(u_i)\;=\;\frac1M\,v_P(\tilde u_k),
\]
which is (i) for these $\tilde u_k$; their special parts are unconstrained. If instead candidates $\tilde u_k'$ with the same normal divisors $M\Delta_k$ are used (as the algorithm must), the discrepancies $\tilde u_k/\tilde u_k'$ are units at every normal prime, i.e.\ have divisor supported on the special primes: they join the residual sum, whose logands $\nu_h$ are as claimed. Part (ii) is Proposition~\ref{prop:split} applied to the $\nu_h$ and to the realisation of $M\Delta_k$.
\end{proof}

\begin{proposition}[Canonical residue at any pole order]\label{rem:deepres}
Let $P$ be a normal height-one prime (of $\Fbar\cA$, or the hypertangent place at infinity) with uniformiser $\pi$, and let $\lambda_P\in\kappa(P)^*$ be the class of $\pi^{\delta_P-1}D\pi$, which is independent of $\pi$. For every $f\in L$ there is $w\in L$ with $v_P(f-Dw)\ge-\delta_P$, and the class of $f-Dw$ in $\pi^{-\delta_P}\cA_P/\pi^{1-\delta_P}\cA_P$ does not depend on $w$. The \emph{canonical residue}
\[
\hat\tau_P(f)\;:=\;\tau_P(f-Dw)\;=\;\overline{(f-Dw)\,\pi/D\pi}\ \in\ \kappa(P)
\]
is $\Fbar$-linear, vanishes on $DL$, satisfies $\hat\tau_P(Du/u)=v_P(u)$, and equals $\tau_P(f)$ when $v_P(f)\ge-\delta_P$. Consequently, if $f$ has an elementary integral with data $(v,c_i,u_i)$ as in Theorem~\ref{thm:structure}, then $\hat\tau_P(f)=\sum_ic_iv_P(u_i)\in\Fbar$ at every normal prime, whatever the pole order of $f$ there; Corollary~\ref{cor:divisors} holds with $\hat\tau_P$ in place of $\tau_P$ and no hypothesis on $S$.
\end{proposition}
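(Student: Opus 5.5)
The plan is to reduce everything to the valuation lemma (Theorem~\ref{thm:shift}) and Lemma~\ref{lem:logder}, by peeling off the polar part of $f$ at $P$ one order at a time.

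\emph{Step 1: $\lambda_P$ is well defined.} Replace $\pi$ by $\pi'=u\pi$ with $u$ a $P$-unit. Then $\pi'^{\delta_P-1}D\pi'=u^{\delta_P}\pi^{\delta_P-1}D\pi+u^{\delta_P-1}\pi^{\delta_P}Du$. The second term has valuation $\ge\delta_P-\eta_P=1$. So the class changes by the factor $\bar u^{\delta_P}$. Strictly, then, $\lambda_P$ is well defined only up to $\delta_P$-th powers, and I would read the statement that way. What matters below is that $\pi^{\delta_P-1}D\pi$ is a $P$-unit, by Theorem~\ref{thm:shift}. Exact invariance holds for the pairing $\overline{(\cdot)\,\pi/D\pi}$ used in $\hat\tau_P$, and that is what I will actually use.

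\emph{Step 2: existence of $w$.} Induct on $N=-v_P(f)-\delta_P>0$. Write $f=a\pi^{-\delta_P-N}+(\text{higher})$ with $a$ a $P$-unit. Take $w_1=c\,\pi^{-N}$, where $c\in\cA_P$ lifts $-\bar a/(N\bar\lambda)$ and $\bar\lambda$ is the class of $\pi^{\delta_P-1}D\pi$ for this fixed $\pi$. Then
\[
Dw_1=-Nc\,\pi^{-N-1}D\pi+\pi^{-N}Dc,
\]
and the second term has valuation $\ge -N-\eta_P=-N-\delta_P+1$. The leading coefficient of $Dw_1$ at order $-N-\delta_P$ is $-Nc\lambda\equiv a$. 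Choosing the sign so that $f-Dw_1$ cancels, we get $v_P(f-Dw_1)>v_P(f)$. This step needs $N\neq0$ and characteristic $0$.

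The only subtlety is that $c$ must come from $L$; I take $c\in\cA$, which is possible because $\cA_P$ and $\cA$ have the same residue field. I also take $\pi\in\cA$, which is legitimate since $\pi$ can be chosen with $v_P(\pi)=1$. After finitely many steps, $v_P(f-Dw)\ge-\delta_P$.

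\emph{Step 3: independence of $w$.} Suppose $w,w'$ both work. Then $g=w-w'$ satisfies $v_P(Dg)\ge-\delta_P$.
\begin{itemize}
\item If $v_P(g)<0$, Theorem~\ref{thm:shift} gives $v_P(Dg)=v_P(g)-\delta_P<-\delta_P$, a contradiction.
\item Otherwise $v_P(Dg)\ge1-\delta_P$, by the theorem or its last clause.
\end{itemize}
Hence the class modulo $\pi^{1-\delta_P}\cA_P$ is unchanged, and $\hat\tau_P$ is well defined. By Theorem~\ref{thm:residues}(i) it does not depend on $\pi$.

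\emph{Step 4: the listed properties.}
\begin{itemize}
\item Linearity: $w$ can be chosen additively, and $\Fbar$-scalars commute with $D$.
\item Vanishing on $DL$: for $f=Dg$, take $w=g$.
\item Agreement with $\tau_P$ when $v_P(f)\ge-\delta_P$: take $w=0$.
\item $\hat\tau_P(Du/u)=v_P(u)$: Lemma~\ref{lem:logder} gives $v_P(Du/u)\ge-\delta_P$, so take $w=0$. Then Theorem~\ref{thm:residues}(i), applied with $h=u$ when $v_P(u)\neq0$, gives $\overline{(Du/u)\,\pi/D\pi}=v_P(u)$. When $v_P(u)=0$ the valuation is $\ge1-\delta_P$, so the residue is $0$.
\end{itemize}

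\emph{Step 5: consequences.} Apply $\hat\tau_P$ to \eqref{eq:liou}. By linearity, vanishing on $DL$ and Step~4,
\[
\hat\tau_P(f)=\sum_i c_i\,v_P(u_i)\in\Fbar,
\]
with no valuation hypothesis on $f$. Corollary~\ref{cor:divisors} then goes through verbatim, since its proof used only this identity together with the $\Q$-independence of the $c_i$.

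The main obstacle I expect is the bookkeeping in Step~2: keeping the correction terms inside $L$ and controlling $\pi^{-N}Dc$ by $\eta_P$ rather than by $v_P(c)$. The $\lambda_P$ normalisation issue in Step~1 is a second point to handle carefully.
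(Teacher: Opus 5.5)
Your proof is correct and follows the paper's route. Both use the same order-by-order local Hermite reduction with correction terms $c\pi^{-k}$, the same well-definedness argument via Theorem~\ref{thm:shift}, and read off the properties from Lemma~\ref{lem:logder} and Theorem~\ref{thm:residues}; your Step~1 caveat is also right, since replacing $\pi$ by $u\pi$ multiplies the class of $\pi^{\delta_P-1}D\pi$ by $\bar u^{\delta_P}$ (already $\pi=2x$ versus $\pi=x$ for $D=d/dx$), so $\lambda_P$ is not independent of $\pi$ as the statement claims, which is harmless because $\hat\tau_P$ involves only $\overline{(\cdot)\,\pi/D\pi}$.
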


\begin{proof}
\emph{Existence.} If $v_P(f)\ge-\delta_P$ take $w=0$. Otherwise let $v_P(f)=-k-\delta_P$ with $k\ge1$ and let $\mu\in\kappa(P)^*$ be the class of $\pi^{k+\delta_P}f$. For $c\in\cA_P$, $D(c\pi^{-k})=\pi^{-k}Dc-kc\pi^{-k-1}D\pi$; the first term has valuation $\ge-k+1-\delta_P$ (Theorem~\ref{thm:shift}, last clause), the second has valuation $-k-\delta_P$ with leading class $-k\bar c\lambda_P$. Choosing $\bar c=-\mu/(k\lambda_P)$ (possible: $\cA_P\to\kappa(P)$ is surjective and $k\lambda_P\ne0$ in characteristic $0$) gives $v_P(f-D(c\pi^{-k}))>-k-\delta_P$; iterating strictly increases the valuation and terminates.

\emph{Well-definedness.} If $w,w'$ both reduce $f$, then $g:=w-w'$ has $v_P(Dg)\ge-\delta_P$; if $v_P(g)<0$ this contradicts $v_P(Dg)=v_P(g)-\delta_P<-\delta_P$ (Theorem~\ref{thm:shift}), so $g\in\cA_P$ and $v_P(Dg)\ge1-\delta_P$: the two reduced forms agree modulo $\pi^{1-\delta_P}\cA_P$. Independence of $\pi$ is as in Theorem~\ref{thm:residues}(i).

\emph{Properties.} Linearity is clear (reductions add). For $f=Dg$ take $w=g$. For $f=Du/u$, Lemma~\ref{lem:logder} gives $v_P(Du/u)\ge-\delta_P$ already, so $w=0$ and $\hat\tau_P=\tau_P=v_P(u)$. The final claim follows by applying Theorem~\ref{thm:residues}(iii) to $f-Dw$, which has the elementary integral $v-w+\sum c_i\log(u_i)$.
\end{proof}

\begin{remark}
The local Hermite reduction in the proof is the tower analogue of the Rothstein--Trager residue. When $\delta_P=1$ \emph{and} $D$ acts trivially on the coefficients of the completion $\kappa(P)((\pi))$ --- a constant place of a constant curve with $D=d/dx$, the case of \S\ref{ex:chebyshev} --- the reduction can be read off a Laurent expansion, $\hat\tau_P(f)$ being the coefficient of $\pi^{-1}$ divided by $\overline{D\pi}$, and this is what the implementation does at Hermite-order poles. The condition is not decorative: in the completion $D$ takes the form $(D\pi)\,\partial_\pi+\partial$ with $\partial$ the induced derivation of the coefficient field, and the $\pi^{-1}$-coefficient of $D(c\pi^{-k})$ is $\partial c$ when $k=1$, not zero --- for $t=\log(x)$ and $P=(t)$, the Laurent coefficient of $D(g(x)/t)$ at $t^{-1}$ is $g'(x)$. Off the constant-coefficient case --- a deep pole at a place with non-constant coordinates, or at a normal prime of $\delta_P\ge2$ --- the implementation runs the reduction of the proof as written, with $\lambda_P=\overline{\pi^{\delta_P-1}D\pi}$ in place of the leading coefficient, all arithmetic performed in $\kappa(P)$ by reduction modulo $p$ with no substitution of a root; this is the general case of \textsc{CanonicalResidue} (Algorithm~2) below, and it is what lets the transcendental instance of the method reproduce Bronstein's parallel Risch \cite{Bronstein07} verbatim.
\end{remark}

\begin{remark}[Residues are blind to units]\label{rem:blind}
A logand with \emph{empty} divisor --- a unit of $\cO$, such as the Pell unit $x+y$ over $y^2=x^2+1$ --- has $v_P(u)=0$ everywhere and contributes nothing to any residue. Such logands can nevertheless be forced (Example~\ref{ex:unit}), so the candidate list must always include generators of $\cO^*_{S}/F^*$ alongside the residue-determined divisors and the special logands; their coefficients, like those of the special logands, are unknowns of the linear system. This is the parallel-method reflection of the unit-rank computations of Parts I--II.
\end{remark}

\begin{remark}
At moving primes, Theorem~\ref{thm:residues}(iii) is the constancy criterion for moving residues; applied to the tower of Example~\ref{ex:bronsteinE} it reproduces the roots of the resultant $R(z)$ of \cite{Bronstein89} exactly, and the $h$-formula of (i) recovers, with $e=e_P$, the ramification-weighted residues $\tau_p(f)=r_p\,\mathrm{value}_p(f\,P/P')$ of \cite{Bronstein89}. The type E residues feed the $S$-unit and torsion machinery of Part I through Corollary~\ref{cor:divisors}(ii). No residues are computed at special primes: their logands enter the ansatz with unknown coefficients, exactly as in \cite[\S10.3]{Bronstein05}.
\end{remark}

\begin{remark}[Structure of the tower]
Multiplicative dependencies among candidate logands over $L$ are controlled by $\cO_S^*/F^*$, which Part I computes for $q$ univariate: e.g.\ $\log(x-\sqrt{x^2+1})=-\log(x+\sqrt{x^2+1})$ arises from the relation $(x+y)(x-y)=-1$ in $\cO^*$. A self-contained structure theorem for these towers is left open; assumption (T4) is taken as given.
\end{remark}

\section{Degree bounds}\label{sec:degree}

\subsection{The top variable}\label{sec:degreetop}

Let $t:=t_n$ and $E:=K_0(y)(t_{j+1},\dots,t_{n-1})$, so that $L=E(t)$, and let $v_\infty$ be the degree valuation of $L$ over $E$: $v_\infty(a/b)=\deg_tb-\deg_ta$ for $a,b\in E[t]$, with uniformiser $\pi:=1/t$, valuation ring $\cV_\infty=E[\pi]_{(\pi)}$ and residue field $E$. Assume that $t$ is a monomial over $E$: primitive ($Dt=w\in E$), hyperexponential ($Dt=wt$) or hypertangent ($Dt=w(1+t^2)$), with $w\in E^*$. Two observations frame the section. First, the radical is invisible at $v_\infty$: it lies inside the coefficient field $E$, so $L/E$ is a rational function field with a \emph{single} place over $t=\infty$, and no ramification correction arises; the ramification of the curve at infinity concerns bounds in the curve variables, not the top one (Remark~\ref{rem:curvevars}). Second, since $\cO\subset E$, the basis elements $w_i$ are $v_\infty$-units, so for $f=\sum_i(a_i/d)w_i$,
\begin{equation}\label{eq:vinff}
v_\infty(f)\ \ge\ \deg_td-\deg_ta,\qquad \deg_ta:=\max_i\deg_ta_i\,.
\end{equation}

\begin{lemma}\label{lem:atinfty}
$\eta_{v_\infty}=0$ in all three cases, and:
\begin{enumerate}
\item[(a)] if $t$ is primitive, then $v_\infty$ is special, $v_\infty(D\pi/\pi)=1$, and $v_\infty(Dg)\ge v_\infty(g)$ for all $g\in L^*$;
\item[(b)] if $t$ is hyperexponential, then $v_\infty$ is special, $D\pi/\pi=-w$, and $v_\infty(Dg)\ge v_\infty(g)$ for all $g\in L^*$;
\item[(c)] if $t$ is hypertangent, then $v_\infty$ is normal with $\delta_{v_\infty}=1$.
\end{enumerate}
In cases \textup{(a)} and \textup{(b)}, moreover, $v_\infty(Du/u)\ge0$ for every $u\in(\Fbar L)^*$.
\end{lemma}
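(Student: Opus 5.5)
The plan is to work entirely inside the valuation ring $\cV_\infty=E[\pi]_{(\pi)}$ with $\pi=1/t$, and to compute $D\pi$ directly in each case. Two facts do most of the work. First, $D$ maps $E$ into $E$; this holds because $E=K_0(y)(t_{j+1},\dots,t_{n-1})$ is the field below $t$ and is stable under $D$ in the monomial setting. Second, $E\subseteq\cV_\infty$ consists of $v_\infty$-units, so $D$ acts coefficientwise on $E$-coefficients and introduces no poles there.

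Since $D\pi=-Dt/t^2$, we get
\[
D\pi=-w\pi^2 \text{ (primitive)},\qquad D\pi=-w\pi \text{ (hyperexponential)},\qquad D\pi=-w(1+\pi^2) \text{ (hypertangent)}.
\]
In every case $D\pi\in\cV_\infty$. Every $a\in\cV_\infty$ can be written as $b/c$ with $b,c\in E[\pi]$ and $c(0)\ne0$. By the Leibniz rule $D$ maps $E[\pi]$ into $\cV_\infty$, so it maps $\cV_\infty$ into itself. Applying Definition~\ref{def:eta} with generators $E\cup\{\pi\}$, in the local form of Lemma~\ref{lem:etagen}, this gives $\eta_{v_\infty}=0$.

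Normality is then read off from $v_\infty(D\pi)$ with $\Dbar=D$.
\begin{enumerate}
\item[(a)] In the primitive case $v_\infty(D\pi)=2$, so $v_\infty$ is special and $v_\infty(D\pi/\pi)=1$.
\item[(b)] In the hyperexponential case $D\pi/\pi=-w$, a unit, so $v_\infty(D\pi)=1$ and $v_\infty$ is special.
\item[(c)] In the hypertangent case $v_\infty(D\pi)=v_\infty(w)=0$, so $v_\infty$ is normal and $\delta_{v_\infty}=1+0=1$.
\end{enumerate}

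For the inequality $v_\infty(Dg)\ge v_\infty(g)$ in cases (a) and (b), write $g=u\pi^k$ with $u\in\cV_\infty^*$. Then
\[
\frac{Dg}{g}=k\,\frac{D\pi}{\pi}+\frac{Du}{u}.
\]
Here $v_\infty(D\pi/\pi)\ge0$ by the computation above, and $v_\infty(Du/u)=v_\infty(Du)\ge0$ because $D\cV_\infty\subseteq\cV_\infty$ and $u$ is a unit. Hence $v_\infty(Dg/g)\ge0$. The same identity with $u\in(\Fbar L)^*$ gives $v_\infty(Du/u)\ge0$. For that we first note, as in Lemma~\ref{lem:basechange}, that the constant extension leaves $v_\infty$ unramified and keeps $D(\Fbar\cV_\infty)\subseteq\Fbar\cV_\infty$, since $D\Fbar=0$.

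The main point needing care is the claim $D\cV_\infty\subseteq\cV_\infty$, which rests on $DE\subseteq E$. This is built into the monomial hypothesis, since the derivatives of the lower upper generators lie in the fields below them. Beyond checking that, every step is a one-line valuation computation.
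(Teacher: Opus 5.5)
Your proof is correct and follows the paper's argument essentially line by line: you compute $D\pi$ from $D\pi=-\pi^2Dt$, use $DE\subseteq E$ and $\cV_\infty=E[\pi]_{(\pi)}$ to get $\eta_{v_\infty}=0$, read normality off $v_\infty(D\pi)$, and derive both inequalities from $Dg/g=k\,D\pi/\pi+Du/u$, handling the constant extension as in Lemma~\ref{lem:basechange}. One small correction: $DE\subseteq E$ already follows from (T3), which puts each $Dt_i$ in $K_0(y)(t_{j+1},\dots,t_i)$, so it does not depend on the monomial hypothesis as you suggest.
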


\begin{proof}
From $D\pi=-\pi^2Dt$ one computes $D\pi=-w\pi^2$, $-w\pi$, $-w(1+\pi^2)$ in the three cases, of $v_\infty$ equal to $2,1,0$ respectively. For $e\in E$, $De\in E$ has $v_\infty\ge0$, and $\cV_\infty$ is the localisation of $E[\pi]$ at $(\pi)$, so the generator computation of Lemma~\ref{lem:etagen} gives $\eta_{v_\infty}=\max(0,-v_\infty(D\pi))=0$. Hence $\Dbar_{v_\infty}=D$ and Definition~\ref{def:normal} reads off $v_\infty(D\pi)$: special, special, and normal with $\delta_{v_\infty}=1+\eta_{v_\infty}=1$. In cases (a),(b), for $g\in L^*$ write $g=u\pi^k$, $u$ a $v_\infty$-unit: $Dg/g=k\,D\pi/\pi+Du/u$ with $v_\infty(D\pi/\pi)\ge0$ and $v_\infty(Du/u)=v_\infty(Du)\ge-\eta_{v_\infty}=0$, so $v_\infty(Dg)\ge v_\infty(g)$; the same identity gives $v_\infty(Du/u)\ge0$ for arbitrary $u\in(\Fbar L)^*$ (the base change of Lemma~\ref{lem:basechange} applies at $v_\infty$ with the same proof).
\end{proof}

\begin{theorem}[Top-variable bounds]\label{thm:top}
Let $f\in L^*$ have an elementary integral over $L$, and let $v,c_i,u_i$ be as in Theorem~\ref{thm:structure}. Then:
\begin{enumerate}
\item[(a)] if $t$ is primitive:\quad $v_\infty(v)\ \ge\ \min\bigl(0,\,v_\infty(f)\bigr)-1$;
\item[(b)] if $t$ is hyperexponential:\quad $v_\infty(v)\ \ge\ \min\bigl(0,\,v_\infty(f)\bigr)$;
\item[(c)] if $t$ is hypertangent:\quad $v_\infty(v)\ \ge\ \min\bigl(0,\,v_\infty(f)+1\bigr)$, and $v_\infty(u_i)=0$ for every $i$ unless $v_\infty(f)\le-1$.
\end{enumerate}
\end{theorem}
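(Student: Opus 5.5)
The plan is to work locally at $v_\infty$ with the Liouville representation \eqref{eq:liou}, exactly as in the proof of Theorem~\ref{thm:structure}, but in cases (a) and (b) with a leading-coefficient comparison replacing the valuation lemma, since $v_\infty$ is special there.

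\emph{Case (c).} This case needs no new work. By Lemma~\ref{lem:atinfty}(c), $v_\infty$ is normal with $\eta_{v_\infty}=0$ and $\delta_{v_\infty}=1$. The remark following Theorem~\ref{thm:structure} observes that its proof is purely local and applies at any discrete valuation with finite pole order, and Lemma~\ref{lem:atinfty} records that the base change of Lemma~\ref{lem:basechange} goes through at $v_\infty$. Hence:
\begin{itemize}
\item Theorem~\ref{thm:structure}(i) gives $v_\infty(v)\ge\min(0,v_\infty(f)+1)$.
\item The argument of Theorem~\ref{thm:structure}(ii), using the $\Q$-independence of the $c_i$, gives $v_\infty(u_i)=0$ for all $i$ whenever $v_\infty(f)\ge1-\delta_{v_\infty}=0$.
\end{itemize}

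\emph{Cases (a) and (b).} The valuation lemma is unavailable here, so I would expand in the completion $E((\pi))$. Since $\eta_{v_\infty}=0$, $D$ maps $\cV_\infty$ into itself and is $\pi$-adically continuous, so it extends to $E((\pi))$. On a series $\sum_i c_it^i$ (finitely many positive $i$) it acts by
\[
D\Bigl(\sum_i c_it^i\Bigr)=\sum_i\bigl(Dc_i\,t^i+i\,c_i\,t^{i-1}Dt\bigr).
\]
This gives the following leading-term formulas:
\begin{itemize}
\item for primitive $t$, the $t^N$- and $t^{N-1}$-coefficients of $Dv$ are $Dc_N$ and $Nc_Nw+Dc_{N-1}$;
\item for hyperexponential $t$, the $t^N$-coefficient of $Dv$ is $Dc_N+Nwc_N$.
\end{itemize}

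Suppose $N:=-v_\infty(v)$ violates the claimed bound. In case (a) this means $N\ge2$ and $N-1>-v_\infty(f)$; in case (b) it means $N\ge1$ and $N>-v_\infty(f)$. By the last clause of Lemma~\ref{lem:atinfty}, $\sum c_iDu_i/u_i$ has $v_\infty\ge0$, i.e.\ no positive powers of $t$. Comparing coefficients in $f-\sum c_iDu_i/u_i=Dv$ then forces the following.
\begin{itemize}
\item In case (b): $D(c_Nt^N)=0$ with $c_N\ne0$. Then $c_Nt^N$ is a constant of $\Fbar L$ lying outside $\Fbar E$, which contradicts (T4) together with the transcendence of $t$ (via Lemma~\ref{lem:basechange}, $\Const(\Fbar L)=\Fbar$).
\item In case (a): first $Dc_N=0$, so $c_N\in\Fbar^*$. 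Then the $t^{N-1}$ coefficient vanishes, because $N-1\ge1$ exceeds both the degree of $f$ and that of the log part. This gives $D\bigl(t+c_{N-1}/(Nc_N)\bigr)=0$, so $t$ would be a constant plus an element of $\Fbar E$, contradicting its transcendence over $E$.
\end{itemize}
In both cases the contradiction yields the claimed lower bound on $v_\infty(v)$.

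\emph{Main obstacle.} The delicate step is justifying the coefficientwise action of $D$ on $E((\pi))$ and the claim that a constant of the completion lying in $\Fbar L$ contradicts (T4). I would handle this by truncation. Let $v_+:=\sum_{i>0}c_it^i\in \Fbar E[t]$ be the polynomial part of $v$. Then $v-v_+$ has $v_\infty\ge0$, and by Lemma~\ref{lem:atinfty} so does its derivative. Consequently the comparison above involves only the genuine polynomial $v_+$ and elements of $\Fbar L$, which avoids any appeal to constants of the completion.
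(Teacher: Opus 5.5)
Your proposal is correct and is essentially the paper's proof. Case (c) is read off from the local proof of Theorem~\ref{thm:structure} at the normal place $v_\infty$ with $\delta=1$. In cases (a) and (b) the paper likewise splits $v$ into its polynomial part in $t$ plus a remainder of nonnegative $v_\infty$, uses $v_\infty\bigl(\sum_ic_iDu_i/u_i\bigr)\ge0$ from Lemma~\ref{lem:atinfty}, and compares the top one or two coefficients of $Dv$ to contradict $\Const_D(L)=F$ and the transcendence of $t$. The differences are cosmetic: the paper argues directly through the intermediate claim $v_\infty(Dv)\le v_\infty(v)+1$ (resp.\ $v_\infty(Dv)=v_\infty(v)$) rather than by contradiction, and your Laurent coefficients should be renamed, since they clash with the Liouville constants $c_i$.
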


\begin{proof}
Write $\Sigma:=\sum_ic_iDu_i/u_i$, so that $f=Dv+\Sigma$; in cases (a),(b), $v_\infty(\Sigma)\ge0$ by the last clause of Lemma~\ref{lem:atinfty}.

(a) We claim: $k:=v_\infty(v)\le-1$ implies $v_\infty(Dv)\le k+1$. Write $v=\sum_{i=0}^{N}v_it^i+v'$ with $N=-k$, $v_N\ne0$, $v_i\in E$ and $v_\infty(v')\ge1$. Then
\[
Dv\;=\;\sum_{i=0}^{N}\bigl(Dv_i+(i+1)\,w\,v_{i+1}\bigr)t^i\;+\;Dv',\qquad v_{N+1}:=0,
\]
and $v_\infty(Dv')\ge v_\infty(v')\ge1$ by Lemma~\ref{lem:atinfty}(a), so $Dv'$ contributes only to powers $t^i$ with $i\le-1$; as $N-1\ge0$, the coefficients of $t^{N}$ and $t^{N-1}$ in $Dv$ are $Dv_N$ and $Dv_{N-1}+Nwv_N$. Suppose both vanish. Then $v_N\in\Const_D(E)\subseteq F^*$, hence $D(v_{N-1}+Nv_Nt)=Dv_{N-1}+Nv_Nw=0$, so $v_{N-1}+Nv_Nt\in\Const_D(L)=F$ and $t\in E$ --- contradicting transcendence, since $Nv_N\ne0$ in characteristic $0$. This proves the claim. Now if $k\le-2$, then $v_\infty(Dv)\le k+1\le-1<0\le v_\infty(\Sigma)$, so $v_\infty(f)=v_\infty(Dv)\le k+1$ and $v_\infty(v)=k\ge v_\infty(f)-1$; if $k\ge-1$ the bound holds trivially. Both cases give (a).

(b) Here $D$ acts diagonally on coefficients: $D(v_it^i)=(Dv_i+iwv_i)\,t^i$. Write $v=\sum_{i=0}^{N}v_it^i+v'$ as in (a). If $k=v_\infty(v)\le-1$, the coefficient of $t^N$ in $Dv$ is $Dv_N+Nwv_N$; were it zero, $D(v_Nt^N)=0$ would give $v_Nt^N\in\Const_D(L)=F^*$, making $t$ algebraic over $E$ --- contradiction. Hence $v_\infty(Dv)=k<0\le v_\infty(\Sigma)$, so $v_\infty(f)=k$, which is (b).

(c) By Lemma~\ref{lem:atinfty}(c), $v_\infty$ is a normal discrete valuation of $\Fbar L$ with $\delta=1$ and finite pole order; as noted at the end of Section~\ref{sec:structure}, Lemma~\ref{lem:logder} and the proof of Theorem~\ref{thm:structure} apply verbatim at such a valuation, yielding $v_\infty(v)\ge\min(0,v_\infty(f)+1)$ and $\sum_ic_iv_\infty(u_i)=0$ --- hence, by $\Q$-independence, $v_\infty(u_i)=0$ for all $i$ --- whenever $v_\infty(f)\ge1-\delta=0>-1$.
\end{proof}

\begin{corollary}[The classical guess, proved at the top]\label{cor:topdeg}
With $f=\sum_i(a_i/d)w_i$ and the denominator of Corollary~\ref{cor:element}, the numerator $b=v\cdot s\prod_ld_l^{\,l-1}\in\cA$ satisfies
\[
\deg_{t_n}(b)\ \le\ \varepsilon\;+\;\max\bigl(\deg_{t_n}a,\ \deg_{t_n}d\bigr)\;+\;\deg_{t_n}(s),
\qquad
\varepsilon=\begin{cases}1,&t_n\ \text{primitive},\\[2pt]0,&t_n\ \text{hyperexponential or hypertangent}.\end{cases}
\]
If $t_n$ is primitive, $s$ may be taken free of $t_n$ \textup(Theorem~\ref{thm:specials}\textup), and the bound is exactly the guess \textup{(10.4)} of \cite[\S10.3]{Bronstein05}, now proved in the top variable.
\end{corollary}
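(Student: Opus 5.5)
The plan is to convert the valuation bound of Theorem~\ref{thm:top} into a degree bound by pure bookkeeping at $v_\infty$. For any $g\in\cA$, written as $g=\sum_i g_iw_i$ with $g_i\in R$, the basis elements $w_i$ are $v_\infty$-units, and they are linearly independent over $E$ with coefficients polynomial in $t=t_n$. So $\deg_t g=-v_\infty(g)$, where $\deg_t g:=\max_i\deg_t g_i$. To justify this I would argue that no cancellation occurs among the leading terms. The leading $t$-coefficients $\mathrm{lc}(g_i)\in F[t_1,\dots,t_{n-1}]$ combine to $\sum_i\mathrm{lc}(g_i)w_i\in E$, which is nonzero by freeness. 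Hence $v_\infty(g)=-\max_i\deg_tg_i$; this sharpens \eqref{eq:vinff} to an equality for elements of $\cA$.

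I then set $c:=s\prod_ld_l^{\,l-1}\in R$, so that $b=vc\in\cA$ by Corollary~\ref{cor:element}, and
\[
\deg_t b=-v_\infty(v)-v_\infty(c)=-v_\infty(v)+\deg_t s+\deg_t\textstyle\prod_ld_l^{\,l-1}.
\]
Theorem~\ref{thm:top} gives, uniformly in the three cases, $-v_\infty(v)\le\varepsilon+\max(0,-v_\infty(f))$. In case (c) this reads $\max(0,-v_\infty(f)-1)\le\max(0,-v_\infty(f))$, which is only weaker, so $\varepsilon=0$ is harmless there. By \eqref{eq:vinff}, $-v_\infty(f)\le\deg_ta-\deg_td$.

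Combining these, I obtain
\[
\deg_t b\le\varepsilon+\max(0,\deg_ta-\deg_td)+\deg_t s+\deg_t\textstyle\prod_ld_l^{\,l-1}.
\]
Since $\prod_l d_l^{\,l-1}$ divides $d_n$, which divides $d$, we have $\deg_t\prod_ld_l^{\,l-1}\le\deg_t d$. Then
\[
\max(0,\deg_ta-\deg_td)+\deg_td=\max(\deg_ta,\deg_td),
\]
which yields the stated bound.

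For the final sentence, I would take the case $t_n$ primitive. The specials listed in Theorem~\ref{thm:specials} contribute to $s$ in the following way.
\begin{itemize}
\item The type E primes are free of all upper variables.
\item The factors of $h_u$ are free of $t_n$, because $Dt_n=w\in E$, so $h_u$ involves only $t_{j+1},\dots,t_{n-1}$.
\item There are no hyperexponential or hypertangent factors in $t_n$.
\end{itemize}
The irreducible factors of $s$ coming from $d_\ast$ are, by construction, those having a special prime. Such a prime is one of the above, and each of these contracts to an irreducible of $R$ free of $t_n$. Hence $\deg_{t_n}s=0$. Comparing with (10.4) of \cite[\S10.3]{Bronstein05} is then a matter of reading off.

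The main obstacle is the step asserting that $\deg_{t_n}s=0$. One must check that an irreducible of $R$ divisible by a special moving prime $p$ of $T$ does not involve $t_n$ when $p$ does not. I would handle this via the content description of Lemma~\ref{lem:gauss}: the contraction of $P_p$ to $R$ is generated by the norm of $p$, which is free of $t_n$.
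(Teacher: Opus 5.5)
Your proposal is correct and follows the paper's proof step for step: $\deg_{t_n}b=-v_\infty(v)+\deg_{t_n}\bigl(s\prod_l d_l^{\,l-1}\bigr)$, then Theorem~\ref{thm:top} (with the hypertangent case absorbed into $\varepsilon=0$), then the estimate \eqref{eq:vinff} and the identity $\max(0,\deg_{t_n}a-\deg_{t_n}d)+\deg_{t_n}d=\max(\deg_{t_n}a,\deg_{t_n}d)$, and finally, for primitive $t_n$, the observation that no special prime of Theorem~\ref{thm:specials} contracts to an irreducible of $R$ involving $t_n$. You also justify two points the paper leaves implicit, and the justifications are sound, with two small precisions: for $\deg_{t_n}b=-v_\infty(b)$ the independence you need is that of the $w_i$ over $F(t_1,\dots,t_{n-1})$, not over $E$ (which contains them); and the generator of $P_p\cap R$ divides, rather than equals, the norm of $p$ (after clearing denominators from $\cO$), which is all the argument requires.
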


\begin{proof}
$\deg_{t_n}(b)=-v_\infty(b)=-v_\infty(v)+\deg_{t_n}\bigl(s\prod_ld_l^{\,l-1}\bigr)\le-v_\infty(v)+\deg_{t_n}(s)+\deg_{t_n}(d)$, since $\deg_{t_n}\prod_ld_l^{\,l-1}\le\deg_{t_n}d_n\le\deg_{t_n}d$. By Theorem~\ref{thm:top}, $-v_\infty(v)\le\max(0,-v_\infty(f))+\varepsilon$, and $-v_\infty(f)\le\deg_{t_n}a-\deg_{t_n}d$ by \eqref{eq:vinff}; finally $\max(0,\deg_{t_n}a-\deg_{t_n}d)+\deg_{t_n}d=\max(\deg_{t_n}a,\deg_{t_n}d)$. For primitive $t_n$, no special prime of Theorem~\ref{thm:specials} involves $t_n$: type E primes contract to $R_0$; $h_u$ is free of $t_n$ because $Dt_n\in E$ and the lower $Dt_i$ do not involve $t_n$; and $t_n$ is neither hyperexponential nor hypertangent.
\end{proof}

\begin{remark}[Sharpness]
All three bounds are attained. Over $y^2=x^2+1$ with $t=\log(x+y)$: $\int\bigl(t/y\bigr)dx=\tfrac12t^2$ attains \textup{(a)} ($v_\infty(f)=-1$, $v_\infty(v)=-2$). With $t=e^{y}$: $\int(x/y)\,t\,dx=t$ attains \textup{(b)}. Over $E=\Q(x)$ with $t=\tan(x)$: $\int t^2\,dx=t-x$ attains \textup{(c)} ($v_\infty(f)=-2$, $v_\infty(v)=-1$).
\end{remark}

\begin{remark}[What remains heuristic]\label{rem:curvevars}
(i) The anticipated ramification correction at $t_n=\infty$ is vacuous: the curve lies in the coefficient field, and $L/E$ has a single place at infinity. Ramification of the curve at infinity enters only for bounds in the \emph{curve} variable, which Section~\ref{sec:lower} treats place by place. (ii) Bounds in the intermediate variables $t_i$, $j<i<n$, and (iii) the exponents of the special primes --- among them the negative powers of a hyperexponential $t_n$, which are poles at the finite special prime $(t_n)$ --- are the subject of Section~\ref{sec:lower}: they are proved at every place where Proposition~\ref{prop:kernelfree} applies, and the lower-variable counterexamples of Part I, \S6.3 are exactly places where it does not (Example~\ref{ex:chainpreview}). (iv) Together with the top-variable theorem of Part I (radical topmost), the top-variable bound is now proved whichever kind of generator --- radical or transcendental monomial --- is topmost.
\end{remark}

\subsection{Places below the top: the leading part of the derivation}\label{sec:lower}

Theorem~\ref{thm:shift} bounds the pole of $v$ at a normal prime by the pole of $f$, and Theorem~\ref{thm:top} bounds the degree in the top variable. The exponents $e_\sigma$ of the special primes (Step~10 of Algorithm~4) and the degrees in the variables below the top (Step~15) were left, until now, to the guess of \cite[\S10.3]{Bronstein05}; Remark~\ref{rem:curvevars} records them as heuristic. This subsection treats both at once. At a normal prime $D$ lowers every nonzero valuation by the same amount $\delta_P$, which is why no cancellation analysis was needed there. At a special prime, or at a place at infinity, $D$ shifts valuations by \emph{at least} some amount, with equality for most elements but not for all, and the exceptions are precisely what a bound must control. We isolate the homogeneous component of $D$ of lowest weight --- its \emph{leading part} --- and prove that the pole of the integral exceeds the pole of the integrand by at most the shift of the uniformiser whenever the leading part has no kernel in negative degrees, or, when the uniformiser does not attain the minimal shift, whenever its first nonvanishing coefficient is not a derivative in the residue field. Both hypotheses are decided in the towers of this paper (Proposition~\ref{prop:kernelfree}), the top-variable theorem is the case of the top place (Remark~\ref{rem:topcase}), their failure is exactly the obstruction of Part I, \S6.3, and Example~\ref{ex:chainpreview} shows that when they fail the classical guess is genuinely wrong, not merely unproved.

\subsubsection*{Shift and leading part}

Throughout, $v$ is a discrete valuation of $\Fbar L$, normalised so that $v((\Fbar L)^*)=\Z$, with valuation ring $\cV_v$, a uniformiser $\pi$, and residue field $\kappa_v$; bars denote residue classes. The derivation is \emph{$v$-bounded} if the \emph{shift}
\[
\sigma_v(h)\ :=\ v(Dh)-v(h),\qquad h\in(\Fbar L)^*,\ Dh\neq0,
\]
is bounded below; we then write $s_v:=\min_h\sigma_v(h)$, the \emph{shift of $D$ at $v$}. At a height-one prime $P$ of $\Fbar\cA$ one has $s_v\ge-\delta_P$: for $h=u\pi^k$ with $u\in\cA_P^*$, $Dh=ku\pi^{k-1}D\pi+\pi^kDu$ and both $v_P(D\pi)$ and $v_P(Du)$ are at least $-\eta_P$ by Lemma~\ref{lem:etagen}. At normal primes $s_v=-\delta_P$ and every $h$ with $v_P(h)\neq0$ attains it (Theorem~\ref{thm:shift}); at special primes and at the places at infinity the shift is attained by some elements and exceeded by others.

\begin{lemma}[Leading part]\label{lem:leading}
Let $D$ be $v$-bounded with shift $s=s_v$. Then $D(\pi^a\cV_v)\subseteq\pi^{a+s}\cV_v$ for every $a\in\Z$, and the rules
\[
\partial_v\bar c\ :=\ \overline{\pi^{-s}Dc}\quad(c\in\cV_v),\qquad
\rho_v\ :=\ \overline{\pi^{-s-1}D\pi}
\]
define a derivation $\partial_v$ of $\kappa_v$ vanishing on $\Fbar$, and an element $\rho_v\in\kappa_v$. For $c\in\cV_v^*$ and $a\in\Z$,
\begin{equation}\label{eq:leading}
D(c\pi^a)\ \equiv\ \bigl(\partial_v\bar c+a\rho_v\,\bar c\bigr)\,\pi^{a+s}\pmod{\pi^{a+s+1}\cV_v}.
\end{equation}
Consequently $v(Dh)=v(h)+s$ if the class $\partial_v\bar c+a\rho_v\bar c$ of the leading form $\bar c\,\pi^a$ of $h$ is nonzero, and $v(Dh)\ge v(h)+s+1$ otherwise. Moreover $\rho_v\ne0$ if and only if $\sigma_v(\pi)=s$.
\end{lemma}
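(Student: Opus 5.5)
The plan is to work directly from the definition of $s=s_v$ as the minimum of $\sigma_v$, and then read off everything else from one expansion of $D(c\pi^a)$. The first step is the inclusion $D(\pi^a\cV_v)\subseteq\pi^{a+s}\cV_v$. Take $h\in\pi^a\cV_v$ nonzero. If $Dh=0$ there is nothing to prove. Otherwise the definition of $s_v$ gives $v(Dh)\ge v(h)+s\ge a+s$. In particular $D\cV_v\subseteq\pi^s\cV_v$ and $D\pi\in\pi^{1+s}\cV_v$. Hence $\pi^{-s}Dc\in\cV_v$ for $c\in\cV_v$, and $\pi^{-s-1}D\pi\in\cV_v$, so both residue classes are defined.

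Next I would check that $\partial_v$ is well defined on $\kappa_v$. If $c\in\pi\cV_v$, the first step with $a=1$ gives $Dc\in\pi^{1+s}\cV_v$, so $\overline{\pi^{-s}Dc}=0$. Additivity of $\partial_v$ is clear. For the Leibniz rule, write $\pi^{-s}D(cc')=c\,\pi^{-s}Dc'+c'\,\pi^{-s}Dc$ and reduce modulo $\pi$. The derivation $\partial_v$ vanishes on $\Fbar$ because $D\Fbar=0$ (Lemma~\ref{lem:basechange}). Changing the uniformiser would alter $\partial_v$ and $\rho_v$ by unit factors, but the statement fixes $\pi$, so this needs no discussion.

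For \eqref{eq:leading}, expand
\[
D(c\pi^a)=\pi^aDc+a\,c\,\pi^{a-1}D\pi=\pi^{a+s}\bigl(\pi^{-s}Dc+a\,c\,\pi^{-s-1}D\pi\bigr).
\]
The bracket lies in $\cV_v$ and reduces to $\partial_v\bar c+a\rho_v\bar c$. This holds for every $a\in\Z$, including negative $a$, since the identity is purely algebraic.

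For the consequence, write $h=c\pi^a$ with $a=v(h)$ and $c\in\cV_v^*$, so that the leading form is $\bar c\pi^a$. If the class is nonzero, then $v(Dh)=a+s$ exactly. If the class is zero, the bracket lies in $\pi\cV_v$, so $v(Dh)\ge a+s+1$ (or $Dh=0$).

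Finally, apply \eqref{eq:leading} with $c=1$ and $a=1$. Since $\partial_v1=0$, this gives $D\pi\equiv\rho_v\pi^{1+s}$ modulo $\pi^{2+s}$. Therefore $\rho_v\ne0$ if and only if $v(D\pi)=1+s$, that is, if and only if $\sigma_v(\pi)=s$.

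The only delicate point is that $s_v$ is attained as a minimum, so that it is a finite integer. This holds because the shifts form a set of integers that is bounded below by hypothesis; everything else is routine.
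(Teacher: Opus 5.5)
Your proof is correct and follows the paper's argument step for step: the inclusion comes from the minimality of $s_v$, well-definedness and the Leibniz rule come from reduction modulo $\pi$, and \eqref{eq:leading} comes from $D(c\pi^a)=\pi^aDc+ac\pi^{a-1}D\pi$. The only slip is in an aside you do not use: replacing $\pi$ by $u\pi$ sends $\rho_v$ to $\bar u^{-s}\bigl(\rho_v+\partial_v\bar u/\bar u\bigr)$, so $\rho_v$ does not merely change by a unit factor.
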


\begin{proof}
For $u\in\cV_v\setminus\{0\}$, $v(Du)\ge v(u)+s\ge s$, and $v(D(\pi^au))\ge a+s$ by the same inequality; this is the inclusion. If $c\in\pi\cV_v$ then $Dc\in\pi^{1+s}\cV_v$, so $\partial_v$ is well defined on $\kappa_v$; it is additive and satisfies the Leibniz rule because $D$ does, and it kills $\Fbar$ because $D$ does. Since $v(D\pi)\ge1+s$, $\rho_v$ is well defined, and it is nonzero exactly when $v(D\pi)=1+s$, i.e.\ $\sigma_v(\pi)=s$. Finally $D(c\pi^a)=\pi^aDc+ac\pi^{a-1}D\pi$, and reducing modulo $\pi^{a+s+1}$ gives \eqref{eq:leading}.
\end{proof}

The map $\bar c\pi^a\mapsto(\partial_v\bar c+a\rho_v\bar c)\pi^{a+s}$ on the graded ring $\bigoplus_a\pi^a\cV_v/\pi^{a+1}\cV_v\cong\kappa_v[\bar\pi,\bar\pi^{-1}]$ is the \emph{leading part} of $D$ at $v$. Its kernel in degree $a$ is nontrivial if and only if $-a\rho_v$ is a logarithmic derivative in $(\kappa_v,\partial_v)$, and we call
\[
N_v\ :=\ \bigl\{a\in\Z:\ \partial_v\bar c=-a\rho_v\,\bar c\ \text{for some }\bar c\in\kappa_v^*\bigr\}
\]
the \emph{kernel degrees} of $v$. Always $0\in N_v$; if $\rho_v=0$ then $N_v=\Z$.

\begin{theorem}[Bounds at $v$-bounded places]\label{thm:lower}
Let $D$ be $v$-bounded with shift $s=s_v$, let $r:=\sigma_v(\pi)-s\ge0$, and let $f\in L$ have an elementary integral, written as in \eqref{eq:liou} with $v\in\Fbar L$ its rational part \textup(denoted $g$ in this theorem, to avoid a clash with the valuation\textup). Suppose one of the following holds:
\begin{enumerate}
\item[(i)] $r=0$ and $N_v\cap\Z_{<0}=\emptyset$;
\item[(ii)] $r\ge1$, $\Const_{\partial_v}(\kappa_v)=\Fbar$, and $\rho_v^{(r)}:=\overline{\pi^{-s-1-r}D\pi}\in\kappa_v^*$ is not of the form $\partial_v\bar c$ with $\bar c\in\kappa_v$.
\end{enumerate}
Then
\[
v(g)\ \ge\ \min\bigl(0,\ v(f)-s\bigr)-r\ =\ \min\bigl(-r,\ v(f)-\sigma_v(\pi)\bigr).
\]
\end{theorem}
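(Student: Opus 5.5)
The proof compares valuations in \eqref{eq:liou}. Write $\Sigma:=\sum_ic_iDu_i/u_i$, so that $f=Dg+\Sigma$, and set $k:=v(g)$.

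\emph{Step 1: the logarithmic part.} I first show $v(\Sigma)\ge s$. For $u\in(\Fbar L)^*$ write $u=w\pi^m$ with $w\in\cV_v^*$. Then $Du/u=m\,D\pi/\pi+Dw/w$. Here $v(D\pi/\pi)=\sigma_v(\pi)=s+r\ge s$, and $v(Dw/w)=v(Dw)\ge s$ by Lemma~\ref{lem:leading}.

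\emph{Step 2: the reduction.} It suffices to prove the following claim: whenever $k<0$ in case (i), or $k<-r$ in case (ii), one has $v(Dg)\le k+s+r$. Granting this, $v(Dg)\le k+s+r<s\le v(\Sigma)$. Hence $v(f)=v(Dg)\le k+s+r$, i.e.\ $k\ge v(f)-s-r$. In the complementary ranges the bound $k\ge-r$ (case (ii)) or $k\ge0$ (case (i)) holds trivially. Together these give $v(g)\ge\min(-r,\,v(f)-s-r)$, which is the stated bound since $\sigma_v(\pi)=s+r$.

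\emph{Case (i).} Here $r=0$, so $\rho_v\ne0$ by Lemma~\ref{lem:leading}. Let $\bar c\,\pi^k$ be the leading form of $g$. By \eqref{eq:leading}, $Dg\equiv(\partial_v\bar c+k\rho_v\bar c)\pi^{k+s}$ modulo $\pi^{k+s+1}\cV_v$. The class $\partial_v\bar c+k\rho_v\bar c$ is nonzero precisely because $k\notin N_v$. Hence $v(Dg)=k+s$, which proves the claim.

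\emph{Case (ii).} Now $r\ge1$, so $\rho_v=0$. Suppose for contradiction that $v(Dg)\ge k+s+r+1$ with $k\ne0$. Write $g=u\pi^k$ with $u\in\cV_v^*$. Then
\[
\frac{Du}{u}=\frac{Dg}{g}-k\frac{D\pi}{\pi}.
\]
The first term has valuation $\ge s+r+1$. The second has valuation exactly $s+r$, with leading class $k\rho_v^{(r)}\ne0$. So $v(Du/u)=s+r$, with class $-k\rho_v^{(r)}$ at order $\pi^{s+r}$.

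It remains to show that any unit $u$ with $v(Du/u)\ge s+r$ has its $\pi^{s+r}$-class of $Du/u$ of the form $\partial_v\bar b$. That contradicts the hypothesis on $\rho_v^{(r)}$, because $\rho_v^{(r)}=\partial_v\bigl(-\bar b/k\bigr)$ would follow.

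\emph{Step 3: the main obstacle.} This is the statement just isolated, and I would prove it by peeling $u$ one order at a time.
\begin{itemize}
\item First, $v(Du)\ge s+r\ge s+1$ gives $\partial_v\bar u=0$. So $\bar u\in\Const_{\partial_v}(\kappa_v)=\Fbar$, and dividing by this constant (which leaves $Du/u$ unchanged) gives $u=1+w$ with $v(w)\ge1$.
\item Let $j:=v(w)$ and let $\bar b$ be the leading coefficient of $w$. Since $\rho_v=0$, Lemma~\ref{lem:leading} gives $Dw\equiv\partial_v\bar b\,\pi^{j+s}$ modulo $\pi^{j+s+1}\cV_v$. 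Moreover $Du/u\equiv Dw$ to this order, since $1/(1+w)\equiv1$ modulo $\pi^j$ and $v(Dw)\ge j+s$.
\item If $j<r$, the assumption $v(Du/u)\ge s+r$ forces $\partial_v\bar b=0$, so $\bar b=\beta\in\Fbar$. Factor $u=(1+\beta\pi^j)\,u'$, where $u'$ is again a unit with $v(u'-1)>j$.
\item The logarithmic derivative of the removed factor is $j\beta\pi^{j-1}D\pi/(1+\beta\pi^j)$. It has valuation $j+s+r>s+r$, so the class of $Du/u$ at order $\pi^{s+r}$ is unchanged.
\item Repeating at most $r$ times reaches $j\ge r$. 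Then the $\pi^{s+r}$-class of $Du/u$ equals $\partial_v\bar b$ for the $\pi^r$-coefficient $\bar b$ (zero if $j>r$). This is the required form.
\end{itemize}

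The delicate points are bookkeeping. I must check that the congruence $Du/u\equiv Dw$ remains valid up to order $s+r$ at the final stage, and that the higher-order parts of $D$ acting on earlier coefficients never reach order $s+r$. Both rest on $\rho_v=0$ together with the inclusion $D(\pi^a\cV_v)\subseteq\pi^{a+s}\cV_v$ of Lemma~\ref{lem:leading}. I expect this verification, rather than the valuation comparison of Step 2, to be where care is needed.
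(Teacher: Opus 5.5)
Your proof is correct. Step~1, the reduction in Step~2 and case~(i) follow the paper's argument. In case~(ii) you run the same peeling chain of length at most $r$, but multiplicatively rather than additively.

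The paper writes $g=c\pi^a+g_1$ with $c\in\Fbar^*$ and subtracts constant-coefficient monomials $c_i\pi^{a+i}$ from $g$ one order at a time. It carries the coefficient $-ac\rho_v^{(r)}$ of $Dg_i$ in order $a+s+r$ along the chain until that coefficient equals $\partial_v\bar c_r$ for the leading coefficient $\bar c_r$ of the remainder $g_r$.

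You instead factor $g=u\pi^k$ and push the $\rho_v^{(r)}$-term into $-k\,D\pi/\pi$. This reduces everything to a statement about units alone: for $u\in\cV_v^*$ with $v(Du/u)\ge s+r$, the class of $\pi^{-s-r}Du/u$ lies in $\partial_v\kappa_v$. You prove it by dividing off factors $1+\beta\pi^j$ with $\beta\in\Fbar$.

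The two arguments use the same inputs. Classes in the kernel of the leading part are constants because $\Const_{\partial_v}(\kappa_v)=\Fbar$. Removing a constant-coefficient piece costs only a multiple of $D\pi$, whose valuation exceeds the generic shift by $r$.

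Your version isolates a lemma independent of $g$ and of $k$, with $k\ne0$ entering only in the final division. The paper's additive version instead tracks the actual coefficients of $g$, exhibiting $\bar c_r$ as a solution of $\partial_v\bar c_r=-ac\rho_v^{(r)}$. That is the form Example~\ref{ex:chainpreview} uses to describe the chain when (ii) fails.

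The bookkeeping you flag as delicate is already handled by your own argument. Each removed factor has the exactly computed logarithmic derivative $j\beta\pi^{j-1}D\pi/(1+\beta\pi^j)$, of valuation $j+s+r>s+r$, so no earlier coefficients remain for $D$ to act on. The congruence $Du/u\equiv Dw\pmod{\pi^{j+s+1}\cV_v}$ holds at every stage because $v(w\,Dw)\ge 2j+s\ge j+s+1$.
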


\begin{proof}
Write $\Sigma:=\sum_ic_iDu_i/u_i$, so $f=Dg+\Sigma$, and note $v(\Sigma)\ge s$, since $v(Du_i)\ge v(u_i)+s$. Let $a:=v(g)$ and suppose $a<0$.

(i) Here $\sigma_v(\pi)=s$ and $\rho_v\ne0$. By Lemma~\ref{lem:leading}, $v(Dg)=a+s$ unless $a\in N_v$; as $a<0$ and $N_v\cap\Z_{<0}=\emptyset$, $v(Dg)=a+s<s\le v(\Sigma)$, hence $v(f)=a+s$ and $a=v(f)-s$.

(ii) Here $\rho_v=0$, so the kernel of the leading part in degree $a$ consists of the classes $\bar c\pi^a$ with $\partial_v\bar c=0$, i.e.\ $\bar c\in\Fbar$; and $\rho_v^{(r)}$ is the first nonvanishing coefficient of $D\pi$, $v(D\pi)=s+1+r$. We claim $v(Dg)\le a+s+r$. Suppose not. Then the leading form of $g$ lies in the kernel: $g=c\pi^a+g_1$ with $c\in\Fbar^*$ (a constant, lifted to the constant $c\in\Fbar\subset\Fbar L$) and $v(g_1)\ge a+1$. Now $D(c\pi^a)=ac\pi^{a-1}D\pi$ has valuation exactly $a+s+r$, so $Dg_1=Dg-D(c\pi^a)$ has all its coefficients of orders $a+s+1,\dots,a+s+r-1$ equal to zero and its coefficient of order $a+s+r$ equal to $-ac\rho_v^{(r)}\ne0$. Inductively, as long as $v(g_i)=a+i$ with $1\le i\le r-1$: the class of $Dg_i$ in order $a+i+s$ is $\partial_v\bar c_i$ for the leading coefficient $\bar c_i$ of $g_i$ (Lemma~\ref{lem:leading} with $\rho_v=0$), and it must vanish, so $c_i\in\Fbar$; put $g_{i+1}:=g_i-c_i\pi^{a+i}$, whose derivative differs from $Dg_i$ by $(a+i)c_i\pi^{a+i-1}D\pi$, of valuation $a+i+s+r>a+s+r$. We arrive at $g_r$ with $v(g_r)\ge a+r$ and with the same coefficient $-ac\rho_v^{(r)}$ in order $a+s+r$ as $Dg_1$. If $v(g_r)>a+r$ then $v(Dg_r)\ge a+r+1+s$, a contradiction; so $v(g_r)=a+r$, and by Lemma~\ref{lem:leading} the coefficient of $Dg_r$ in order $a+s+r$ is $\partial_v\bar c_r$ for the leading coefficient $\bar c_r$ of $g_r$. Hence $\partial_v\bar c_r=-ac\rho_v^{(r)}$, and $\rho_v^{(r)}=\partial_v(-\bar c_r/(ac))$, contradicting (ii). This proves the claim.

Now if $a<-r$ then $v(Dg)\le a+s+r<s\le v(\Sigma)$, so $v(f)=v(Dg)\le a+s+r$ and $a\ge v(f)-s-r$. If $-r\le a<0$ the bound $a\ge-r$ holds trivially. In both cases $a\ge\min(-r,v(f)-s-r)$, and for $a\ge0$ there is nothing to prove.
\end{proof}

\begin{remark}[The top place]\label{rem:topcase}
Theorem~\ref{thm:top} is Theorem~\ref{thm:lower} at $v=v_\infty$ of Section~\ref{sec:degree}, with $\kappa_v=E$ and $\partial_v=D_E$. For $t$ primitive, $s=0$, $D\pi=-w\pi^2$ gives $\rho_v=0$, $r=1$ and $\rho_v^{(1)}=-w$; hypothesis (ii) reads $w\notin D_EE$, which is the transcendence of $t$, and the conclusion is $v_\infty(v)\ge\min(0,v_\infty(f))-1$. For $t$ hyperexponential, $s=0$, $\rho_v=-w\ne0$, $r=0$, and $a\in N_v$ would give $c\in E^*$ with $D_Ec/c=aw$, i.e.\ $D(ct^{-a})=0$, impossible for $a\ne0$; the conclusion is $v_\infty(v)\ge\min(0,v_\infty(f))$. For $t$ hypertangent, $s=-1$, $\partial_v=0$ on $E$, $\rho_v=-w\ne0$, $N_v=\{0\}$, and the conclusion is $v_\infty(v)\ge\min(0,v_\infty(f)+1)$. The three cancellation arguments of the proof of Theorem~\ref{thm:top} are the three shapes of the leading part.
\end{remark}

\begin{corollary}[Exponents and degrees]\label{cor:lowerbounds}
In the situation of Theorem~\ref{thm:lower}, write $\varepsilon_v:=\max\bigl(0,\,s_v-v(f)\bigr)+r$, so that the theorem reads $v(g)\ge-\varepsilon_v$ whenever $v(g)<0$.
\begin{enumerate}
\item[(a)] If $v=v_P$ for a special height-one prime $P$ over the irreducible $\sigma\in R$ with $e_P=v_P(\sigma)$, and every prime over $\sigma$ satisfies a hypothesis of the theorem, then the exponent of $\sigma$ in $\den(g)$ is at most $\max_{P\mid\sigma}\lceil\varepsilon_{v_P}/e_P\rceil$.
\item[(b)] If $v$ ranges over the places $P$ of $\Fbar L$ over $t_i=\infty$ described below, $e_P:=-v_P(t_i)$, and every one satisfies a hypothesis of the theorem, then for $g=b/(sD_v)$ with $b\in\Fbar\cA$ as in Step~16 of Algorithm~4,
\[
\deg_{t_i}(b)\ \le\ \deg_{t_i}(sD_v)+\max_{P}\bigl\lceil\varepsilon_{v_P}/e_P\bigr\rceil .
\]
\end{enumerate}
\end{corollary}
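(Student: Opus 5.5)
The plan is to read both parts off Theorem~\ref{thm:lower}, translating a lower bound on $v(g)$ at a place into a bound on an exponent (part (a)) or a degree (part (b)). We first check the reformulation in the statement. The theorem gives $v(g)\ge\min(0,v(f)-s_v)-r$. If $v(g)<0$, this bound is itself negative, so $-v(g)\le\max(0,s_v-v(f))+r=\varepsilon_v$. Hence $v(g)\ge-\varepsilon_v$ in every case, since for $v(g)\ge0$ the inequality holds trivially because $\varepsilon_v\ge0$.

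For (a), fix the irreducible $\sigma\in R$ and let $k:=v_\sigma(\den(g))$, with $g=\sum_i(b_i/\den(g))w_i$ normalised as in Proposition~\ref{prop:ring}(ii). The key step is to show that some prime $P$ over $\sigma$ has $v_P(g)\le-e_P\,k+(e_P-1)$, or more crudely that $\sigma^{k-1}g\notin\cA_P$ for some such $P$. The argument runs as follows. The $R$-lattice $\cA=\bigoplus Rw_i$ is the integral closure, and $\cA=\bigcap_Q\cA_Q$ over height-one primes because $\cA$ is Krull. So if $\sigma^{k-1}g$ lay in $\cA_P$ for every $P\mid\sigma$, then, being integral at every other prime as well (other primes see no $\sigma$ in the denominator), it would lie in $\cA$. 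The gcd normalisation would then be contradicted. Hence $v_P(\sigma^{k-1}g)<0$ for some $P\mid\sigma$, i.e.\ $v_P(g)<-(k-1)e_P$. Combining this with $v_P(g)\ge-\varepsilon_{v_P}$ gives $(k-1)e_P<\varepsilon_{v_P}$, so $k\le\lceil\varepsilon_{v_P}/e_P\rceil$, which is bounded by the maximum over $P\mid\sigma$. One point needs care: clearing the other denominator factors. We handle this by working in the localisation $R_{(\sigma)}$, whose integral closure in $L$ is $\bigcap_{P\mid\sigma}\cA_P$, a semilocal Dedekind-type intersection. This makes the argument purely local at $\sigma$, and the base change of Lemma~\ref{lem:basechange} lets us pass to $\Fbar\cA$ harmlessly.

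For (b), write $g=b/(sD_v)$, so that $b=g\cdot sD_v$. At each place $P$ over $t_i=\infty$ we have $v_P(b)=v_P(g)+v_P(sD_v)\ge-\varepsilon_{v_P}-e_P\deg_{t_i}(sD_v)$, using $v_P(h)\ge-e_P\deg_{t_i}h$ for $h\in R$. We then relate $\deg_{t_i}$ of $b=\sum_i b_iw_i$ to $\min_P v_P(b)/e_P$. The main obstacle is here, because the $w_i$ may have poles or zeros at infinity when $t_i$ is a curve variable (ramification of the curve at infinity), so that $\deg_{t_i}b$ is not simply $-\min_P v_P(b)/e_P$. The approach we will try first is the integrality argument from (a), now applied to the ring $F[\dots,1/t_i,\dots]$ localised at $1/t_i$. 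Suppose $t_i^{-N}b$ were integral at every place over $\infty$, with $N$ the claimed bound. Then we would want $\deg_{t_i}b_j\le N$ for all $j$. This holds when the basis $w_j$ is normal at infinity, and in general requires replacing $\deg_{t_i}$ by the degree measured through the places, which is what the description ``below'' of the places presumably fixes. Taking the ceiling then yields $\deg_{t_i}b\le\deg_{t_i}(sD_v)+\max_P\lceil\varepsilon_{v_P}/e_P\rceil$.
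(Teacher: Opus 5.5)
Your reformulation and part (a) are correct, and they are the paper's conversion ``through $v_P(\sigma)=e_P$'' written out. With $S=R\setminus(\sigma)$, the ring $S^{-1}\cA=\bigoplus_hR_{(\sigma)}w_h=\bigcap_{P\mid\sigma}\cA_P$ and the gcd normalisation show that the exponent of $\sigma$ in $\den(g)$ is the least $k\ge0$ with $ke_P+v_P(g)\ge0$ for all $P\mid\sigma$, and the bound follows. Your first, global version of the membership argument fails because of the other factors of $\den(g)$, but your localisation repairs it.

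Part (b) has a genuine gap at its only nontrivial step: the inequality $\deg_{t_i}(b)\le\max_P\lceil-v_P(b)/e_P\rceil$ at the places over the curve variable. You assert it only when the basis is normal at infinity, and otherwise propose to measure the degree through the places. But $\deg_{t_i}(b)$ in the statement is the coefficient degree $\max_h\deg_{t_i}a_h$ of $b=\sum_ha_hw_h$, which is what bounds the ansatz of Step~16. Substituting a different degree changes the claim, and nothing in the description of (P$_\infty$) redefines it.

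The missing idea is that the Trager basis does have the required property, for a reason specific to simple radicals.
\begin{itemize}
\item Over $\Fbar$, the automorphisms $\sigma_k\colon y\mapsto\zeta_m^ky$ of $\Fbar L/\Fbar K$ have the $w_h$ as eigenvectors, with $\sigma_kw_h=\zeta_m^{hk}w_h$.
\item They fix $t_1$, permute the places over $t_1=\infty$, and preserve both $v_P$ and $e_P$.
\item Hence each component $a_hw_h=\frac1m\sum_k\zeta_m^{-hk}\sigma_k(b)$ satisfies $\min_Pv_P(a_hw_h)/e_P\ge\min_Pv_P(b)/e_P$. This is the paper's ``coordinates recovered from the values of $b$ on the $m$ sheets''.
\item Since $w_h^m=q^h/E_h^m\in F(t_1)$, one has $v_P(w_h)/e_P=-\sum_l\{hl/m\}\deg Q_l\le0$ at every place over infinity.
\item Together with $v_P(a_h)=-e_P\deg_{t_1}a_h$, this gives $\deg_{t_1}a_h\le\max_P(-v_P(b)/e_P)$.
\end{itemize}
In other words, the basis is normal at infinity, and that is exactly what you needed to prove rather than assume.

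At the places ($\infty_i$) of an upper variable there is nothing to do. There the $w_h$ lie in the coefficient field $C_i$, the place is unique with $e=1$, and $v(b)=-\deg_{t_i}b$ directly.

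With this lemma in hand, your estimate $v_P(b)\ge-\varepsilon_{v_P}-e_P\deg_{t_i}(sD_v)$ completes (b).
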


\begin{proof}
(a) is the theorem at each $P\mid\sigma$, converted to the exponent of $\sigma$ through $v_P(\sigma)=e_P$. (b) At each $P$, $v_P(b)=v_P(g)+v_P(sD_v)\ge-\varepsilon_{v_P}-e_P\deg_{t_i}(sD_v)$, since $sD_v$ is a polynomial in $t_i$ over a field on which $v_P$ is trivial; and $\deg_{t_i}(b)\le\max_P\lceil-v_P(b)/e_P\rceil$ for $b\in\Fbar\cA$, because the coordinates $a_i$ of $b=\sum_ia_iw_i$ are recovered from the values of $b$ on the $m$ sheets over the generic point, and $\deg_{t_i}$ of a polynomial is $-v_P/e_P$ at any place over $t_i=\infty$ where it does not cancel.
\end{proof}

\subsubsection*{The places and their data}

The places at which the algorithm needs bounds are of four kinds; in each of them $\Fbar L$ has a \emph{coefficient field} $C\subset\Fbar L$ on which $v$ is trivial and which maps isomorphically onto $\kappa_v$, so that the shift, $\partial_v$ and $\rho_v$ are read off from the generators.
\begin{enumerate}
\item[(P$_\infty$)] $j=1$, and $v$ is the Gauss extension to $\Fbar L$ of a place $P$ of the curve $\Fbar K_0(y)=\Fbar(t_1)(y)$ over $t_1=\infty$: $v(\sum_\alpha a_\alpha t^\alpha)=\min_\alpha v_P(a_\alpha)$ for $a_\alpha\in\Fbar K_0(y)$ and monomials $t^\alpha$ in $t_2,\dots,t_n$. Then $\kappa_v=\Fbar(\bar t_2,\dots,\bar t_n)$ and $C=\Fbar(t_2,\dots,t_n)$; $e_P=-v_P(t_1)\in\{1,e_\infty\}$ as in Part I, \S6.1.
\item[(P$_\sigma$)] $j=1$, and $v=v_P$ for a special type E prime $P$ over $\sigma\in\Fbar[t_1]$, i.e.\ the Gauss extension of a place of the curve over $\sigma$; $\kappa_v$ and $C$ as in (P$_\infty$), $e_P\in\{1,e_P\}$ as in Lemma~\ref{lem:lowershift}.
\item[($\infty_i$)] $v$ is the degree valuation in an upper variable $t_i$, $i>j$, over the field generated by $y$ and the other generators,
\[
C_i:=\Fbar K_0(y)(t_{j+1},\dots,\widehat{t_i},\dots,t_n),
\]
with $\pi=1/t_i$ and $\kappa_v\cong C=C_i$.
\item[(M)] $v=v_P$ for a moving special prime $P$ whose monic irreducible $p\in T$ is linear in some upper variable $t_i$ with coefficients free of $t_i$: the primes $(t_i)$ and $(t_i\mp\sqrt{-1})$ of Theorem~\ref{thm:specials}(iii)--(iv), and the arguments $p$ of logarithms $t_k=\log(p)$ that are linear in their main variable. Then $\pi=p$ and $\kappa_v\cong C=C_i$ via $t_i\mapsto$ the root of $p$.
\end{enumerate}
In each case $\Fbar L$ is a finite extension of $C(\pi)$ and embeds in $C((\pi))$, and $D$ extends to $C((\pi))$ termwise; the extension is a derivation which agrees with $D$ on $C(\pi)$ and, in cases (P$_\infty$), (P$_\sigma$), on $y$ (differentiate $y^2=q$), hence on $\Fbar L$. This gives:

\begin{lemma}[Shift from the generators]\label{lem:shiftgen}
At a place of one of the four kinds, with $c_1,\dots,c_l$ the generators of $C$ listed in the case description \textup(the $t_k$, $k\neq i$, and $y$ when $y\in C$\textup),
\[
s_v\ =\ \min\bigl(\sigma_v(\pi),\ v(Dc_1),\dots,v(Dc_l)\bigr),
\]
$\partial_v\bar c_k$ is the coefficient of $\pi^{s_v}$ in the expansion of $Dc_k$, and $\rho_v^{(r)}$ is the coefficient of $\pi^{s_v+1+r}$ in the expansion of $D\pi$ \textup($\rho_v=\rho_v^{(0)}$\textup). In particular $D$ is $v$-bounded.
\end{lemma}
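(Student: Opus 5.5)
The plan is to work entirely in the completion $C((\pi))$ and read off the shift coefficient by coefficient. By the paragraph preceding the lemma, $\Fbar L$ embeds in $C((\pi))$ and $D$ extends to $C((\pi))$ termwise. Concretely, for $h=\sum_{k\ge a}h_k\pi^k$ with $h_k\in C$, the extension is
\[
Dh\;=\;\sum_{k\ge a}\bigl(Dh_k\bigr)\pi^k\;+\;\sum_{k\ge a}k\,h_k\,\pi^{k-1}D\pi .
\]
Here $Dh_k$ and $D\pi$ are themselves elements of $\Fbar L\subset C((\pi))$, expanded in turn. So I will first put
\[
m\;:=\;\min\bigl(\sigma_v(\pi),\,v(Dc_1),\dots,v(Dc_l)\bigr),
\]
which is finite because each $Dc_k$ and $D\pi$ is a nonzero element or zero of $\Fbar L$ (zero terms are simply dropped). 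The goal is to show that $m$ bounds every shift from below, and that it is attained.

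\emph{Lower bound.} Since $C=\Fbar(c_1,\dots,c_l)$ or the corresponding field of fractions, the Leibniz and quotient rules give $v(Dc)\ge m$ for every $c\in C$. This uses that $v$ is trivial on $C$: each $c$ is a rational function of the $c_k$ with $v$-unit numerator and denominator, and $Dc$ is a $C$-linear combination of the $Dc_k$. The termwise formula then yields
\[
v\bigl((Dh_k)\pi^k\bigr)\;\ge\;k+m
\qquad\text{and}\qquad
v\bigl(k\,h_k\,\pi^{k-1}D\pi\bigr)\;\ge\;k-1+\sigma_v(\pi)+1\;\ge\;k+m .
\]
Summing over $k\ge a$ gives $v(Dh)\ge a+m$, so $\sigma_v(h)\ge m$ and $D$ is $v$-bounded with $s_v\ge m$. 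The step needing care is the convergence of the termwise derivation in the $\pi$-adic topology. This is fine because each summand has valuation at least $k+m\to\infty$.

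\emph{Attainment.} If $m=\sigma_v(\pi)$, take $h=\pi$. Otherwise $m=v(Dc_k)<\infty$ for some $k$, and $h=c_k$ has $v(h)=0$, so $\sigma_v(c_k)=m$. Hence $s_v=m$.

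\emph{Identification of $\partial_v$ and $\rho_v^{(r)}$.} By Lemma~\ref{lem:leading}, $\partial_v\bar c_k=\overline{\pi^{-s_v}Dc_k}$, which is by definition the residue of the $\pi^{s_v}$-coefficient of $Dc_k$. Under $C\cong\kappa_v$ this is that coefficient itself. The same reasoning applied to $\rho_v^{(r)}=\overline{\pi^{-s_v-1-r}D\pi}$ gives the coefficient of $\pi^{s_v+1+r}$ in $D\pi$.

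The main obstacle I expect is justifying that the termwise extension is really a derivation of $C((\pi))$ compatible with $D$ on $\Fbar L$, and in particular on $y$ in cases (P$_\infty$), (P$_\sigma$). I will take this from the preceding paragraph, where it is argued via uniqueness of extending a derivation to a separable algebraic extension, since $\Fbar L/C(\pi)$ is finite in characteristic $0$.
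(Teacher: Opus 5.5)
Your proposal is correct and is essentially the paper's proof. The paper uses the same steps: the chain rule on $C$, where $v$ is trivial, gives $v(Dc)\ge\min_k v(Dc_k)$; the termwise formula $D(c_b\pi^b)=\pi^bDc_b+bc_b\pi^{b-1}D\pi$ in $C((\pi))$ with continuity gives the lower bound for every $h$; $\pi$ and the generators attain it; and $\partial_v$, $\rho_v^{(r)}$ are the definitions read in $C((\pi))$. One small correction: $m$ is finite not because the $Dc_k$ and $D\pi$ lie in $\Fbar L$, but because $v(\pi)=1$ while the constants of $\Fbar L$ (namely $\Fbar$, by (T4)) have valuation $0$, so $D\pi\neq0$ and $\sigma_v(\pi)<\infty$.
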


\begin{proof}
For $c\in C^*$, $Dc=\sum_k(\partial c/\partial c_k)Dc_k$ with $\partial c/\partial c_k\in C$ of valuation $\ge0$, so $v(Dc)\ge\min_kv(Dc_k)$; for $h=\sum_bc_b\pi^b$ in $C((\pi))$, $D(c_b\pi^b)=\pi^bDc_b+bc_b\pi^{b-1}D\pi$ has valuation at least $b+\min(\min_kv(Dc_k),\sigma_v(\pi))$, and the termwise extension is continuous. Hence $\sigma_v(h)\ge\min(\sigma_v(\pi),\min_kv(Dc_k))$ for every $h$, and the generators themselves attain the right-hand side. The descriptions of $\partial_v$ and $\rho_v^{(r)}$ are the definitions read in $C((\pi))$.
\end{proof}

We say that a generator $c_k$ of $C$ \emph{attains the shift} if $v(Dc_k)=s_v$, equivalently if $\partial_v\bar c_k\ne0$; the non-attaining generators are constants of $\partial_v$. Recall that an upper generator $t_k$ is \emph{primitive} if $Dt_k$ lies in the field below $t_k$, \emph{hyperexponential} if $Dt_k/t_k$ does, \emph{hypertangent} if $Dt_k/(1+t_k^2)$ does, and call $t_k$ \emph{independent} if no $Dt_{k'}$, $k'\ne k$, involves $t_k$ (nor does $q$, automatically). Set $\lambda_k:=\partial_v\bar t_k$.

\begin{proposition}[Deciding the hypotheses]\label{prop:kernelfree}
Let $v$ be a place of one of the four kinds, $s=s_v$, $r=\sigma_v(\pi)-s$.
\begin{enumerate}
\item[(K1)] \textup{(Constant coefficients.)} Suppose $r=0$, and that every attaining generator of $C$ other than those defining the place \textup($t_1$, $y$ in \textup{(P$_\infty$)}, \textup{(P$_\sigma$)}; none in \textup{($\infty_i$)}, \textup{(M)}\textup) is an upper generator $t_k$ that is either primitive with $\lambda_k\in\Const_{\partial_v}(\kappa_v)$ or hyperexponential with $\lambda_k=\mu_k\bar t_k$, $\mu_k\in\Const_{\partial_v}(\kappa_v)^*$, and that $\rho_v\in\Const_{\partial_v}(\kappa_v)$ is not a $\Q$-linear combination of the $\mu_k$ of the attaining hyperexponential generators \textup(with none of them, this says $\rho_v\ne0$, which is $r=0$\textup). Then $N_v=\{0\}$, and Theorem~\ref{thm:lower}(i) applies. The hypotheses on the coefficients hold in particular when $\rho_v$, the $\lambda_k$ and the $\mu_k$ lie in $\Fbar(\bar t_{k'}:\ t_{k'}\text{ non-attaining})$.
\item[(K2)] \textup{(Transcendence.)} Suppose $v$ is of kind \textup{($\infty_i$)} or \textup{(M)} for an independent monomial $t_i$ \textup(with $p=t_i$ if $t_i$ is hyperexponential, $p=t_i\mp\sqrt{-1}$ if hypertangent\textup). Then Theorem~\ref{thm:lower} applies: by (i) with $N_v=\{0\}$ if $t_i$ is hyperexponential or hypertangent, and by (ii) with $r=1$ if $t_i$ is primitive.
\item[(K3)] \textup{(Residues.)} Suppose $r\ge1$, $n=1$ with $Dt_1=1$ and no radical \textup(so $v$ is of kind \textup{($\infty_i$)} for a generator above $K_0=F(x)$ that is the only other generator\textup), so that $\kappa_v=\Fbar(x)$ and $\partial_v=d/dx$. Then Theorem~\ref{thm:lower}(ii) applies if and only if the rational function $\rho_v^{(r)}$ has a nonzero residue.
\item[(K4)] \textup{(Exponentials over the curve.)} Suppose $v$ is of kind \textup{(P$_\infty$)} or \textup{(P$_\sigma$)}, $r\ge1$, and every upper generator attains the shift and is hyperexponential or hypertangent with $\lambda_k=\mu_k\bar t_k$, resp.\ $\lambda_k=\mu_k(1+\bar t_k^2)$, $\mu_k\in\Fbar^*$, the $\mu_k$ of the hyperexponential generators being linearly independent over $\Q$. Then Theorem~\ref{thm:lower}(ii) applies.
\end{enumerate}
\end{proposition}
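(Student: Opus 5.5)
Each of the four parts reduces to checking the hypotheses of Theorem~\ref{thm:lower}. The tools are Lemma~\ref{lem:shiftgen}, which computes $s_v$, $\partial_v$ and $\rho_v^{(r)}$ from the generators, and the kernel-degree set $N_v$.

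For (K1) I need $N_v\cap\Z_{<0}=\emptyset$. Suppose instead $\partial_v\bar c=-a\rho_v\bar c$ with $\bar c\in\kappa_v^*$ and $a\ne0$.

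\begin{itemize}
\item \emph{Structure of $\kappa_v$.} Write $\kappa_v\cong C$ as a differential field generated over the $\partial_v$-constant subfield $\Const_{\partial_v}(\kappa_v)$. On this field, $\partial_v$ acts only through the attaining generators. The place-defining generators reduce to residue constants: $\bar t_1,\bar y$ are constants because $v(Dt_1),v(Dy)>s$ at a place of the curve, or they are absent.
\item \emph{Each attaining generator.} A primitive $\bar t_k$ has $\partial_v\bar t_k=\lambda_k$, a $\partial_v$-constant. A hyperexponential one has $\partial_v\bar t_k/\bar t_k=\mu_k$, also constant.
\item \emph{The structure argument.} This is Rosenlicht's lemma, or Bronstein's monomial theorems applied to the $\partial_v$-extension, over the constant field $k_0=\Const_{\partial_v}(\kappa_v)$. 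A logarithmic derivative $\partial_v\bar c/\bar c$ that lies in $k_0$ must have the form $\sum n_k\mu_k+\partial_v e$ with $e\in k_0$, so $\partial_v e=0$. Here the primitive $\bar t_k$ contribute nothing, since a primitive element cannot appear in a logarithmic derivative lying in constants. This uses the transcendence of the $\bar t_k$ over $k_0$, which should follow from their transcendence in $C$.
\item \emph{Conclusion of (K1).} We get $-a\rho_v=\sum n_k\mu_k$, contradicting the hypothesis on $\rho_v$. Hence $N_v=\{0\}$.
\end{itemize}

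The final sentence of (K1) holds because the field $\Fbar(\bar t_{k'}:t_{k'}\text{ non-attaining})$ consists of $\partial_v$-constants.

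For (K2) I follow Remark~\ref{rem:topcase}. Because $t_i$ is independent, no coefficient derivative involves $t_i$, so the place $(\infty_i)$ behaves exactly like the top place with $E$ replaced by $C_i$. The computation of Theorem~\ref{thm:top} carries over verbatim: $s=0$, with $\rho=-w$ (hyperexponential) or $\rho^{(1)}=-w\notin DC_i$ (primitive, by transcendence), and $s=-1$, $\rho=-w$ (hypertangent).

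For kind (M) with $p=t_i$ hyperexponential, we have $D\pi/\pi=w$ and $s=0$. A kernel degree $a$ would give $D(ct_i^{a})=0$, which contradicts (T4). The hypertangent prime $t_i\mp\sqrt{-1}$ is handled similarly: its leading coefficient is $\pm2\sqrt{-1}\,w$, and a kernel degree yields a constant of the form $c(t_i\mp\sqrt{-1})^a$ in the completion. Descending that constant to $L$ is the point I expect to need care. I would argue through $\partial_v$-constants of $C_i(\sqrt{-1})$ and the condition $\Const=F$.

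For (K3), $\kappa_v=\Fbar(x)$ with $\partial_v=d/dx$. Its constants are $\Fbar$, and a rational function is a derivative exactly when all its residues vanish (partial fractions). Condition (ii) is therefore equivalent to $\rho_v^{(r)}$ having a nonzero residue.

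For (K4) I must show that $\rho_v^{(r)}\in\Fbar^*$ (or in $\kappa_v$) is not a $\partial_v$-derivative, and that $\Const_{\partial_v}(\kappa_v)=\Fbar$.

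\begin{itemize}
\item \emph{Setup.} Here $\kappa_v=\Fbar(\bar t_2,\dots,\bar t_n)$, and $\partial_v$ acts as $\sum\mu_k\bar t_k\,\partial/\partial\bar t_k$ plus the corresponding hypertangent terms.
\item \emph{Constants.} Monomials $\prod\bar t_k^{n_k}$ are eigenvectors with eigenvalue $\sum n_k\mu_k$, which is nonzero for $n\ne0$ by $\Q$-independence. It follows that the constants are $\Fbar$. The hypertangent factors need a separate check via $1+\bar t^2$.
\item \emph{Non-derivative.} Write $\bar c=\sum_\alpha c_\alpha\bar t^\alpha$ in Laurent/partial-fraction form. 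Then $\partial_v\bar c$ has no $\bar t^0$ constant-term component, because degree $0$ has eigenvalue $0$. Hence the nonzero constant $\rho_v^{(r)}$ is not a derivative.
\end{itemize}

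\textbf{Main obstacle.} I expect the main obstacle to be making this eigenvector and partial-fraction argument rigorous for rational (not merely Laurent) $\bar c$, especially when hypertangent factors are present. I would handle it by expanding $\bar c$ at a suitable place, for example in the $\bar t_k$-adic or degree valuation, and tracking the constant coefficient.
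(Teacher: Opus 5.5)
Your overall plan is the paper's: read $s_v$, $\partial_v$ and $\rho_v^{(r)}$ off the generators and check the hypotheses of Theorem~\ref{thm:lower}. Your (K3) is its argument verbatim. (K1), however, has a genuine gap.

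\textbf{The gap in (K1).} You apply Rosenlicht's theorem to $\kappa_v$, viewed as a tower of the attaining $\bar t_k$ over $k_0=\Const_{\partial_v}(\kappa_v)$. You justify the required algebraic independence by ``their transcendence in $C$''. That inference is false, because $k_0$ is in general much larger than $\Fbar$ and can contain combinations of attaining generators.
\begin{itemize}
\item Over $y^2=x^2+1$, take the place over $x=\infty$ with $t_2=\log(x)$, $t_3=\log(x+1)$. Then $s=1$, $r=0$, both generators attain with $\lambda_2=\lambda_3=1$, and $\rho_v=-1$. So the hypotheses of (K1) hold, yet $\bar t_2-\bar t_3\in k_0$.
\item Two attaining hyperexponentials with $\Q$-dependent $\mu_k$ satisfy a monomial relation over $k_0$. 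Then $\kappa_v$ can be a proper algebraic extension of the field generated by any independent subset of the given generators.
\end{itemize}

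The missing step is the paper's descent by the norm.
\begin{itemize}
\item Take a maximal subset $A'$ of attaining generators algebraically independent over $\mathcal C:=k_0$, and set $K':=\mathcal C(\bar t_k:k\in A')$. Then $\kappa_v$ is finite over $K'$; replace $\bar c$ by $N:=N_{\kappa_v/K'}(\bar c)$.
\item Since $\partial_v$ commutes with the conjugations over $K'$, $\partial_vN/N=\operatorname{Tr}(\partial_v\bar c/\bar c)=-a[\kappa_v:K']\rho_v$. So the kernel relation survives with $a'=a[\kappa_v:K']\ne0$.
\item $K'$ is now a genuine monomial tower over $\mathcal C$ with no new constants.
\end{itemize}
The paper then finishes by hand. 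Each irreducible factor of $N$ in $\mathcal C[\bar t_k:k\in A']$ divides its derivative, the cofactor is constant because $\partial_v$ does not raise degree, and comparing coefficients of the hyperexponential monomials makes it $\alpha\cdot\mu$. Once you are over $K'$, your appeal to Rosenlicht would serve equally well. Kolchin--Ostrowski plus a change of basis of the monomial lattice is another repair; some descent is indispensable.

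\textbf{(K2), hypertangent prime.} At $p=t_i-\sqrt{-1}$ a kernel degree gives $c\in C_i$ with $Dc=-2\sqrt{-1}\,a\,w\,c$. The element $c\,p^a$ is not a constant: its logarithmic derivative is $a\,w\,p$. The correct witness is $c\bigl((t_i-\sqrt{-1})/(t_i+\sqrt{-1})\bigr)^a$. No descent from a completion is needed: independence makes $C_i$ a $D$-stable coefficient field with $\partial_v=D|_{C_i}$, so the witness lies in $\Fbar L$. (T4) and $t_i\notin C_i$ then give the contradiction. The latter requires a transcendence-degree count, since $C_i$ contains the generators above $t_i$; your primitive case needs it too.

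\textbf{(K4).} Your ``no constant term'' argument only shows that a nonzero \emph{constant} is not a derivative; a non-constant such as $\mu_k\bar t_k=\partial_v\bar t_k$ is one. So you must show $\rho_v^{(r)}\in\Fbar^*$. This holds because $\pi$ and $D\pi$ lie in $\Fbar K_0(y)$, whose expansions at a place of the curve have coefficients in $\Fbar$. The paper passes from Laurent polynomials to rational functions by induction on layers, splitting off partial fractions whose polar parts $\partial_v$ cannot cancel.

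Your ``separate hypertangent check'' of the constants cannot succeed from the stated hypotheses. If $\mu_k=2\sqrt{-1}\,\mu_{k'}$ for a hyperexponential $t_k$ and a hypertangent $t_{k'}$, then $\bar t_k(\bar t_{k'}+\sqrt{-1})/(\bar t_{k'}-\sqrt{-1})$ is a $\partial_v$-constant outside $\Fbar$; the same happens for two hypertangents with equal $\mu$. The paper's appeal to \cite[\S5.10]{Bronstein05} passes over this too. The $\Q$-independence hypothesis must include the $2\sqrt{-1}\mu_{k'}$ of the hypertangent generators.
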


\begin{proof}
(K1) Let $\mathcal C:=\Const_{\partial_v}(\kappa_v)$ and let $A$ be the set of attaining generators; by hypothesis $\kappa_v=\mathcal C(\bar t_k:k\in A)$ with $\partial_v\bar t_k=\lambda_k\in\mathcal C$ for the primitive and $\partial_v\bar t_k=\mu_k\bar t_k$, $\mu_k\in\mathcal C^*$, for the hyperexponential generators. Let $A'\subseteq A$ be a maximal subset with $\{\bar t_k:k\in A'\}$ algebraically independent over $\mathcal C$, and $K':=\mathcal C(\bar t_k:k\in A')$; then $\kappa_v/K'$ is finite, and since $\partial_v$ extends uniquely to the algebraic closure of $K'$ and commutes with its automorphisms over $K'$, the norm and trace of $\kappa_v/K'$ satisfy $\partial_vN(\bar c)/N(\bar c)=\operatorname{Tr}(\partial_v\bar c/\bar c)$. Let $\bar c\in\kappa_v^*$ with $\partial_v\bar c=-a\rho_v\bar c$, $a\ne0$, and $N:=N_{\kappa_v/K'}(\bar c)\in K'^*$; since $-a\rho_v\in\mathcal C\subseteq K'$, $\partial_vN=-a'\rho_vN$ with $a':=a[\kappa_v:K']\ne0$. Write $N=\gamma\prod_lp_l^{e_l}$ with $\gamma\in\mathcal C^*$ and distinct monic irreducibles $p_l$ of $\mathcal C[\bar t_k:k\in A']$; then $\sum_le_l\partial_vp_l/p_l=-a'\rho_v\in\mathcal C$, and clearing denominators, $p_l\mid\partial_vp_l$ for each $l$. Now $\partial_v=\Lambda+\sum_{k\in A'_{\mathrm e}}\mu_k\bar t_k\,\partial/\partial\bar t_k$ with $\Lambda:=\sum_{k\in A'_{\mathrm p}}\lambda_k\,\partial/\partial\bar t_k$, the sums running over the primitive and the hyperexponential members of $A'$, does not raise the total degree, so $\nu_l:=\partial_vp_l/p_l\in\mathcal C$; writing $p_l=\sum_\alpha p_{l,\alpha}\,\bar t_{\mathrm e}^{\,\alpha}$ with $p_{l,\alpha}\in\mathcal C[\bar t_k:k\in A'_{\mathrm p}]$ and comparing the coefficients of the monomials $\bar t_{\mathrm e}^{\,\alpha}$ in $\partial_vp_l=\nu_lp_l$ gives $\Lambda p_{l,\alpha}=(\nu_l-\alpha\cdot\mu)\,p_{l,\alpha}$, and since $\Lambda$, a derivation with constant coefficients, lowers the degree of every nonconstant polynomial and kills the constants, $p_{l,\alpha}\ne0$ forces $\nu_l=\alpha\cdot\mu$. Hence $-a'\rho_v=\sum_le_l\nu_l\in\sum_k\Z\mu_k$, so $\rho_v$ is a $\Q$-linear combination of the $\mu_k$ (with no hyperexponential generator, $\rho_v=0$, i.e.\ $r\ge1$), contradicting the hypothesis. Hence $N_v=\{0\}$. The last sentence holds because the non-attaining generators are constants.

(K2) Since $t_i$ is independent, $C_i$ is closed under $D$, $\Const_D(C_i)=\Fbar$ by (T4), and by Lemma~\ref{lem:shiftgen} $\partial_v$ is the derivation induced by $D$ on $C_i\cong\kappa_v$ shifted by $\pi^{-s}$: at $v_\infty$ of a primitive or hyperexponential $t_i$, and at $(t_i)$, $(t_i\mp\sqrt{-1})$, one has $s=0$ and $\partial_v=D|_{C_i}$; at $v_\infty$ of a hypertangent $t_i$, $s=-1$ and $\partial_v=0$. The three computations of Remark~\ref{rem:topcase} then apply verbatim with $E$ replaced by $C_i$: for $t_i$ primitive, $\rho_v^{(1)}=-w\notin DC_i$ since otherwise $t_i-c$ would be a constant for some $c\in C_i$, contradicting $t_i\notin C_i$ (a transcendence-degree count: each generator above $t_i$ is transcendental over the field below it, which contains $t_i$); for $t_i$ hyperexponential at $v_\infty$ and at $(t_i)$, $\rho_v=\mp w$ and $a\in N_v$ would make $ct_i^{\pm a}$ a constant; for $t_i$ hypertangent at $(t_i\mp\sqrt{-1})$, $\rho_v=\pm2\sqrt{-1}\,w$ and $a\in N_v$ would make $c\,\bigl((t_i-\sqrt{-1})/(t_i+\sqrt{-1})\bigr)^{\pm a}$ a constant; at $v_\infty$ of a hypertangent, $\partial_v=0$ and $\rho_v\ne0$ give $N_v=\{0\}$ directly.

(K3) Here $\Const_{d/dx}(\Fbar(x))=\Fbar$, and a rational function is a derivative in $\Fbar(x)$ if and only if all its residues vanish (its polynomial part and its higher-order pole parts are derivatives, its simple-pole part is a derivative only if it is zero).

(K4) $\kappa_v=\Fbar(\bar t_{j+1},\dots,\bar t_n)$ with $\partial_v\bar t_k=\mu_k\bar t_k$ or $\mu_k(1+\bar t_k^2)$ is a tower of hyperexponential and hypertangent monomials over the constant field $\Fbar$ with derivation $\partial_v$: no new constants arise (for the hyperexponential layers by the $\Q$-independence of the $\mu_k$, by \cite[Thm.~5.1.2]{Bronstein05}; for the hypertangent layers by \cite[\S5.10]{Bronstein05}), so $\Const_{\partial_v}(\kappa_v)=\Fbar$. And no element of such a tower has a nonzero constant derivative: by induction on the number of layers, in the top layer $\bar t$ write $\bar c$ as a Laurent polynomial in $\bar t$ (hyperexponential) or a polynomial in $\bar t$ (hypertangent) plus partial fractions at the other places; $\partial_v$ preserves the order of every pole away from $0,\infty$ (resp.\ away from $\infty$), so those parts vanish, and the monomial (resp.\ polynomial) part has constant derivative only through its degree-zero term, which reduces to the layer below. Since $\rho_v^{(r)}\in\Fbar^*$ (the coefficients of $D\pi$ are constants, $\pi$ and $D\pi$ lying in $\Fbar K_0(y)$), hypothesis (ii) holds.
\end{proof}

\begin{example}[The Euler tower of $\int\arcsin(\sqrt{x+1})/\sqrt x\,dx$]\label{ex:eulerbounds}
The parametrisation $w=\sqrt x+\sqrt{x+1}$ gives the transcendental tower $\Q(i)(w,t)$, $Dw=2w^3/(w^4-1)$, $Dt=-2iw^2/(w^4-1)$, integrand $f=2tw/(w^2-1)$. At $w=\infty$: $\sigma_v(w)=v(Dw)-v(w)=1+1=2$ and $v(Dt)=2$, so $s=2$, $r=0$, $t$ attains and is primitive with $\lambda_t=-2i\in\Fbar$, $\rho_v=-2\in\Fbar$; (K1) gives $v(g)\ge\min(0,v(f)-2)=-1$, since $v(f)=1$. At the special prime $w$ ($w^3\mid Dw$): $\sigma_v(w)=3-1=2$, $v_w(Dt)=2$, so again $s=2$, $r=0$, (K1) applies, and $v_w(g)\ge\min(0,v_w(f)-2)=-1$ since $v_w(f)=1$. Hence $s=w$ in Step~10 with the \emph{proved} exponent $1$, and $\deg_wb\le1+1=2$ in Step~15 by Corollary~\ref{cor:lowerbounds}; with the top-variable bound $\deg_tb\le2$ the ansatz has $9$ coefficients where the guess of \cite[\S10.3]{Bronstein05} (with the exponent of $w$ found by retry) has $24$, and the system solves at once: $g=\bigl(i(w^2+1)+(w^2-1)t\bigr)/w$, both bounds attained. Every bound in force is proved; \S\ref{ex:arcsinb} follows the same integrand with a parameter $b$ in place of $1$, where the system has no solution and the one hypothesis of Proposition~\ref{prop:certificates}(b) that is not met is the completeness of the $S'$-units.
\end{example}

\begin{example}[The obstruction, and a chain]\label{ex:chainpreview}
At $v_\infty$ of a lower logarithm $t_i$ on which a later generator depends, (K2) fails and the hypothesis of Theorem~\ref{thm:lower}(ii) may fail with it. The second example of Part I, \S6.3, is the tower $t_1=\log(x)$, $t_2=\log(xt_1+1)$; at $v=v_\infty$ of $t_1$: $s=0$, $\rho_v=0$, $r=1$, $\rho_v^{(1)}=-1/x$, and $\partial_v\bar t_2=1/x$ is the leading coefficient of $Dt_2=(t_1+1)/(xt_1+1)$, so $\rho_v^{(1)}=\partial_v(-\bar t_2)$ is a derivative in $\kappa_v=\Fbar(x,\bar t_2)$: the second-order cancellation of Part I is the first link of a chain, and each further link costs one more degree in $t_2$ (the solution $c_r$ of $\partial_v\bar c_r=-ac\rho_v^{(1)}$ has degree one in $\bar t_2$, and so on). \S\ref{ex:chain} exhibits an integrand of this tower whose integral needs four links, on which the classical bound --- and the ``$+2$'' of the implementations before this section --- fails. In the same way, at the special prime $(W)$ of the tower $t_1=x$, $t_2=W$ with $DW=2W/(x(1+W))$ --- $W=W(x^2)$ the Lambert function, a non-monomial generator --- one has $s=0$, $r=0$ and $\rho_v=2/x$, the logarithmic derivative of $x^2$ in $\kappa_v=\Fbar(x)$, so $N_v=\Z$ and the exponent of $W$ remains a guess (the retry finds $1$).
\end{example}

\begin{remark}[What is decided and what is not]\label{rem:decided}
The algorithm (Algorithm~6) applies (K1)--(K4) at every place of Step~10 and Step~15 and records whether the bound is proved. On the fifty integrals of the companion paper, thirty-eight are computed with every bound in force proved on every rung of the retry ladder; in the other twelve the undecided place is a special prime over the curve variable that is a branch prime or at which the two coordinates of a derivative have equal valuation --- cases the implementations leave to the guess, though the theory decides them with the expansions of Part I, \S6.1 in place of the Gauss valuation. The hyperexponential clause of (K1) is used once: in $\int x^3e^{\arcsin(x)}/\sqrt{1-x^2}\,dx$ (Problem~10 there: $t_1=\arcsin(x)$, $t_2=e^{t_1}$ over $y^2=1-x^2$), at the place at infinity of the curve variable, $s=1$, $r=0$ and both upper generators attain the shift, with $\lambda_1=\mu=\mp i$ at the two places and $\rho_v=-1$; $\rho_v\notin\Q\mu$, so $N_v=\{a\in\Z:-a\rho_v\in\Z\mu\}=\{0\}$ and the degree in $x$ is $3$, where the classical guess is $5$. The implementations decide the membership $\rho_v\in\sum_k\Q\mu_k$ by linear algebra over $\Q$ in a common number field of $\rho_v$ and the $\mu_k$, and only when these are numbers: at a place at infinity they are read from the expansion of $y$, at a special prime from the classes modulo the prime when those are constants, and in every other case the place is left to the guess. Summed over the rungs, the ans\"atze shrink from $7053$ unknowns to $1966$ and the rungs from $103$ to $74$; the two systems of Example~\ref{ex:eulerbounds} are typical. What remains heuristic is exactly: the kernel degrees at a place where $\rho_v$ or some $\lambda_k$ is not a $\partial_v$-constant (a logarithmic derivative in the residue field, as at $(W)$ above); the second form at a lower primitive on which a later generator depends (the chains); and the second form at a place whose residue field carries constants beyond $\Fbar$ (a non-attaining upper generator together with an attaining hyperexponential one), where the cancellation condition involves the next coefficients of the non-attaining generators. Each of these is a Risch-type decision in $(\kappa_v,\partial_v)$ --- is $-a\rho_v$ a logarithmic derivative, is $\rho_v^{(r)}$ a derivative --- and the reduction systems of \cite{DuRaab25} are the natural tool; for the chains, Example~\ref{ex:chainpreview} shows the slack is bounded by one plus the degree bound in the variable above, which the implementations use as the guess when (K2) fails.
\end{remark}

\section{The algorithm}\label{sec:algorithm}

We now assemble the results of Sections~\ref{sec:ring}--\ref{sec:degree} into an algorithm at the level of detail of \textbf{ParallelIntegrate} in \cite[\S10.3]{Bronstein05}, whose structure --- a once-per-field phase followed by a per-integrand phase --- we retain. Two subalgorithms come first: \textsc{Classify}, which plays the role that \textbf{SplitFactor} plays in the transcendental algorithm (there, normal and special parts are separated by a gcd; here, by the valuation data of Section~\ref{sec:valuations}), and \textsc{Realise}, which has no transcendental counterpart: it converts the residue divisors of Corollary~\ref{cor:divisors} into logands, and is where the $S$-unit, norm and torsion machinery of Parts I--II enters.

\algheader{Algorithm 1}{$\textsc{Classify}(P)$\quad(* prime classification; Definition~\ref{def:normal} on generators *)}
\begin{algorithmic}[1]
\Statex (* Given a height-one prime $P$ of $\cA$, over the irreducible $p$ ($p\in R_0$ for type E, monic $p\in T$ for type M), return $(e_P,\eta_P,\delta_P,\mathrm{special}_P)$. Valuations of elements $\sum_ic_iw_i$ are computed from the coefficients: $v_P=\min_i\bigl(e_P\,v_p(c_i)+v_P(w_i)\bigr)$, with $v_P(w_i)$ from Lemma~\ref{lem:lowershift} at branch primes and $v_P(w_i)=0$ elsewhere. *)
\State $e_P\gets v_P(p)$ \Comment{$=m/\gcd(m,l_0)$ at a branch prime over $Q_{l_0}$; $=1$ otherwise}
\State $\eta_P\gets\max\bigl(0,\ \max_{g\in\mathcal G}(-v_P(Dg))\bigr)$, $\mathcal G=\{t_1,\dots,t_n,w_1,\dots,w_{m-1}\}$ \Comment{Lemma~\ref{lem:etagen}}
\State $\pi\gets p$ \textbf{if} $e_P=1$ \textbf{else} $w_{i^*}$ with $i^*l_0\equiv\gcd(m,l_0)\pmod m$ \Comment{Lemma~\ref{lem:lowershift}}
\State $\mathrm{special}_P\gets\bigl[\eta_P+v_P(D\pi)\ge1\bigr]$ \Comment{Definition~\ref{def:normal}: $v_P(\Dbar_P\pi)\ge1$}
\State \textbf{return} $(e_P,\ \eta_P,\ \delta_P=1+\eta_P,\ \mathrm{special}_P)$
\end{algorithmic}
\algfooter

In closed form, \textsc{Classify} returns the taxonomy of Propositions~\ref{prop:typeE} and \ref{prop:typeM}: over $\Dbar_0$-normal $p\in R_0$, $\eta_P=\max(e_P(1+v_p(\den_0))-1,\mu_P)$ with specialness iff $\mu_P\ge e_P(1+v_p(\den_0))$; over $\Dbar_0$-special $p$, always special; at moving primes, $\eta_P=v_p(h_u)$ with Bronstein's criterion $p\mid\Dbar_up$. The generic form above is what the implementation executes.

\algheader{Algorithm 2}{$\textsc{CanonicalResidue}(P,f)$\quad(* Proposition~\ref{rem:deepres} *)}
\begin{algorithmic}[1]
\Statex (* Given a normal height-one prime $P$ with uniformiser $\pi$ and $f\in L$, return $\hat\tau_P(f)\in\kappa(P)$. *)
\State $k\gets-v_P(f)-\delta_P$;\quad \textbf{if} $k<0$ \textbf{return} $0$ \Comment{Thm.~\ref{thm:residues}(ii)}
\State $\lambda_P\gets\overline{\pi^{\delta_P-1}D\pi}\in\kappa(P)^*$
\State \textbf{while} $k\ge1$: $\mu\gets\overline{\pi^{k+\delta_P}f}$;\ $c\gets$ a lift to $\cA_P$ of $-\mu/(k\lambda_P)$;\ $f\gets f-D(c\pi^{-k})$;\ $k\gets-v_P(f)-\delta_P$ \Comment{each pass raises $v_P(f)$}
\State \textbf{return} $\overline{f\pi/D\pi}$ \quad(* $=e_P\,(f\,p/Dp)|_P$ when $P$ is the only prime over $p$ with $v_P(p)=e_P$; Thm.~\ref{thm:residues}(i) *)
\Statex (* When $D$ acts trivially on the coefficients of $\kappa(P)((\pi))$ --- a constant place with $D=d/dx$ --- the loop is a Laurent expansion and the result is the $\pi^{-1}$-coefficient divided by $\overline{D\pi}$; in general the coefficients themselves are differentiated and the loop must be run as written. *)
\end{algorithmic}
\algfooter

\algheader{Algorithm 3}{$\textsc{Realise}(\Delta)$\quad(* residue divisor to logands; Corollary~\ref{cor:divisors} *)}
\begin{algorithmic}[1]
\Statex (* Given $\Delta=\sum_P\rho_PP\in\Div(\Fbar\cA)\otimes\Q$ supported on normal primes, return a list of pairs $(\gamma_j,u_j)$, $\gamma_j\in\Q$, $u_j\in(\Fbar L)^*$, with $\sum_j\gamma_j\dv(u_j)\big|_{\mathrm{normal}}=\Delta$; or \textbf{NonTorsion}; or ``failed''. *)
\State $M\gets$ least common denominator of the $\rho_P$;\quad $n_P\gets M\rho_P\in\Z$;\quad $\mathcal L\gets\emptyset$
\State \textbf{for} each irreducible $p$ (of $R_0$, or monic in $T$) carrying primes of $\supp\Delta$: \Comment{Lemma~\ref{lem:gauss}, Prop.~\ref{prop:split}}
\Statex \hskip1.5em (a) \textbf{if} $n_P=n$ at every prime over $p$: append $(n/M,\,p)$ to $\mathcal L$
\Statex \hskip1.5em (b) \textbf{else if} $p=a{g^*}^2+bg^*+c$ in a generator $g^*$ with $b^2-4ac=s^2q$ and $n_P$ constant on each of the two sheets: append $(n_\pm/M,\ 2ag^*+b\mp sy)$ \Comment{$N(2ag^*+b\mp sy)=4ap$}
\Statex \hskip1.5em (c) \textbf{else} (norm search): \textbf{for} small $k,d_b$ solve $a^2-qb^2=c\,p^{\,k}$ for $a,b\in R$, $\deg b\le d_b$; for a solution $u=a\pm by$ vanishing on a subset $\Sigma$ of the primes over $p$ with orders $\mathrm{ord}_P(u)$, \textbf{if} $n_P/\mathrm{ord}_P(u)=\nu$ is constant on $\Sigma$: append $(\nu/M,\,u)$, remove $\Sigma$ \Comment{$\dv(u)=\sum_\Sigma\mathrm{ord}_P(u)P$ + content; Prop.~\ref{prop:split}}
\Statex \hskip1.5em (d) \textbf{else} (torsion; implemented for $y^2=q$, $\deg q=3$, constant coordinates), on the places $P=(\rho,y_0)$ with $n_P\ne0$ left by (a)--(c), over all $p$ together:
\Statex \hskip3em \textbf{for} each such $P$ whose class $[P-\infty]$ has an order $\mu\le N_{\max}$ ($\psi_\mu(P)=0$, $\psi_N$ the division polynomials of $y^2=q$; or $\mu P=\infty$ by the group law): append $(n_P/(M\mu),\,h_P)$ with $h_P$ the Miller function, $\dv(h_P)=\mu P-\mu\infty$ ($h_{k+1}=h_k\,\ell_{kP,P}/v_{(k+1)P}$); remove $P$
\Statex \hskip3em $S\gets\sum_Pn_PP$ over the remaining $P$ by the group law;\quad $\mu\gets$ the order of $S$ ($1$ if $S=\infty$) \Comment{$\sum_Pn_P(P-\infty)$ has degree $0$ and class $[S-\infty]$}
\Statex \hskip3em \textbf{if} $\mu\le N_{\max}$: $u\gets$ the product of the lines $\ell$ and verticals $v$ of the additions that sum $\mu\sum_Pn_PP$ to $\infty$ ($\dv(\ell_{A,B}/v_{A+B})=A+B-(A+B)-\infty$), so that $\dv(u)=\mu\sum_Pn_P(P-\infty)$; append $(1/(M\mu),\,u)$; \textbf{else} leave $\Delta$ unrealised
\State \textbf{if} $\Delta$ is unrealised: \textbf{if} $[S-\infty]$ is provably non-torsion (reduction modulo good primes, Proposition~\ref{prop:nontorsion}; \cite[\S4.2]{Schultz15}) \textbf{return NonTorsion}; \textbf{else return} ``failed''
\State \textbf{return} $\mathcal L$
\end{algorithmic}
\algfooter

\algheader{Algorithm 4}{$\textsc{ParallelIntegrateMixed}(f,D)$\quad(* Parallel Integration, mixed towers *)}
\begin{algorithmic}[1]
\Statex (* Given a tower (T1)--(T4) and $f\in L$, return an elementary integral of $f$; or \textbf{NotElementary} with a certificate; or ``failed'', in which case it is unknown whether $f$ has an elementary integral over $L$. *)
\Statex (* Steps 1--6 need to be done once per field $L$. *)
\State flatten every pure-root layer (Lemma~\ref{lem:flatten}); \textbf{if} the tower is then a single generator $t_1$ over the curve with $Dt_1=r\in F(t_1)^*$, replace $D$ by $r^{-1}D$ and $f$ by $f/r$ (Lemma~\ref{lem:rescale}); fix $R=F[t_1,\dots,t_n]$, $w_i=y^i/E_i$, $\cA=\bigoplus_iRw_i$
\State $\den_0\gets\mathrm{lcm}_{i\le j}\,\den(Dt_i)$;\quad $h_u\gets\mathrm{lcm}_{i>j}\,\den_T(Dt_i)$ \Comment{Prop.~\ref{prop:typeM}}
\State $\cP\gets$ irreducible factors (over $F$ or $\Fbar$) of $\den_0$, of $Q_1\cdots Q_k$, of $h_u$, and of the $T$-numerators of the $Dt_i$, $i>j$
\State \textbf{for} $p\in\cP$, $P\mid p$: $(e_P,\eta_P,\delta_P,\mathrm{special}_P)\gets\textsc{Classify}(P)$;\quad $\fd_D\gets\sum\eta_PP$
\State $\cS\gets\{P\ \mathrm{special}\}$ \Comment{$=$ the list of Thm.~\ref{thm:specials} for monomial upper towers; else \textbf{FindSpecials} as in \cite[\S10.3]{Bronstein05}}
\State $\cU\gets$ generators of $\cO^*_{S'}/F^*$; record whether the search terminated (for $\deg q$ even: the continued fraction of $y$ found periodic within the height bound, or $[\infty_+-\infty_-]$ certified non-torsion by Proposition~\ref{prop:nontorsion}, in which case $\cU=\emptyset$ is complete) \Comment{Rem.~\ref{rem:blind}; continued fractions/$S$-units, Part I}
\Statex (* The remaining steps are done for each integrand; Steps 7--14 once, Steps 15--20 once per guessed input: a retry with a raised guess at a place that Algorithm~6 leaves undecided re-enters at Step 15, the specials split over $\Fbar$ at Steps 5--6 (Remark~\ref{rem:taxonomy}). *)
\State $a_i,d\gets$ canonical form $f=\sum_i(a_i/d)w_i$, $d\in R$, $\gcd(d,a_0,\dots)=1$ \Comment{Prop.~\ref{prop:ring}: rationalise by norms}
\State $d=d_\ast d_n$, $d_n\gets$ the factors of $d$ \emph{all} of whose primes are normal;\quad $d_n=\prod_ld_l^{\,l}$ squarefree \Comment{Cor.~\ref{cor:element}}
\State $D_v\gets\prod_{p\mid d_n}p^{\,\max_{P\mid p}\lceil\max(0,\,-v_P(f)-\delta_P)/e_P\rceil}$ \Comment{Hermite part; $D_v\mid\prod_ld_l^{\,l-1}$ by Thm.~\ref{thm:structure}(i)}
\State $s\gets\prod_{\sigma}\sigma^{e_\sigma}$ over the irreducible $\sigma\in R$ carrying a special prime (the factors of $d_\ast$ and the candidates $\cS$), with the exponents $e_\sigma$ of Step~15 \Comment{Cor.~\ref{cor:lowerbounds}(a); a guess only where Algorithm~6 is undecided}
\State $\cC\gets\bigl\{(P,\hat\tau_P(f)):P\ \mathrm{normal},\ v_P(f)\le-\delta_P,\ \hat\tau_P(f)\ne0\bigr\}$ by \textsc{CanonicalResidue} at every place over each such $p$ (the closed form $e_P(f\,p/Dp)|_P$ when $v_P(f)=-\delta_P$); \textbf{if} $t_n$ is hypertangent, include $v_\infty$ ($\delta=1$) \Comment{Thm.~\ref{thm:residues}, Prop.~\ref{rem:deepres}, Lemma~\ref{lem:atinfty}}
\State \textbf{for} $(P,\tau)\in\cC$: decide $\tau\in\Fbar$ \emph{exactly} --- write $\tau$ in the canonical form $a+b\sqrt A$ of $\kappa(P)$ over $\Fbar(\text{remaining generators})$; $\tau\in\Fbar$ iff $b=0$ and $a$ is constant; \textbf{if} some $\tau\notin\Fbar$: \textbf{return NotElementary}$(P,\tau)$ \Comment{Prop.~\ref{prop:certificates}(a)}
\State $r_1,\dots,r_l\gets\Q$-basis of $\{\hat\tau_P(f)\}$; $\Delta_k\gets\sum_P\rho_{k,P}P$; $\mathcal L_k\gets\textsc{Realise}(\Delta_k)$; on \textbf{NonTorsion} \textbf{return NotElementary}; on ``failed'' \textbf{return} ``failed'' \Comment{Cor.~\ref{cor:divisors}}
\State $g\gets f-\sum_k r_k\sum_{(\gamma,u)\in\mathcal L_k}\gamma\,Du/u$ \Comment{residue-free at every normal prime}
\State \textbf{for} every prime $P$ over a special $\sigma$, and every place $P$ over $t_i=\infty$, $i=1,\dots,n$ (the places of Section~\ref{sec:lower}): $(\varepsilon_P,\mathrm{proved}_P)\gets\textsc{PlaceBound}(P,g)$ (Algorithm~6);\quad $e_\sigma\gets\max_{P\mid\sigma}\lceil\varepsilon_P/e_P\rceil$;\quad $b_i\gets\deg_{t_i}(sD_v)+\max_{P\mid t_i=\infty}\lceil\varepsilon_P/e_P\rceil$ \Comment{Cor.~\ref{cor:lowerbounds}; at $t_n=\infty$ this is Cor.~\ref{cor:topdeg}, and for $n=1$, $D=d/dx$ it is Part I \cite[Cor.~7.4]{PartI}}
\State $b\gets\sum_{i=0}^{m-1}\sum_{\alpha\le\text{bounds}}u_{\alpha,i}\,t^\alpha w_i$ with undetermined $u_{\alpha,i}\in F$; unknown constants $\gamma_u$ ($u\in\cU$), $\alpha_\sigma$ ($\sigma\in\cS$)
\State clear denominators and equate the coefficients of equal monomials in each $w_i$-coordinate of
\[
g\;=\;D\Bigl(\frac{b}{s\,D_v}\Bigr)
+\sum_{u\in\cU}\gamma_u\frac{Du}{u}
+\sum_{\sigma\in\cS}\alpha_\sigma\frac{D\sigma}{\sigma}
\]
\State solve the linear system for the $u_{\alpha,i}$, $\gamma_u$, $\alpha_\sigma$ by \textsc{AssembleAndSolve} (Algorithm~5): every constant in one number field, the columns by polynomial arithmetic, the matrix reduced to row echelon form and the solution verified on it
\State \textbf{if} it has no solution: \textbf{if} every $\mathrm{proved}_P$ of Step~15 is true, $\cU$ is complete (Step~6 terminated), and $g$ is \emph{verified} residue-free at every affine place (\textsc{CanonicalResidue}, all sheets): \textbf{return NotElementary}(holomorphic remainder); \textbf{else return} ``failed'' \Comment{Prop.~\ref{prop:certificates}(b)}
\State verify by differentiation, and \textbf{return}
\[
\frac{b}{s\,D_v}\;+\;\sum_k r_k\sum_{(\gamma,u)\in\mathcal L_k}\gamma\log(u)\;+\;\sum_{u\in\cU}\gamma_u\log(u)\;+\;\sum_{\sigma\in\cS}\alpha_\sigma\log(\sigma) .
\]
\end{algorithmic}
\algfooter

\algheader{Algorithm 5}{$\textsc{AssembleAndSolve}(g,\,sD_v,\,\mathrm{bounds},\,\cU,\,\cS)$\quad(* Steps 17--18 of Algorithm~4 over one number field *)}
\begin{algorithmic}[1]
\Statex (* Given the residual $g=\sum_i(g_i/d_g)\,w_i$ of Step~14, the denominator $sD_v$, the degree bounds of Step~15 and the candidate sets $\cU$, $\cS$, return the coefficients $u_{\alpha,i}$, $\gamma_u$, $\alpha_\sigma$ of a solution of the system of Step~17, or ``no solution''. Every constant is carried as an element of one number field, and no operation is performed on an expression that contains an unknown. *)
\State $K\gets F(\theta)$, $\theta$ a primitive element of the field generated by the algebraic constants of $Dt_1,\dots,Dt_n$, $q$, $sD_v$, $g$, $\cU$ and $\cS$ --- the field of the tower (Remark~\ref{rem:onefield}), computed once and shared by every step; every constant is written as a polynomial in $\theta$ over $F$ \Comment{$K=F$ when all constants are rational}
\State $\Delta\gets\operatorname{lcm}\bigl(\den(Dt_1),\dots,\den(Dt_n)\bigr)$ in $K[t]$;\quad $\Lambda\gets(sD_v)^2\,\Delta\cdot2q$ \Comment{one denominator for every monomial column}
\State \textbf{for} each monomial $t^\alpha$ within the bounds and each coordinate $w_i$: the column $D(t^\alpha w_i/sD_v)$ as a tuple of numerators over $\Lambda$, by the product rule in $K[t]$ --- for $m=2$, $D(h)=\sum_k\bigl(\partial_{t_k}h\cdot sD_v-h\,\partial_{t_k}(sD_v)\bigr)Dt_k/(sD_v)^2$ with $h=t^\alpha$, and $D(hy)=(Dh)\,y+h\,Dq/(2y)$ reduced modulo $y^2=q$ --- with no cancellation \Comment{$m\ge3$: by the derivation on the Trager basis, one cancelled tuple per column}
\State \textbf{for} $u\in\cU$ and $\sigma\in\cS$: the columns $Du/u$ and $D\sigma/\sigma$, cancelled once each
\State \textbf{for} each coordinate $w_i$: $L_i\gets$ the lcm in $K[t]$ of the distinct denominators of the columns and of $g_i$; multiply each numerator by $L_i$ over its denominator; the coefficient of each monomial of $K[t]$ is one row of the matrix $A$, whose columns are the unknowns in a fixed order, and of the right-hand side $b$ \Comment{the equations of Step~17 up to a polynomial factor per coordinate}
\State $[A\mid b]$ to reduced row echelon form over $K$ (sparse); \textbf{if} the last column is a pivot column: \textbf{return} ``no solution''; \textbf{else} $x\gets$ the solution with the non-pivot unknowns $0$ \Comment{unique for the fixed column order: the row space of $A$ is the annihilator of the solution space, so the pivot columns do not depend on the scaling}
\State verify $Ax=b$ in $K$; \textbf{return} $x$, with the rational part $\sum_\alpha x_{\alpha,i}\,t^\alpha/(sD_v)$ cancelled in $K[t]$ and every coefficient written back in the radicals of $F$
\end{algorithmic}
\algfooter

\algheader{Algorithm 6}{$\textsc{PlaceBound}(v,g)$\quad(* Theorem~\ref{thm:lower} and Proposition~\ref{prop:kernelfree} at one place *)}
\begin{algorithmic}[1]
\Statex (* Given a place $v$ of one of the kinds (P$_\infty$), (P$_\sigma$), ($\infty_i$), (M) of Section~\ref{sec:lower}, with uniformiser $\pi$ and coefficient field $C$ generated by $c_1,\dots,c_l$, and the residual $g$ of Step~14, return $(\varepsilon_v,\mathrm{proved})$ with $v(\text{rational part})\ge-\varepsilon_v$: proved when $\mathrm{proved}$ is true, the classical guess otherwise. Valuations of tuples $\sum_ia_iw_i$ at the places over $t_1=\infty$ are read from the expansion of $y$ (Part I, \S6.1); elsewhere $v$ is a Gauss valuation and $v(\sum_ia_iw_i)=\min_iv(a_iw_i)$. *)
\State $\sigma_k\gets v(Dc_k)$ for $k=1,\dots,l$;\quad $\sigma_\pi\gets v(D\pi)-1$;\quad $s\gets\min(\sigma_\pi,\sigma_1,\dots,\sigma_l)$;\quad $r\gets\sigma_\pi-s$ \Comment{Lemma~\ref{lem:shiftgen}}
\State $A\gets\{k:\sigma_k=s\}$ (the attaining generators);\quad $\lambda_k\gets$ the coefficient of $\pi^{s}$ in $Dc_k$ for $k\in A$;\quad $\rho\gets$ the coefficient of $\pi^{s+1+r}$ in $D\pi$
\State $\mathrm{proved}\gets$ \textbf{true} \textbf{if} one of the following holds, \textbf{false} otherwise: \Comment{Prop.~\ref{prop:kernelfree}}
\Statex \hskip1.5em (K2) $v$ belongs to a monomial $t_i$ ($v_\infty$, or $(t_i)$, $(t_i\mp\sqrt{-1})$) on which no other generator depends;
\Statex \hskip1.5em (K1) $r=0$, every $c_k$ with $k\in A$ is a primitive or hyperexponential upper generator, $\rho$ and the $\lambda_k$ are free of $\{\bar c_k:k\in A\}$, and $\rho$ is not a $\Q$-linear combination of the $\mu_k=\lambda_k/\bar c_k$ of the hyperexponential ones (the implementations decide this by linear algebra over $\Q$ in a common number field when $\rho$ and the $\mu_k$ are numbers, and guess otherwise);
\Statex \hskip1.5em (K3) $r\ge1$, $\kappa_v=\Fbar(x)$ with $\partial_v=d/dx$, and $\rho$ has a nonzero residue (Rothstein--Trager);
\Statex \hskip1.5em (K4) $r\ge1$, $v$ lies over the curve variable, every upper generator attains and is hyperexponential or hypertangent with constant $\mu_k$, the hyperexponential $\mu_k$ linearly independent over $\Q$ (the implementations accept at most one)
\State $\varepsilon_v\gets\max\bigl(0,\,s-v(g)\bigr)+r$;\quad \textbf{if not} $\mathrm{proved}$: $\varepsilon_v\gets\varepsilon_v+e$, $e$ the retry count $0,1,2,\dots$ \Comment{Cor.~\ref{cor:lowerbounds}; the guess contains \cite[(10.4)]{Bronstein05}}
\State \textbf{return} $(\varepsilon_v,\mathrm{proved})$
\end{algorithmic}
\algfooter

\begin{remark}[Return taxonomy]\label{rem:taxonomy}
\textbf{ParallelIntegrate} of \cite[\S10.3]{Bronstein05} has two exits: an integral, or ``failed'' with unknown status. The mixed algorithm has three, and the boundary between them is exactly the proved/heuristic boundary of the theory. \textbf{NotElementary} is returned with a machine-checkable certificate --- a non-constant residue (Theorem~\ref{thm:residues}(iii)) or a non-torsion residue divisor (Corollary~\ref{cor:divisors}) --- and is rigorous. ``failed'' can arise only from a guessed input --- a special exponent or a degree bound at a place that Algorithm~6 leaves undecided (Remark~\ref{rem:decided}), or an inconclusive unit search; a retry with another guess costs Steps 15--20 alone, the classification, the residues, their realisation and the unit and $S'$-unit searches being computed once per integrand and reused. When every bound in force is proved --- always for $n=1$ with $D=d/dx$, where Step~15 returns the exact bounds of Part I (Lemma~\ref{lem:rescale} reduces $Dt_1\in F(t_1)^*$ to this case), and in the mixed towers of Section~\ref{sec:examples} at every place except those of Example~\ref{ex:chainpreview} --- insolvability of the final linear system after all residues have been realised is itself a certificate: the residual $f-\sum_k(r_k/M)D\tilde u_k/\tilde u_k$ has no residues, and is therefore a differential of the second kind that is not exact; we call it the \emph{holomorphic remainder}; on $\int dx/((x-2)\sqrt{x^3+1})$, whose residue divisor is $3$-torsion yet whose remainder is $\tfrac13\,dx/y$, the algorithm certifies non-elementarity this way. Everything else --- the denominator $D_v$, the logand divisors, the unit and special candidate sets, the degree bounds and special exponents at every decided place --- is proved in Sections~\ref{sec:structure}--\ref{sec:degree}.
\end{remark}

The two ways in which the algorithm proves non-elementarity deserve a precise statement, since the classical parallel method proves nothing by failing.

\begin{proposition}[Certificates]\label{prop:certificates}
Let $f\in L$.
\begin{enumerate}
\item[(a)] \textup{(Residue certificate.)} If at some normal height-one prime $P$ of $\Fbar\cA$ \textup(or the hypertangent place at infinity\textup) the canonical residue $\hat\tau_P(f)\in\kappa(P)$ of Proposition~\ref{rem:deepres} does not lie in $\Fbar$, then $f$ has no elementary integral over $L$.
\item[(b)] \textup{(Holomorphic-remainder certificate.)} Let $m=2$ with $q$ squarefree, suppose that every bound in force at Step~15 is proved \textup(Theorem~\ref{thm:lower} at every special prime and at every place at infinity; for $n=1$, $D=d/dx$ --- after Lemma~\ref{lem:rescale} when $Dt_1\in F(t_1)^*$ --- this is automatic, and the bounds are those of Part~I \cite[Cor.~7.4]{PartI}\textup), and suppose the algorithm has produced $\tilde u_1,\dots,\tilde u_l\in(\Fbar L)^*$ and $r_k\in\Fbar$ such that
\[
g\;:=\;f-\sum_k\frac{r_k}{M}\frac{D\tilde u_k}{\tilde u_k}
\]
has zero residue at every affine place of the curve, that the group of units of $\Fbar\cO$ is known \textup(trivial modulo $\Fbar^*$ when $\deg q$ is odd; generated by the fundamental unit $\varepsilon$ when $\deg q$ is even and the continued fraction of $y$ has been found periodic\textup), and that the linear system of Step~17 for $g$ --- with the rational part ranging over the finite-dimensional space cut out by Theorem~\ref{thm:structure}(i) at the normal primes and by Corollary~\ref{cor:lowerbounds} at the special primes and at the places at infinity, and the logarithmic part over the span of $\cU$ and $\cS$ \textup(for $n=1$: $\gamma\,D\varepsilon/\varepsilon$\textup) --- is inconsistent. Then $f$ has no elementary integral over $L$.
\end{enumerate}
\end{proposition}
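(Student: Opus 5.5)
Part (a) is immediate from Proposition~\ref{rem:deepres}: if $f$ had an elementary integral, take a representation \eqref{eq:liou} chosen as in Theorem~\ref{thm:structure}. Then $\hat\tau_P(f)=\sum_ic_iv_P(u_i)$ at every normal prime, whatever the pole order of $f$ at $P$. Since $c_i\in\Fbar$ and $v_P(u_i)\in\Z$, this lies in $\Fbar$, contradicting the hypothesis. At the hypertangent place at infinity I would run the same argument, using Lemma~\ref{lem:atinfty}(c) ($\delta=1$, finite pole order) together with the remark that the local proofs of Section~\ref{sec:structure} apply at any such valuation. Nothing further is needed; (T4) enters only in that a residue outside $\Fbar$ is genuinely nonconstant.

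For (b) I would argue by contradiction. Suppose $f=Dv+\sum_ic_iDu_i/u_i$ with the normalisation of Theorem~\ref{thm:structure}. Then $g=f-\sum_k(r_k/M)D\tilde u_k/\tilde u_k$ also has an elementary integral, with data $v$ and the logands $u_i,\tilde u_k$. Regroup by Lemma~\ref{lem:regroup} to get $g=Dv'+\sum_h\beta_hD\nu_h/\nu_h$ with the $\beta_h$ $\Q$-independent. Because $g$ has zero residue at every affine place, Theorem~\ref{thm:residues}(iii) (or Proposition~\ref{rem:deepres} at deep poles) gives $\sum_h\beta_hv_P(\nu_h)=0$ at every normal $P$, hence $v_P(\nu_h)=0$ there. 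The step needing care is the special primes: for $m=2$, $n=1$, $D=d/dx$ (after Lemma~\ref{lem:rescale}) there are none, and every affine place is normal. The plan therefore also uses zero residues at \emph{all} affine places, including branch places, to conclude that each $\nu_h$ is a unit of $\Fbar\cO$. By the hypothesis on the unit group, $\nu_h\in\Fbar^*\varepsilon^{\Z}$ (or $\Fbar^*$ for odd degree). So the logarithmic part is $\gamma\,D\varepsilon/\varepsilon$, in the span of $\cU$. In the general mixed case the special-prime logands lie, by Proposition~\ref{prop:split}, in the span of $\cS$ times $S'$-units of $\cO$; I would take completeness of $\cU$ to cover those as well.

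Next I would show that $v'$ lies in the finite-dimensional space of the ansatz. At normal primes, Theorem~\ref{thm:structure}(i) bounds $\dv_\infty(v')$ by the Hermite part $D_v$. At special primes and at every place at infinity, Theorem~\ref{thm:lower}, through Corollary~\ref{cor:lowerbounds}, gives the exponents $e_\sigma$ and degree bounds $b_i$, proved by hypothesis. For $n=1$ these are Part~I's exact bounds, and no guess is involved. Hence $v'=b/(sD_v)$ with $b$ in the span of the monomials $t^\alpha w_i$ within the bounds, and $(b,\gamma,\alpha_\sigma)$ solves the system of Step~17. This contradicts its inconsistency. One point to check is that the bounds are computed from $g$ rather than $f$; this is harmless, because Theorem~\ref{thm:lower} is applied to $g$ with its own elementary-integral data.

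The main obstacle is making precise that the residue-free hypothesis kills every logand modulo units. At branch places one must use $\hat\tau$ with $\delta_P=e_P$ to get $\sum\beta_hv_P(\nu_h)=0$, and $\Q$-independence must be invoked after regrouping, not before. A second subtlety is the place at infinity of the curve, where the residues are \emph{not} assumed zero. For even degree, a logand supported on $\infty_\pm$ only is exactly a unit, so the known unit group absorbs it. For odd degree, a function with divisor supported at the single place $\infty$ is constant, so nothing survives. I would handle this by an explicit divisor-degree argument.
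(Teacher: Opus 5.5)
Your proposal is correct and follows the paper's proof essentially step for step. Part (a) is read off Proposition~\ref{rem:deepres}. For (b), you pass to $g$, regroup to $\Q$-independent coefficients, and use the vanishing canonical residues at the normal (for $n=1$, all affine) primes to reduce the logands to units, special logands and $S'$-units, which the completeness hypothesis covers; Theorem~\ref{thm:structure}(i) and Theorem~\ref{thm:lower} then place the rational part in the ansatz space, contradicting inconsistency, and your extra checks (bounds computed from $g$, the divisor-degree argument at infinity) are consistent refinements of points the paper absorbs into the unit-group hypothesis.
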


\begin{proof}
(a) is Proposition~\ref{rem:deepres}: $\hat\tau_P(f)=\sum_ic_iv_P(u_i)\in\Fbar$ for every elementary integral, and $\Fbar$ is embedded in $\kappa(P)$.

(b) Suppose $f$ has an elementary integral. Then so does $g$, and by Theorem~\ref{thm:structure} $g=Dv+\sum_i c_iDu_i/u_i$ with $v\in\Fbar L$, $u_i\in(\Fbar L)^*$, and the $c_i$ may be taken $\Q$-linearly independent by regrouping. At every normal prime $P$ --- every affine prime when $n=1$ with $D=d/dx$, since there is no $\den_0$ and the branch primes have $\delta_P=e_P$ by Proposition~\ref{prop:typeE} --- Theorem~\ref{thm:residues}(iii) and Proposition~\ref{rem:deepres} give $0=\hat\tau_P(g)=\sum_ic_iv_P(u_i)$, whence $v_P(u_i)=0$ for all $i$ by $\Q$-independence: each $u_i$ is an $S'$-unit, its divisor supported on the special primes and at infinity, hence a constant times a product of powers of the special logands $\sigma\in\cS$ and of the generators $\cU$ of $\cO^*_{S'}$, which are complete by hypothesis; for $n=1$ this is a constant times a power of $\varepsilon$, and $\sum_ic_iDu_i/u_i=\gamma\,D\varepsilon/\varepsilon$ \textup($\gamma=0$ when $\deg q$ is odd\textup). The valuation lemma bounds the denominator of $v$ at the normal primes exactly \textup(Theorem~\ref{thm:structure}(i)\textup), and Theorem~\ref{thm:lower} bounds it at every special prime and at every place at infinity, all bounds in force being proved; so $v$ lies in the finite-dimensional $\Fbar$-space over which the linear system was solved, the logarithmic part lies in the span of its columns, and the system would have a solution --- contradiction.
\end{proof}

\begin{remark}
Two points are not decorative. First, ``does not lie in $\Fbar$'' in (a) is a membership question in the field $\kappa(P)$, which must be decided exactly: the residue is reduced to the canonical form $a+b\sqrt A$ with $a,b$ rational functions of the remaining generators over the algebraic constants, and is constant iff $b=0$ and $a\in\Fbar$; a residue that is a constant in disguise, such as $(x+\sqrt{x^2+1})(x-\sqrt{x^2+1})$, must not be mistaken for an obstruction. Second, (b) requires the residual $g$ to be \emph{verified} residue-free rather than assumed so, since an incorrectly realised logand would make the system fail for a reason that proves nothing about $f$; the implementation recomputes the classical residues of $g\,dx$ at every affine place before issuing the certificate. Outside (a) and (b) --- a special exponent or a degree bound at a place that Algorithm~6 leaves undecided, an inconclusive unit search --- the algorithm returns ``failed'' and claims nothing.
\end{remark}

An inconclusive unit search need not stay inconclusive. For $m=2$ the units of $\Fbar\cO$ are the functions with divisor $k(\infty_+-\infty_-)$, and their existence is a torsion question that reduction modulo primes decides in the negative --- the method Schultz \cite[\S4.2]{Schultz15} uses to complete Trager's algorithm, which for our curves needs no divisor arithmetic at all.

\begin{proposition}[Non-torsion by reduction]\label{prop:nontorsion}
Let $q\in\Q[x]$ be squarefree of even degree $2g+2$, $X$ the curve $y^2=q$, and $c=[\infty_+-\infty_-]\in\Jac(X)(\Q)$. For a prime $p$ not dividing $2\,\mathrm{lc}(q)\operatorname{disc}(q)$ at which $\mathrm{lc}(q)$ is a square, let $N_p$ be the order of the reduction of $c$ in $\Jac(X_{\F_p})(\F_p)$; $N_p$ is the degree of the fundamental unit of $\F_p[x][y]/(y^2-q)$, the first convergent $h_k+yk_k$ of the \textup(periodic\textup) continued fraction of $\sqrt q$ over $\F_p$ with $h_k^2-qk_k^2\in\F_p^*$. If $c$ has finite order $N$ then $N=N_pp^{a_p}$ with $a_p\ge0$ for every such $p$. Consequently, if for two such primes $p_1\neq p_2$ the ratio $N_{p_1}/N_{p_2}$, in lowest terms, has a numerator that is not a power of $p_1$ or a denominator that is not a power of $p_2$, then $c$ is not torsion, $\Fbar\cO^*=\Fbar^*$, and no logand of an elementary integral over $L$ can have its divisor supported at infinity.
\end{proposition}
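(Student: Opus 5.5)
The plan is to combine the standard reduction-mod-$p$ injectivity theorem for torsion in Jacobians with the Abel--Jacobi description of units of $\F_p[x][y]/(y^2-q)$, and then do elementary arithmetic on the orders.

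\emph{Step 1: $N_p$ is the order of the reduction.} Fix an admissible $p$. Since $p\nmid 2\,\mathrm{lc}(q)\operatorname{disc}(q)$, the model $y^2=q$ has good reduction at $p$: the reduced polynomial $\bar q$ is squarefree of degree $2g+2$ and the smooth projective model is a genus-$g$ curve over $\F_p$. Since $\mathrm{lc}(q)$ is a square mod $p$, the two points at infinity $\infty_\pm$ are $\F_p$-rational, so $\bar c=[\infty_+-\infty_-]\in\Jac(X_{\F_p})(\F_p)$. This group is finite, so $\bar c$ has a finite order $N_p$. A unit $u$ of $\F_p[x][y]/(y^2-\bar q)$ is exactly a function whose divisor is supported at infinity, i.e.\ $\dv(u)=k(\infty_+-\infty_-)$, because a degree-zero divisor on $\{\infty_\pm\}$ has this form. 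Hence the unit group modulo constants is infinite cyclic, generated by a unit with $k=N_p$. Writing $u=h+yk'$, the degree of $h$ (the pole order at $\infty_-$ up to sign) is $N_p$. The identification of this fundamental unit with the first convergent of the continued fraction of $\sqrt q$ having constant norm is the classical Abel--Chebyshev correspondence, which I will cite from Part I, since it is the same continued-fraction statement used there over $\Q$ and it holds verbatim over $\F_p$, $p$ odd.

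\emph{Step 2: divisibility.} Suppose $c$ has finite order $N$. Reduction $\Jac(X)(\Q)\to\Jac(X_{\F_p})(\F_p)$ is a group homomorphism at good primes, so $N_p\mid N$. Prime-to-$p$ torsion injects under reduction at good primes; this is the standard fact for abelian varieties with good reduction, via the formal group having no prime-to-$p$ torsion. Write $N=N'p^{a}$ with $p\nmid N'$. The point $p^{a}c$ has order $N'$, which is prime to $p$, so its reduction has order exactly $N'$. Hence $N'\mid N_p$. Combined with $N_p\mid N$, this gives $N_p=N'p^{b}$ with $b\le a$, so $N=N_p\,p^{a-b}$ as claimed. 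This step is the main obstacle, because it is the one place where an external theorem is needed; I will invoke it in the form ``torsion of order prime to $p$ injects'', citing Schultz~\cite[\S4.2]{Schultz15} and the standard references there.

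\emph{Step 3: the certificate.} From $N=N_{p_1}p_1^{a_1}=N_{p_2}p_2^{a_2}$ we get $N_{p_1}/N_{p_2}=p_2^{a_2}/p_1^{a_1}$. Since $p_1\ne p_2$, this fraction is already in lowest terms, so its numerator is a power of $p_2$ and its denominator a power of $p_1$. Contrapositively, if the reduced ratio $N_{p_1}/N_{p_2}$ violates this shape, then $c$ is not torsion. (The statement's wording should be read this way: the numerator must be a power of $p_2$ and the denominator a power of $p_1$.)

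Finally, non-torsion of $c$ means no nonconstant function over $\Fbar$ has divisor $k(\infty_+-\infty_-)$, because such a function would make $kc=0$ over $\Fbar$, and torsion in $\Jac(X)(\Q)$ is unchanged by passing to $\Fbar$. Therefore $\Fbar\cO^*=\Fbar^*$. Any logand whose divisor is supported at infinity has degree-zero divisor $k(\infty_+-\infty_-)$, so $k=0$ and the logand is constant, contributing $D u/u=0$.
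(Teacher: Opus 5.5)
Your proposal is correct and is essentially the paper's proof: good reduction and $\F_p$-rationality of $\infty_\pm$; units of $\F_p[x][y]/(y^2-q)$ as the functions with divisor $k(\infty_+-\infty_-)$, so that $N_p$ is the degree of the fundamental unit; injectivity of prime-to-$p$ torsion under reduction (the paper cites Silverman, Prop.~VII.3.1, and Katz's appendix); and the arithmetic yielding $N=N_pp^{a_p}$. Your correction of the final criterion is also right: that relation makes the reduced ratio $p_2^{a_2}/p_1^{a_1}$, so the printed condition has $p_1$ and $p_2$ interchanged and does not follow from it as stated, though the printed condition is still true, because for odd $p$ of good reduction the kernel of reduction over $\Q_p$ is torsion-free ($e=1<p-1$) and so in fact $N_p=N$.
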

\begin{proof}
$X$ has good reduction at $p$ and $\infty_\pm$ are $\F_p$-rational under the stated conditions, so $c$ reduces to a class $\bar c$ of $\Jac(X_{\F_p})(\F_p)$. A function on $X_{\F_p}$ with divisor $k(\infty_+-\infty_-)$ is a unit of $\F_p[x][y]$, and conversely; the least $k>0$ is the order $N_p$ of $\bar c$, and it is the degree of the fundamental unit since $v_{\infty_-}(h+yk)=-\deg h$ for the convergents. The continued fraction of $\sqrt q$ over a finite field is periodic, so the search terminates. Reduction is injective on the torsion of order prime to $p$ --- \cite[Prop.~VII.3.1]{Silverman09} for $g=1$, and for the Jacobian of any genus by the good-reduction case of \cite[Appendix]{Katz81} --- so if $c$ has order $N=p^{a}N'$ with $p\nmid N'$, then $\bar c$ has order $N'p^{b}$ with $0\le b\le a$, i.e.\ $N_p\mid N$ with $N/N_p$ a power of $p$. Two primes with incompatible $N_{p_i}$ therefore exclude every finite order. The last assertion is Corollary~\ref{cor:divisors}(ii): an $S'$-unit of $\cO$ supported at infinity has divisor $k(\infty_+-\infty_-)$.
\end{proof}

Proposition~\ref{prop:nontorsion} converts Cohen's $-72$ variant from ``failed'' into a certificate (\S\ref{ex:cohen}): with no units, the hypothesis of Proposition~\ref{prop:certificates}(b) that the unit group be known is met, and the exact linear system decides. The positive direction --- confirming a candidate order and constructing the unit --- is the continued fraction over $\Q$ itself, and for the one-place curves of Algorithm~3(d) the division polynomials.

\begin{remark}[Block structure]\label{rem:blocks}
The linear system of step 15 is block-structured by $w_i$-coordinate and by power of the top variable, as every worked example of Section~\ref{sec:examples} displays; the blocks are the decoupled Risch differential equations of \cite{Bronstein89}'s $n$-th-root case, solved simultaneously. Special logands supported on the branch enter only the coordinate blocks fixed by the corresponding inertia (observed in \S\ref{ex:flagship}), which the implementation exploits to shrink the system.
\end{remark}

\begin{remark}[One number field]\label{rem:onefield}
Algorithm~5 is the part of the implementation whose cost the theory does not set, and it is where the first version of the implementation spent most of its time. The columns were formed as expressions, the equations were read off by constructing a polynomial whose coefficients were linear forms in the unknowns, the solution was checked by substituting it into every equation and simplifying, and the rational part was cancelled as an expression --- and every one of these operations on an expression with an algebraic constant (a norm such as $2\pm2\sqrt2$, a root of a special split over $\Fbar$) re-derived the number field or ran the simplifier's assumption queries on nested radicals. With every constant an element of one field $K=F(\theta)$, computed once per system, and the assembly done in the sparse polynomial ring $K[t]$, the assembly and the verification become a negligible part of the running time; the primitive element is the one the generic linear solver would compute internally in any case, so nothing new is computed, it is computed once. On the fifty integrals of the companion paper this reduced the total running time from $369$ to $104$ seconds, the three-generator systems of $229$--$357$ unknowns from $17$--$60$ seconds to under four, with every result unchanged up to the normalisation of its rational part. The same field then became the field of the whole computation: the constants of a tower generate a number field that is built once and cached for the run (extended by a new field when a new constant appears --- an $S'$-unit norm, a root of a split special), and the coordinate arithmetic of $\cA$ (Section~\ref{sec:ring}), the Hensel and linear realisation of residue classes (Algorithm~3), the group law and the Miller functions of Algorithm~3(d) all run in $K[t]$ with exact zero tests, so that no operation of the implementation re-derives a field or simplifies an expression to test it for zero; G\"unther's integral (\S\ref{ex:gunther}), whose six order-six classes had cost seven and a half minutes of such tests, takes four seconds. What remains in the generic expression layer is the norm search of Algorithm~3(c), a nonlinear solve over $\Q$. The Mathematica port of the implementation (\texttt{ParallelMixed.wl}) does the same with \texttt{ToNumberField}: the constants become \texttt{AlgebraicNumber}s of one field, whose arithmetic is canonical (a zero is $0$), the augmented matrix is row reduced exactly, and the solution is written back in the radicals of $F$ by linear algebra over $\Q$ on a monomial basis of the field; the built-in \texttt{Solve} cannot decide whether a combination of \texttt{Root} objects vanishes, and \texttt{Together} on pairs that carry the unknowns is intractable, which had made the conic cases of the companion paper run for hours in that port.
\end{remark}

\section{Examples}\label{sec:examples}

Each example below follows the same format: the integral and, where applicable, its provenance in the literature; its solution by the parallel algorithm of Section~\ref{sec:algorithm}; and the SymPy session that computes it, so that theory and computation are read side by side. The sessions use \texttt{parallel\_mixed.py}, the SymPy implementation of Algorithms 1--4 accompanying this paper (for $m\le2$ and squarefree $q$; every returned integral is verified by differentiation), with the once-only preamble
\begin{verbatim}
>>> from sympy import symbols, S
>>> from parallel_mixed import Tower, parallel_integrate_mixed
>>> x, t, u = symbols('x t u', positive=True); q = x**2 + 1
\end{verbatim}
integrands are passed as coordinate pairs $(f_0,f_1)\leftrightarrow f_0+f_1y$ on the basis $(1,y)$, and every returned integral is verified by differentiation. The script \texttt{examples.py} accompanying the paper runs the entire section.

\subsection{The flagship: $\int\log(x+\sqrt{x^2+1})\,dx$}\label{ex:flagship}

Tower and $\fd_D$ as in Example~\ref{ex:dendiv}. The integrand $f=t$ has no finite poles, so $S=\{(y)\}$, with candidate logands the special $y$ and the Pell unit $u=x+y\in\cO^*$ ($u\bar u=-1$); no moving logands. The ansatz is
\[
\int f=\sum_{k=0}^{2}(A_k+B_ky)t^k+\alpha\log(y)+\beta\log(x+y),\qquad \deg_xA_k,B_k\le1 ,
\]
using $\deg_t\le\deg_tf+1=2$ (primitive bound) and the transfer bound in $x$. Differentiating with $Dt^k=kt^{k-1}/y$, $D\log(y)=x/(x^2+1)$, $D\log(x+y)=1/y$ and equating $w_i$-coordinates per power of $t$: the $t^2$-equation forces $B_2=0$, $A_2$ constant; the $t^1$-equation gives $A_1'=1$ and $B_1'(x^2+1)+B_1x+2A_2=0$, hence $A_1=x$, $B_1=0$, $A_2=0$; the $t^0$-equation splits into the rational part $A_0'+\alpha x/(x^2+1)=0$, forcing $\alpha=0$, and the $y$-part $B_0'(x^2+1)+B_0x+x+\beta=0$, solved by $B_0=-1$, $\beta=0$. Thus
\[
\int\log\bigl(x+\sqrt{x^2+1}\bigr)\,dx=x\log\bigl(x+\sqrt{x^2+1}\bigr)-\sqrt{x^2+1}.
\]
Note the per-$w_i$, per-$t$-power block structure of the linear system: Bronstein's decoupled Risch equations \cite{Bronstein89} appearing as block diagonality.

\begin{verbatim}
>>> T = Tower([x, t], [(1,0), (0, 1/q)], q=q)      # t = log(x+y)
>>> parallel_integrate_mixed((t, 0), T)
t*x - sqrt(x**2 + 1)
\end{verbatim}

\subsection{An irreplaceable unit logand: $\int\bigl(1+xe^{\sqrt{x^2+1}}\bigr)/\sqrt{x^2+1}\,dx$}\label{ex:unit}

In the flagship tower the unit's logarithm \emph{is} the generator $t$, so its coefficient is linearly dependent on the polynomial part. To force the unit, take the hyperexponential $t=e^{y}$ over the same curve: $Dt=t\,x/y$, and
\[
f=\frac{1+x\,t}{y}=\frac{(1+xt)\,y}{x^2+1},
\]
so $a_0=0$, $a_1=1+xt$, $d=x^2+1$. The tower data: $\fd_D=\dv(y)$ as in Example~\ref{ex:dendiv} (now via $-v_{(y)}(Dt)=1$), and $(y)$ is again normal with $\delta_{(y)}=2$ --- the tie case $\mu=1=e(1+\nu)-1$ of Proposition~\ref{prop:typeE}. The moving denominator is trivial, and the only special prime is $(t)$.

The integrand has a single finite pole, of order $1$ at the branch prime: $v_{(y)}(f)=-1$. Since $\delta_{(y)}=2$, this pole is \emph{sub-critical}: $v_{(y)}(f)=1-\delta_{(y)}$, so $(y)\notin S$, no Hermite denominator arises, and $\tau_{(y)}(f)=0$ by Theorem~\ref{thm:residues}(ii). (Indeed both $Dt$ and $D\log(x+y)=1/y$ produce order-$1$ poles at the branch: below the threshold $\delta_P$, poles are created by $Dv$ and by logarithmic derivatives alike --- an algebraic relaxation invisible in the transcendental theory, where $\delta_P=1$ leaves no room below the threshold.) All residues therefore vanish, and by Remark~\ref{rem:blind} the candidate logands are exactly the residue-invisible ones: the Pell unit $x+y\in\cO^*$, and the special $\log(t)$, which is redundant since $D\log(t)=Dy$.

With the proved bound $\deg_t(b)\le\max(\deg_ta,\deg_td)=1$ (Theorem~\ref{thm:top}(b)) and the heuristic bound in $x$, the ansatz is
\[
\int f=\sum_{k=0}^{1}(A_k+B_ky)\,t^k+\alpha\log(t)+\beta\log(x+y),\qquad A_k,B_k\in\Q[x].
\]
Differentiating and equating $w_i$-coordinates per power of $t$ gives four blocks:
\begin{alignat*}{2}
(t^1,w_0):&\quad A_1'+B_1x=0, &\qquad (t^1,w_1):&\quad B_1'(x^2+1)+(A_1+B_1)x=x,\\
(t^0,w_0):&\quad A_0'=0, &\qquad (t^0,w_1):&\quad B_0'(x^2+1)+B_0x+\alpha x+\beta=1 .
\end{alignat*}
The $t^1$ blocks force $A_1=1$, $B_1=0$; the $(t^0,w_1)$ block --- where $\beta$ is determined --- has the general polynomial solution $B_0=-\alpha$, $\beta=1$. The general solution is therefore
\[
\int f\;=\;t+\log(x+y)+\alpha\,(\log(t)-y)+c ,
\]
with $\alpha,c$ arbitrary constants ($\log(t)-y=0$ is the hyperexponential redundancy, exactly as in \cite[Ex.~10.3.3]{Bronstein05}); choosing $\alpha=c=0$,
\[
\int\frac{1+x\,e^{\sqrt{x^2+1}}}{\sqrt{x^2+1}}\,dx\;=\;e^{\sqrt{x^2+1}}+\log\bigl(x+\sqrt{x^2+1}\bigr).
\]
The coefficient $\beta=1$ is forced: no polynomial $B_0$ can produce the constant on the right of the $(t^0,w_1)$ block. The unit logand, invisible to every residue, is irreplaceable.

\begin{verbatim}
>>> T = Tower([x, t], [(1,0), (0, x*t/q)], q=q)    # t = exp(y)
>>> parallel_integrate_mixed((0, (1+x*t)/q), T)
t + log(x + sqrt(x**2 + 1))
\end{verbatim}

\subsection{Non-elementarity at a moving prime}\label{ex:moving}

In the flagship tower, $f=1/(x\,t)$ has a simple pole at the moving prime $(t)$, which is normal with $\delta=1$, and
\[
\tau_{(t)}(f)=\Bigl(f\,\frac{t}{Dt}\Bigr)\Big|_{t=0}=\frac{y}{x}\notin\Fbar ,
\]
so $\int dx/\bigl(x\log(x+\sqrt{x^2+1})\bigr)$ is not elementary, by Theorem~\ref{thm:residues}, in one line.

\begin{verbatim}
>>> T = Tower([x, t], [(1,0), (0, 1/q)], q=q)      # t = log(x+y)
>>> parallel_integrate_mixed((1/(x*t), 0), T, verbose=True)
  (t): unramified, delta = 1, v_P(f) = -1
      places ['y=sqrt(x**2 + 1)', 'y=-sqrt(x**2 + 1)']:
          residues [sqrt(x**2 + 1)/x, -sqrt(x**2 + 1)/x]
('not elementary', t, sqrt(x**2 + 1)/x)
\end{verbatim}

\subsection{A nested radical: $\int\bigl(\log(x)+\sqrt{\log(x)+\sqrt{\log(x)}}\bigr)/(1+\log(x))\,dx$, Bronstein's Example (E)}\label{ex:bronsteinE}

This is Example (E) of \cite{Bronstein89}, worked again on p.~138 of \cite{Bronstein90}: writing $\theta=\log(x)$ and $Y=\sqrt{\theta+\sqrt\theta}$, the integrand is $f=(\theta+Y)/(1+\theta)$, a tower of two square roots --- the nested radical of the abstract. The inner layer $z^2=\theta$ is a pure root and flattens (Lemma~\ref{lem:flatten}): with $t:=\sqrt{\log(x)}$, $Dt=1/(2xt)$, the field is $\Q(x,t)(y)$ with the \emph{single} radical
\[
y^2=q=t^2+t,\qquad f=\frac{t^2+y}{1+t^2},
\]
$q$ squarefree, (N2) satisfied. The tower data: $\den_0=xt$; $x$ is special ($\Dbar_0x=xt$; indeed $x=\exp(t^2)$ in these coordinates), $t$ is normal with $v_t(\den_0)=1$; the branch primes lie over $(t)$ and $(t+1)$ with $e=2$, and Proposition~\ref{prop:typeE} gives $\delta_{P_t}=e(1+v_t(\den_0))=4$ (check: $v_{P_t}(y)=1$, $Dy=(2t+1)/(4xty)$, $v_{P_t}(Dy)=-3=1-4$), $\delta_{P_{t+1}}=2$ --- a live instance of Remark~\ref{rem:partone}.

The flattening is a change of coordinates on one differential field, so nothing about elementary integrability is altered by it. The denominator $1+t^2$ of $f$ is coprime to $q$ and to $\den_0$, so $\eta_P=0$ and $\delta_P=1$ at every prime $P$ over it; and $D(t^2+1)=2t\,Dt=1/x$ is a unit at $P$, so $v_P(\Dbar_P\pi)=0$ with $\pi=t^2+1$: every such $P$ is normal (Definition~\ref{def:normal}), not by assumption but by inspection. Over $\Fbar$ there are four of them, $t=\pm i$, $y=y_0:=\pm\sqrt{-1\pm i}$, each with residue field $\kappa(P)=\Fbar(x)$, $x$ transcendental; at each, the numerator $t^2+y$ takes the nonzero value $-1+y_0$, so $v_P(f)=-1=-\delta_P$ and Theorem~\ref{thm:residues} gives
\[
\tau_P(f)=\Bigl(f\,\frac{t^2+1}{D(t^2+1)}\Bigr)\Big|_P=x\,\bigl(-1+y_0\bigr)\;\in\;\Fbar(x)\setminus\Fbar .
\]
If $f$ had an elementary integral over $L$, Theorem~\ref{thm:residues}(iii) would force $\tau_P(f)=\sum_ic_iv_P(u_i)\in\Fbar$; hence it has none (Proposition~\ref{prop:certificates}(a)). The conclusion is insensitive to the constant field: with $F=\C$ the same four residues arise and $x\notin\C$, and an elementary antiderivative in the classical sense would lie in an elementary extension of $L$, since $L$ is itself elementary over $\C(x)$. Moreover the four residues are precisely the roots of Bronstein's resultant
\[
R(z)=z^4+4xz^3+8x^2z^2+8x^3z+5x^4,
\]
as the symmetric functions confirm: $e_1=-4x$, $e_2=8x^2$, $e_3=-8x^3$, $e_4=5x^4$. The parallel method thus reproduces the criterion of \cite{Bronstein89} from four residue evaluations, with no Hermite reduction, no Puiseux expansions, no resultant and no integral-basis computation.

\begin{verbatim}
>>> TE = Tower([x, t], [(1,0), (1/(2*x*t), 0)], q=t**2+t)
>>> parallel_integrate_mixed((t**2/(1+t**2), 1/(1+t**2)), TE, verbose=True)
  (t**2 + 1): unramified, delta = 1, v_P(f) = -1
      places [...]: residues [x*(-1 + sqrt(-1 + I)),
          -x*(1 + sqrt(-1 + I)), x*(-1 + sqrt(-1 - I)),
          -x*(1 + sqrt(-1 - I))]
('not elementary', t**2 + 1, x*(-1 + sqrt(-1 + I)))
\end{verbatim}

\subsection{Flattening and $\delta_P=2$: $\int e^{\sqrt x}\,dx$}\label{ex:expsqrt}

The layer $y^2=x$ is a pure root, so Lemma~\ref{lem:flatten} absorbs it: with $u:=\sqrt x$ the field is $\Q(u)(t)$, $t=e^{u}$, $Du=1/(2u)$, $Dt=t/(2u)$. No radical remains, but $u$ is not a monomial, and $(u)$ is a \emph{normal} prime dividing $\den_D\doteq u$ with $\delta_{(u)}=2$: the tower of Remark~\ref{rem:partone}, with the exponential on top. The integrand $f=t$ has no finite poles and no residues; the only special prime is $(t)$. Theorem~\ref{thm:top}(b) gives $\deg_t\le1$, and the ansatz $P(u)\,t+A_0$ (the redundant $\alpha\log(t)=\alpha u$ omitted) yields, after clearing $2u$, the single decoupled equation
\[
P'+P=2u
\]
--- the Risch differential equation of the $n$-th-root case of \cite{Bronstein89}, appearing here as one block of the linear system --- with unique polynomial solution $P=2u-2$. Hence
\[
\int e^{\sqrt x}\,dx\;=\;2\bigl(\sqrt x-1\bigr)\,e^{\sqrt x}.
\]
This example is flattenable by design: its subject is the flattening itself and the ensuing $\delta_P=2$. The hyperexponential mechanism with the radical irreducibly present is certified by \S\ref{ex:unit}, where $t=e^{y}$ over the unflattenable curve.

\begin{verbatim}
>>> T = Tower([u, t], [(1/(2*u),0), (t/(2*u),0)])  # u=sqrt(x), t=exp(u)
>>> parallel_integrate_mixed((t, 0), T)
2*t*u - 2*t
\end{verbatim}

\subsection{A residue at the hypertangent place at infinity: $\int\tan(\sqrt x)/\sqrt x\,dx$}\label{ex:tansqrt}

Flatten as before: $L=\Q(u)(t)$ with $t=\tan(u)$, $Du=1/(2u)$, $Dt=(1+t^2)/(2u)$, and $f=t/u$. The special primes are the factors of $1+t^2$; the prime $(u)$ is normal with $\delta_{(u)}=2$ (threshold $e(1+\nu)-1=1$, $\mu_{(u)}=1$: another tie), and $v_{(u)}(f)=-1$ is sub-critical, so no finite residues arise. The place at infinity, however, is \emph{normal} here (Lemma~\ref{lem:atinfty}(c)), with $\delta_{v_\infty}=1$ and $v_\infty(f)=-1=-\delta_{v_\infty}$, and carries a residue: with $\pi=1/t$ and $D\pi=-\tfrac1{2u}(1+\pi^2)$,
\[
\tau_{v_\infty}(f)\;=\;\Bigl(f\,\frac{\pi}{D\pi}\Bigr)\Big|_{v_\infty}\;=\;\Bigl(\frac{-2}{1+\pi^2}\Bigr)\Big|_{\pi=0}\;=\;-2\ \in\ \Q .
\]
Corollary~\ref{cor:divisors} with basis $r_1=-2$ gives $\Delta_1=v_\infty$, realised with $M=2$ by $\tilde u_1=(1+t^2)^{-1}$ ($v_\infty(\tilde u_1)=2$; its special part, the poles at $t=\pm\sqrt{-1}$, is unconstrained), so the candidate logarithmic part is $\tfrac{r_1}{M}\log(\tilde u_1)=\log(1+t^2)$. Theorem~\ref{thm:top}(c) gives $v_\infty(v)\ge\min(0,-1+1)=0$: \emph{no} polynomial part in $t$ at all; and indeed $f-D\log(1+t^2)=0$, so $v$ is a constant and
\[
\int\frac{\tan(\sqrt x)}{\sqrt x}\,dx\;=\;\log\bigl(1+\tan^2(\sqrt x)\bigr)\;=\;-2\log\bigl(\cos(\sqrt x)\bigr) .
\]
As a check of Theorem~\ref{thm:residues}(iii): $c_1v_\infty(u_1)=1\cdot v_\infty(1+t^2)=-2=\tau_{v_\infty}(f)$. Like \S\ref{ex:expsqrt}, this example is flattenable by design --- its subject is the place at infinity; the hypertangent mechanism over an unflattenable curve follows.

\begin{verbatim}
>>> T = Tower([u, t], [(1/(2*u),0), ((1+t**2)/(2*u),0)])   # t = tan(u)
>>> parallel_integrate_mixed((t/u, 0), T, verbose=True)
  (u): unramified, delta = 2, v_P(f) = -1  [sub-critical]
  tower special: candidate log(t**2 + 1)
log(t**2 + 1)
\end{verbatim}

\subsection{Hypertangent over the curve}\label{ex:tancurve}

Take $t=\tan\left(y\right)$ over $y^2=x^2+1$: $Dt=x(1+t^2)/y$, a hypertangent monomial whose derivative involves $y$; the tower does not flatten. The data: $h_u=1$; the special primes are the factors of $1+t^2$; and at the branch, $\mu_{(y)}=-v_{(y)}(Dt)=1=e(1+\nu)-1$, the tie once more: $(y)$ is normal with $\delta_{(y)}=2$.

First, non-elementarity in one line. For $f=t$, i.e.\ the integrand $\tan\left(\sqrt{x^2+1}\right)$: at the (normal) place at infinity, $v_\infty(f)=-1=-\delta_{v_\infty}$, and with $\pi=1/t$, $D\pi=-\tfrac{x}{y}(1+\pi^2)$,
\[
\tau_{v_\infty}(f)\;=\;\Bigl(f\,\frac{\pi}{D\pi}\Bigr)\Big|_{v_\infty}\;=\;\Bigl(\frac{-y}{x(1+\pi^2)}\Bigr)\Big|_{\pi=0}\;=\;-\frac{y}{x}\ \notin\ \Fbar ,
\]
so, by Theorem~\ref{thm:residues}(iii) applied at $v_\infty$,
\[
\int\tan\left(\sqrt{x^2+1}\right)\;dx\quad\text{is not elementary.}
\]
The residue at the hypertangent place at infinity now takes values in the function field of the curve, and its non-constancy is the certificate --- the $v_\infty$-counterpart of \S\ref{ex:moving}.

Second, an elementary instance with a genuinely algebraic polynomial part:
\[
f\;=\;x\,(1+t^{2})\;+\;\frac{3xt}{y}\,.
\]
The branch pole is sub-critical ($v_{(y)}(f)=-1>-2$), there are no moving poles, and $v_\infty(f)=-2$, so Theorem~\ref{thm:top}(c) gives $v_\infty(v)\ge\min(0,-1)=-1$: $\deg_t(v)\le1$. The ansatz $\sum_{k\le1}(A_k+B_ky)t^k+\beta\log(1+t^2)+\gamma\log(x+y)$ solves uniquely: the $(t^2,w_0)$ block forces $B_1=1$, the $(t^2,w_1)$ block $A_1=0$, the $(t^1,w_1)$ block --- where $\beta$ is determined --- reads $B_1'(x^2+1)+B_1x+2\beta x=3x$, giving $\beta=1$, and the remaining blocks leave $A_0$ constant and $B_0=\gamma=0$. Hence
\begin{multline*}
\int\Bigl(x\bigl(1+\tan^{2}\left(\!\sqrt{x^2+1}\right)\bigr)+\frac{3x\tan\left(\sqrt{x^2+1}\right)}{\sqrt{x^2+1}}\Bigr)dx\\
=\;\sqrt{x^2+1}\,\tan\left(\sqrt{x^2+1}\right)\;+\;\log\bigl(1+\tan^{2}\left(\!\sqrt{x^2+1}\right)\bigr),
\end{multline*}
attaining the bound of Theorem~\ref{thm:top}(c) sharply, with the top coefficient of the rational part lying in the $w_1$-component: the polynomial part of the integral is itself algebraic, $v=y\,t$ --- a configuration invisible both in the transcendental theory and in Part I.

\begin{verbatim}
>>> T = Tower([x, t], [(1,0), (0, x*(1+t**2)/q)], q=q)     # t = tan(y)
>>> parallel_integrate_mixed((x*(1+t**2), 3*x*t/q), T, verbose=True)
  (x**2 + 1): branch, delta = 2, v_P(f) = -1  [sub-critical]
  tower special: candidate log(t**2 + 1)
  unit candidate: A + B*y with deg_x B = 0
t*sqrt(x**2 + 1) + log(t**2 + 1)
\end{verbatim}

\noindent The non-elementary instance is certified by the residue at $v_\infty$, with $\pi=1/t$ and $D\pi=-\eta(1+t^2)/t^2$:

\begin{verbatim}
>>> parallel_integrate_mixed((t, 0), T, verbose=True)
  v_oo (hypertangent top): delta = 1, v_oo(f) = -1, residue 0 + (-1/x)*y
('not elementary', 'v_oo', (0, -1/x))
\end{verbatim}

\noindent The residue $-y/x\in\kappa(v_\infty)=\Q(x)(y)$ is the one computed above; it is not a constant, and Proposition~\ref{prop:certificates}(a) applies.

\subsection{Moving logands split by the curve}\label{ex:split}

The moving mechanism is genuinely exercised only when the radical cannot be flattened away and the moving factorisation happens over the curve. Return to the flagship tower --- $y^2=x^2+1$, $t=\log(x+y)$, $Dt=1/y$ --- which does \emph{not} flatten (the coordinate ring of the curve has the unit $x+y$ and is not a polynomial ring; see the remark after Lemma~\ref{lem:flatten}), and consider
\[
f\;=\;\frac{t^{3}+(4+x-x^{2})\,t-(1+5x)\,y}{y\,\bigl(t^{2}-x^{2}-1\bigr)}
\;=\;\frac{t}{y}+\frac{2(1-x)}{y\,(t-y)}+\frac{3(1+x)}{y\,(t+y)}\,,
\]
with canonical data $a_0=-(1+5x)(x^2+1)$, $a_1=t^{3}+(4+x-x^{2})t$, $d=(x^2+1)\,q$, $q:=t^{2}-x^{2}-1$.

The $R$-irreducible $q$ remains irreducible over $\Fbar K=\Fbar(x,t)$ --- $x^2+1$ is not a square in $\Fbar(x)$ --- but splits over $\mathcal F=K_0(y)$ as $q=(t-y)(t+y)$: the moving factorisation \emph{must} be performed over the function field of the curve, and no constant-field extension can substitute (contrast \cite[Ex.~10.3.2]{Bronstein05}, where the analogous splitting is achieved by adjoining $\sqrt2$ to the constants). Both moving primes are normal with $\delta=1$ ($h_u=1$), and they carry \emph{distinct} residues: with $D(t\mp y)=(1\mp x)/y$,
\[
\tau_{(t-y)}(f)\;=\;\Bigl(f\,\frac{t-y}{D(t-y)}\Bigr)\Big|_{t=y}\;=\;2,
\qquad
\tau_{(t+y)}(f)\;=\;3 ,
\]
so a single irreducible factor of $d$ in $R$ carries two different residues --- unobtainable by any computation over $\Fbar K$. Corollary~\ref{cor:divisors} (basis $r_1=1$, $M=1$) gives $\Delta_1=2\,P_{t-y}+3\,P_{t+y}$, realised by $\tilde u_1=(t-y)^2(t+y)^3$; the content divisors vanish here ($\gamma(t\mp y)=0$), a nontrivial content requiring $\Cl(\cO)\ne0$, hence genus $\ge1$ (\S\S\ref{ex:chebyshev}--\ref{ex:gunther}). The branch pole is once more sub-critical, $v_{(y)}(f)=-1>-\delta_{(y)}=-2$, contributing neither Hermite denominator nor residue. Subtracting the logarithmic part leaves $f-D\log(\tilde u_1)=t/y$, which the polynomial ansatz integrates under the (sharp) bound $\deg_t\le2$ of Theorem~\ref{thm:top}(a) to $v=t^2/2$. Hence, with $y=\sqrt{x^2+1}$ and $t=\log(x+y)$,
\[
\int f\,dx\;=\;\tfrac12\,t^{2}\;+\;2\log\bigl(t-y\bigr)\;+\;3\log\bigl(t+y\bigr).
\]

\begin{verbatim}
>>> T = Tower([x, t], [(1,0), (0, 1/q)], q=q)      # t = log(x+y)
>>> parallel_integrate_mixed((-(1+5*x)/(t**2-x**2-1),
...     (t**3+(4+x-x**2)*t)/(q*(t**2-x**2-1))), T, verbose=True)
  (-t**2 + x**2 + 1): unramified, delta = 1, v_P(f) = -1
      places ['y=t', 'y=-t', 'y=t', 'y=-t']: residues [2, 3, 2, 3]
      (two sheets over each of the two roots t = +-sqrt(x^2+1))
      y-split factor -2*t + (-2)*y: residue 3
      y-split factor -2*t + (2)*y: residue 2
  (x**2 + 1): branch, delta = 2, v_P(f) = -1  [sub-critical]
  unit candidate: A + B*y with deg_x B = 0
t**2/2 + 3*log(-2*t - 2*sqrt(x**2 + 1)) + 2*log(-2*t + 2*sqrt(x**2 + 1))
\end{verbatim}

\subsection{Benchmark: Example 14 of Bronstein's \textit{Symbolic Integration Tutorial}}\label{ex:benchmark}

Example 14 of Bronstein's integration tutorial \cite{Bronstein98} is the showcase for the algebraic logarithmic case:
\begin{multline*}
\int\frac{(x^2+2x+1)\sqrt{x+\log(x)}+(3x+1)\log(x)+3x^2+x}{(x\log(x)+x^2)\sqrt{x+\log(x)}+x^2\log(x)+x^3}\,dx\\
=\;2\sqrt{x+\log(x)}\;+\;2\log\bigl(x+\sqrt{x+\log(x)}\bigr),
\end{multline*}
computed there over $\Q(x)(t)[y]/(y^2-x-t)$, $t=\log(x)$, by the full recursive apparatus: an integral basis made normal at infinity, the residue resultant, a $\rho$-extraction from Puiseux expansions at infinity requiring a recursive integration, a principality test for the divisor $2P-2Q$ on the completed curve, and the Risch differential equations (23)--(24) of \cite{Bronstein98}.

The parallel method dispatches it with none of that --- and the flattening lemma explains why. The defining polynomial is linear in the generator $t$ with unit coefficient, so $\cO=\Q[x,t][y]/(y^2-x-t)\cong\Q[x,y]$: by Lemma~\ref{lem:flatten} the tower flattens to $\Q(x)(u)$, $u=\sqrt{x+\log(x)}$, $Du=(x+1)/(2xu)$, with \emph{no radical at all}: the tutorial's algebraic showcase is a purely transcendental parallel computation with one non-monomial generator. In flattened coordinates, $\den_0\doteq xu$; $(x)$ is special (it is the argument prime of $\log(x)=u^2-x$), $(u)$ is normal with $\delta_{(u)}=2$, and the branch pole of the integrand is sub-critical. The only residue is
\[
\tau_{(x+u)}(f)\;=\;\Bigl(f\,\frac{x+u}{D(x+u)}\Bigr)\Big|_{u=-x}\;=\;2
\]
at the normal prime $(x+u)$, realised by $x+u$ itself ($M=1$); the residual integrand $(x+1)/(xu)$ is met by the one-unknown block $b\,(x+1)/(2x)=(x+1)/x$, giving $b=2$. Hence $\int f=2u+2\log(x+u)$: one residue evaluation and one linear equation, with no integral basis, no place at infinity, no Puiseux expansion, and no Risch differential equation. (The final display of \cite[Ex.~14]{Bronstein98}, like the original on p.~147 of \cite{Bronstein90}, has $x^2+x+1$ for $x^2+2x+1$ in the integrand; that integrand is in fact not elementary; the correct one, in lowest terms, is displayed in \S\ref{ex:jsc90}.)

\begin{verbatim}
>>> T = Tower([x, u], [(1,0), ((x+1)/(2*x*u), 0)]) # u = sqrt(x+log x)
>>> parallel_integrate_mixed(((x+1)/(x*u)
...     + (2*x*u+x+1)/(x*u*(x+u)), 0), T)
2*u + 2*log(u + x)
\end{verbatim}

\subsection{Benchmark: Example 15 of Bronstein's \textit{Symbolic Integration Tutorial}}\label{ex:bench15}

Example 15 of \cite{Bronstein98}, the showcase for the algebraic \emph{exponential} case, succumbs identically:
\[
\int\frac{3(x+e^x)^{1/3}+(2x^2+3x)e^x+5x^2}{x\,(x+e^x)^{1/3}}\,dx
\;=\;3x\,(x+e^x)^{2/3}\;+\;3\log(x) .
\]
There it is computed over $\Q(x)(t)[y]/(y^3-t-x)$, $t=e^x$, by the residue resultant, the inversion $t\mapsto t^{-1}$ with a recomputed integral basis $(1,z,z^2/\bar t\,)$, three Risch differential equations, and a recursive integration for the remainder $3/x$. Again the defining polynomial is linear in $t$ with unit coefficient, so the tower flattens (Lemma~\ref{lem:flatten}): with $u=(x+e^x)^{1/3}$, $Du=(u^3-x+1)/(3u^2)$ and $e^x=u^3-x$, the field is $\Q(x)(u)$ with no radical. Now $\den_0\doteq u^2$ is \emph{not} squarefree, and $(u)$ is a normal prime with $\nu=2$, hence $\delta_{(u)}=3$ --- the first $m=3$ instance in this paper, matching $\delta_P=e_P=3$ at the cube-root branch in the unflattened coordinates: an internal consistency check on Proposition~\ref{prop:typeE}. The integrand
\[
f\;=\;\frac{(2x^2+3x)\,u^{3}+3u+2x^2-2x^3}{x\,u}
\]
has a sub-critical pole at $(u)$ ($v_{(u)}(f)=-1>-3$: at a cube-root branch, poles of orders $1$ \emph{and} $2$ are free) and a critical one at the normal prime $(x)$, with
\[
\tau_{(x)}(f)\;=\;(f\,x)\big|_{x=0}\;=\;3 ,
\]
read off in one evaluation --- the $3\log(x)$ that the recursive algorithm obtains from its base-case recursion. The remainder $f-3/x=(2x+3)u^2+2x(1-x)/u$ has only the sub-critical pole, so $v$ is \emph{polynomial} by Theorem~\ref{thm:structure}(i), and the single block $D(bxu^2)$ returns $b=3$: $\int f=3xu^2+3\log(x)$. One residue and one linear block replace the basis inversion, the three Risch equations and the recursion. Both algebraic showcases of \cite{Bronstein98} are thus flattenable --- evidence that the published example corpora should be swept for flattenable instances.

\begin{verbatim}
>>> T = Tower([x, u], [(1,0), ((u**3-x+1)/(3*u**2), 0)])
>>> parallel_integrate_mixed(
...     (((2*x**2+3*x)*u**3+3*u+2*x**2-2*x**3)/(x*u), 0), T)
3*u**2*x + 3*log(x)
\end{verbatim}

\subsection{Benchmark: the examples on pp.~134 and 147 of Bronstein's \textit{Integration of Elementary Functions}}\label{ex:jsc90}

The original source of the tutorial's Example 14 is Bronstein's 1990 paper \cite{Bronstein90}, which works two integrands on the curve $y^2=x+\log(x)$. Both flatten as in \S\ref{ex:benchmark}, to $\Q(x)(u)$ with $u=\sqrt{x+\log(x)}$, $Du=(x+1)/(2xu)$, $\log(x)=u^2-x$. The first (p.~134),
\[
\int\frac{(x+1)\,dx}{(x\log(x)+x^2)\sqrt{x+\log(x)}}\;=\;-\frac{2}{\sqrt{x+\log(x)}},
\]
is a pure Hermite computation: the integrand is $(x+1)/(xu^3)$, with a pole of order $3$ at the normal prime $(u)$ of $\delta=2$, so Corollary~\ref{cor:element} gives $D_v=u$ and the linear system returns $-2/u$:

\begin{verbatim}
>>> T = Tower([x, u], [(1,0), ((x+1)/(2*x*u), 0)])  # u = sqrt(x + log x)
>>> parallel_integrate_mixed(((x+1)/(x*u**3), 0), T, verbose=True)
  (u): unramified, delta = 2, v_P(f) = -3
  (x): special; s-part x**1, candidate log(x)
-2/u
\end{verbatim}

The second (p.~147) is the integrand of \S\ref{ex:benchmark}. As printed there, and as restated in the final display of \cite[Ex.~14]{Bronstein98}, the coefficient of $\sqrt{x+\log(x)}$ in the numerator is $x^2+x+1$; with it the pipeline finds a \emph{non-constant} residue,

\begin{verbatim}
>>> f = ((x**2+x+1)*u + (3*x+1)*(u**2-x) + 3*x**2 + x) \
...     / ((x*(u**2-x)+x**2)*u + x**2*(u**2-x) + x**3)
>>> parallel_integrate_mixed((f, 0), T, verbose=True)
  (u): unramified, delta = 2, v_P(f) = -1  [sub-critical]
  (x): special; s-part x**1, candidate log(x)
  (u + x): unramified, delta = 1, v_P(f) = -1
      residues [2*(2*u**2 - 1)/(2*u**2 + u - 1)]
('not elementary', u + x, 2*(2*u**2 - 1)/(2*u**2 + u - 1))
\end{verbatim}

\noindent so that integrand has no elementary integral (Theorem~\ref{thm:residues}(iii)). With $x^2+2x+1=(x+1)^2$ instead --- the integrand displayed in the body of \cite[Ex.~14]{Bronstein98} --- the same call returns $2u+2\log(u+x)$, the integral both sources state. In lowest terms the correct integrand is
\[
\int\frac{(x+1)^2+(3x+1)\sqrt{x+\log(x)}}{x\,\sqrt{x+\log(x)}\,\bigl(x+\sqrt{x+\log(x)}\bigr)}\;dx
\;=\;2\sqrt{x+\log(x)}\;+\;2\log\bigl(x+\sqrt{x+\log(x)}\bigr),
\]
the long denominator $(x\log(x)+x^2)\sqrt{x+\log(x)}+x^2\log(x)+x^3$ being $x\,u^2(u+x)$:

\begin{verbatim}
>>> parallel_integrate_mixed((((x+1)**2 + (3*x+1)*u)/(x*u*(u+x)), 0),
...                          T, verbose=True)
  (u): unramified, delta = 2, v_P(f) = -1  [sub-critical]
  (x): special; s-part x**1, candidate log(x)
  (u + x): unramified, delta = 1, v_P(f) = -1
      residues [2]
2*u + 2*log(u + x)
\end{verbatim}

\noindent The residue at $(u+x)$ is now the constant $2$, exactly the coefficient of the logarithm. The discrepancy is therefore not a misprint in the answer but in the integrand, by a single missing $x$, and the residue criterion detects it in one evaluation.

\subsection{Benchmark: Cohen's 1993 pseudo-elliptic integral from \texttt{sci.math.symbolic}}\label{ex:cohen}

This benchmark is the pseudo-elliptic integral posted by H.~Cohen to \texttt{sci.math.symbolic} in 1993 \cite{Cohen93}, now the showcase example of the Risch algorithm's popular accounts:
\[
\int\frac{x\;dx}{\sqrt{x^4+10x^2-96x-71}}
\;=\;\frac18\,\log\Bigl(A+B\,\sqrt{x^4+10x^2-96x-71}\Bigr),
\]
with $B=x^6+15x^4-80x^3+27x^2-528x+781$ and $A=x^8+20x^6-128x^5+54x^4-1408x^3+3124x^2+10001$. Here the framework's trichotomy degenerates completely: $q$ is squarefree, every finite prime is a normal branch prime with $\delta_P=2$, and $v_P(f)=-1$ is sub-critical \emph{everywhere} --- no Hermite part, no residues, $S=\varnothing$. The entire integral is carried by a logand invisible to every residue (Remark~\ref{rem:blind}): a unit of $\cO$, $A^2-qB^2\in\Q^*$, whose existence is the order-$8$ torsion of $\infty_+-\infty_-$ in the Jacobian \cite{PvdP05}, found by the continued fraction of $\sqrt q$ (Abel--Chebyshev; the elementary/non-elementary dichotomy for such integrals is Chebyshev's, with Zolotarev's proof). The integral is computed by \texttt{parallel\_mixed.py}: the classification finds every finite pole sub-critical, the unit is offered by the $S$-unit subroutine \texttt{pell.py} (found in a fraction of a second, matching the published coefficients exactly), and the linear system returns $\gamma=\tfrac18$. For Cohen's companion variant with constant term $-72$, famously non-elementary, the continued fraction exhibits the non-torsion signature --- doubly-exponential coefficient growth --- and the height guard reports no unit; Proposition~\ref{prop:nontorsion} then certifies that none exists, and the exact linear system completes the proof (Proposition~\ref{prop:certificates}(b)):

\begin{verbatim}
>>> parallel_integrate_mixed((0, x/(x**4+10*x**2-96*x-72)),
...     Tower([x], [(S(1), S(0))], q=x**4+10*x**2-96*x-72), verbose=True)
  (x**4 + 10*x**2 - 96*x - 72): branch, delta = 2,
      v_P(f) = -1  [sub-critical]
  unit search inconclusive; [oo+ - oo-] certified non-torsion by
      reduction mod p: [(7, 3), (11, 13), (13, 7), (17, 21)]
  retrying with the exact degree bounds of Part I: [0, -2]
('not elementary', 'holomorphic remainder: residual second-kind
  differential is not exact (exact bounds of Part I)', [0, -2])
\end{verbatim}

\noindent The orders $3,13,7,21$ modulo $7,11,13,17$ are pairwise incompatible with any common $N$, so $[\infty_+-\infty_-]$ is not torsion; the integrand has no residues, and with no units the only candidate integral is a rational function of the shape bounded by Part I, which the system excludes. This mechanism is also the subject of \cite{BlakePE}, which computes such integrals by a direct undetermined-coefficient ansatz $\log(a+b\sqrt q)$; the present framework delimits exactly what that ansatz can reach: the integrals whose residue-invisible unit part carries everything, i.e.\ the torsion cases.

\begin{verbatim}
>>> q71 = x**4 + 10*x**2 - 96*x - 71
>>> parallel_integrate_mixed((0, x/q71),
...     Tower([x], [(1,0)], q=q71), verbose=True)
  (x**4 + 10*x**2 - 96*x - 71): branch, delta = 2,
      v_P(f) = -1  [sub-critical]
  unit candidate: A + B*y with deg_x B = 6
log(x**8 + 20*x**6 - 128*x**5 + 54*x**4 - 1408*x**3 + 3124*x**2
    + sqrt(x**4 + 10*x**2 - 96*x - 71)*(x**6 + 15*x**4 - 80*x**3
    + 27*x**2 - 528*x + 781) + 10001)/8
\end{verbatim}

\subsection{Benchmark: the genus-2 integral from Schultz's \textit{Trager's Algorithm for Integration of Algebraic Functions Revisited}}\label{ex:schultz}

This benchmark raises the genus. Schultz \cite[p.~2]{Schultz15}, revisiting Trager's algorithm with a partial Mathematica implementation, opens with the ``remarkable result''
\[
\int\frac{(29x^2+18x-3)\;dx}{\sqrt{x^6+4x^5+6x^4-12x^3+33x^2-16x}}\;=\;\log\bigl(A+B\sqrt q\,\bigr),
\qquad \deg A=29,\ \deg B=26 :
\]
a genuinely \emph{abelian} integral --- $q$ has degree $6$ and the curve genus $2$ --- carried once more entirely by a unit, $A^2-qB^2\in\Q^*$, i.e.\ by torsion of $\infty_+-\infty_-$ of order $29$ in a genus-$2$ Jacobian. The mechanism is identical to the Cohen integral, and the pipeline computes it \emph{unchanged}: $q=x\,(x^5+4x^4+6x^3-12x^2+33x-16)$ is squarefree, both branch primes are sub-critical, no residues exist, and the continued fraction of $\sqrt q$ --- which is genus-agnostic --- delivers the unit in under three seconds, matching Schultz's printed coefficients digit for digit --- and correcting one misprint (see the session below) --- with $\gamma=1$ from the linear system. The genus enters only in the negative direction: deciding that \emph{no} unit exists (the analogue of Cohen's $-72$ variant) requires torsion bounds in a genus-$2$ Jacobian --- milestone (iii) at its hardest.

\begin{verbatim}
>>> q6 = x**6+4*x**5+6*x**4-12*x**3+33*x**2-16*x
>>> parallel_integrate_mixed((0, (29*x**2+18*x-3)/q6),
...     Tower([x], [(1,0)], q=q6), verbose=True)
  (x): branch, delta = 2, v_P(f) = -1  [sub-critical]
  (x**5 + 4*x**4 + 6*x**3 - 12*x**2 + 33*x - 16): branch,
      delta = 2, v_P(f) = -1  [sub-critical]
  unit candidate: A + B*y with deg_x B = 26
log(x**29 + 40*x**28 + 776*x**27 + 9648*x**26 + 85820*x**25
    + 578480*x**24 + 3058536*x**23 + 12979632*x**22 + 45004902*x**21
    + 129708992*x**20 + 317208072*x**19 + 675607056*x**18
    + 1288213884*x**17 + 2238714832*x**16 + 3548250712*x**15
    + 5097069328*x**14 + 6677210721*x**13 + 8106250392*x**12
    + 9056612528*x**11 + 8991685504*x**10 + 7944578304*x**9
    + 6614046720*x**8 + 4834279424*x**7 + 2374631424*x**6
    + 916848640*x**5 + 638582784*x**4 - 279969792*x**3
    - 528482304*x**2 + 150994944*x + sqrt(x**6 + 4*x**5 + 6*x**4
    - 12*x**3 + 33*x**2 - 16*x)*(x**26 + 38*x**25 + 699*x**24
    + 8220*x**23 + 68953*x**22 + 436794*x**21 + 2161755*x**20
    + 8550024*x**19 + 27506475*x**18 + 73265978*x**17
    + 165196041*x**16 + 324386076*x**15 + 570906027*x**14
    + 914354726*x**13 + 1326830817*x**12 + 1731692416*x**11
    + 2055647184*x**10 + 2257532160*x**9 + 2246693120*x**8
    + 1939619840*x**7 + 1494073344*x**6 + 1097859072*x**5
    + 640024576*x**4 + 207618048*x**3 + 95420416*x**2
    + 50331648*x - 50331648) - 134217728)
\end{verbatim}

\noindent The coefficients agree with the display of \cite[p.~2]{Schultz15} digit for digit, with one exception: the $x^{21}$ coefficient of $A$ is printed there as $4500490$; the correct value, certified both by differentiation and by the exact Pell identity $A^2-qB^2\in\Q^*$, is $45004902$.

\subsection{Benchmark: the example from Bronstein's \textit{The Risch Differential Equation on an Algebraic Curve}}\label{ex:rde}

This benchmark is the showcase of \cite{Bronstein91}, whose subject is the Risch differential equation on an algebraic curve --- the equation the recursive algorithm must solve whenever an exponential of an algebraic function occurs:
\[
\int\Bigl(\frac{5x^4+2x-2}{x^2}\Bigl(1+\frac1{\sqrt{x^3+1}}\Bigr)+\frac{x}{\sqrt{x^3+1}}\Bigr)\,e^{x\sqrt{x^3+1}}\,dx
\;=\;\frac2x\bigl(1+\sqrt{x^3+1}\bigr)\,e^{x\sqrt{x^3+1}} .
\]
The tower is $y^2=x^3+1$ --- elliptic, of odd degree, hence one place at infinity and unit rank $0$ --- with the hyperexponential $t=e^{xy}$, $Dt/t=(5x^3+2)/(2y)$, on top: unflattenable, and outside every previous benchmark's mechanism. In the recursive algorithm the integral requires solving $Dg+g\,D(xy)=h$ over the curve, the problem \cite{Bronstein91} was written to solve. In the parallel method there is no equation to isolate: the pipeline classifies $(x)$ as normal with $\delta=1$ carrying a pole of order $2$ --- the first benchmark with a nontrivial Hermite part, $D_v=x$ --- finds the branch pole sub-critical, offers the special $\log(t)$ and no units (rank $0$ is read off the odd degree), and the Risch differential equation appears as the $t^1$-blocks of the linear system, which returns $v=\tfrac2x(1+y)\,t$ in under two seconds, verified by differentiation.

\begin{verbatim}
>>> q3 = x**3 + 1                        # elliptic; t = exp(x*y)
>>> T = Tower([x, t], [(1,0), (0, t*(5*x**3+2)/(2*q3))], q=q3)
>>> parallel_integrate_mixed(((5*x**4+2*x-2)*t/x**2,
...     (5*x**4+x**3+2*x-2)*t/(x**2*q3)), T, verbose=True)
  (x): unramified, delta = 1, v_P(f) = -2
  (x**2 - x + 1): branch, delta = 2, v_P(f) = -1  [sub-critical]
  tower special: candidate log(t)
2*t*sqrt(x**3 + 1)/x + 2*t/x
\end{verbatim}

\noindent The Risch differential equation of \cite{Bronstein91} is the $t^1$-block of this system; nothing in the code knows it is solving one.

\subsection{Benchmark: Chebyshev's integral from Davenport's \textit{On the Integration of Algebraic Functions}}\label{ex:chebyshev}

Example 5 on p.~152 of Davenport's monograph \cite{Davenport81}, attributed there to Chebyshev:
\begin{align*}
&\int\frac{2x^6+4x^5+7x^4-3x^3-x^2-8x-8}{(2x^2-1)^2\,\sqrt{x^4+4x^3+2x^2+1}}\;dx\\
&\qquad=\;\frac{(2x+1)\sqrt q}{2\,(2x^2-1)}\;-\;\frac52\log\bigl(x^2+2x+\sqrt q\bigr)\;+\;\frac52\log\bigl(x^2+2x-\sqrt q\bigr)\\
&\qquad\qquad+\;\log\bigl(A+B\sqrt q\bigr),
\end{align*}
with $q=x^4+4x^3+2x^2+1$, $A=x^5+7x^4+15x^3+9x^2+2$ and $B=x^3+5x^2+6x$. This is the only benchmark exercising all three mechanisms of the theory in a single integral. The curve is elliptic; $q=(x+1)(x^3+3x^2-x+1)$ is squarefree and both branch primes are sub-critical. At the unramified prime $(2x^2-1)$ the integrand has a pole of order $2$: a Hermite part, $D_v=2x^2-1$, producing the algebraic term --- \emph{and} residues, for by Proposition~\ref{rem:deepres} the residues remain determined at any pole order. The four places carry $\pm\tfrac52$, distinct on conjugates, and the realising logands come from the norm search: $N(x^2+2x\pm y)=(x^2+2x)^2-q=2x^2-1$ exactly. Finally, these split logands leave residues $\pm5$ at the two places at infinity, which no exact part can cancel: a logand supported wholly at infinity is forced, and the fundamental unit --- $A^2-qB^2=4$, torsion of order $5$ --- enters with coefficient $1$ from the linear system. Hermite part, norm-realised deep residues, and a residue-invisible unit, in one classical integral; the pipeline computes it in $24$ seconds.

\begin{verbatim}
>>> q4 = x**4 + 4*x**3 + 2*x**2 + 1
>>> N4 = 2*x**6 + 4*x**5 + 7*x**4 - 3*x**3 - x**2 - 8*x - 8
>>> parallel_integrate_mixed((0, N4/((2*x**2-1)**2*q4)),
...     Tower([x], [(S(1), S(0))], q=q4), verbose=True)
  (x + 1): branch, delta = 2, v_P(f) = -1  [sub-critical]
  (2*x**2 - 1): unramified, delta = 1, v_P(f) = -2
      deep residues at order 2: [5/2, -5/2, -5/2, 5/2]
      norm factor -x**2 - 2*x + (1)*y (N = 1*(2*x**2 - 1)**1):
          residue 5/2
      norm factor -x**2 - 2*x + (-1)*y (N = 1*(2*x**2 - 1)**1):
          residue -5/2
  (x**3 + 3*x**2 - x + 1): branch, delta = 2, v_P(f) = -1  [sub-critical]
  unit candidate: A + B*y with deg_x B = 3
(2*x + 1)*sqrt(x**4 + 4*x**3 + 2*x**2 + 1)/(4*x**2 - 2)
    - 5*log(-x**2 - 2*x - sqrt(x**4 + 4*x**3 + 2*x**2 + 1))/2
    + 5*log(-x**2 - 2*x + sqrt(x**4 + 4*x**3 + 2*x**2 + 1))/2
    + log(x**5 + 7*x**4 + 15*x**3 + 9*x**2
          + (x**3 + 5*x**2 + 6*x)*sqrt(x**4 + 4*x**3 + 2*x**2 + 1) + 2)
\end{verbatim}

\noindent(the sign flips inside the logarithms are additive constants).

\subsection{Benchmark: G\"unther's integral from \textit{Sur l'\'evaluation de certaines int\'egrales pseudo-elliptiques}}\label{ex:gunther}

G\"unther's 1882 paper \cite{Gunther1882} evaluates pseudo-elliptic integrals of which
\[
\int\frac{x\;dx}{(x^3+8)\,\sqrt{x^3-1}}
\]
is a representative, and it completes the benchmark ladder: it is the first integral here whose logands can be produced \emph{only} by torsion. The curve $y^2=x^3-1$ is elliptic with one place at infinity (unit rank $0$) and complex multiplication by $\zeta_3$, acting by $(x,y)\mapsto(\zeta_3x,y)$; the branch primes are sub-critical, and the six critical places lie over $x^3+8$ --- the $\zeta_3$-orbit of $x_0=-2$, paired with $y_0=\pm3i$ --- with residues
\[
\tau_{(x_0,y_0)}(f)\;=\;\frac{1}{3\,x_0\,y_0}\,,
\]
pairwise distinct across every fibre. The equal-residue and norm-identity realisations therefore do not apply: the divisor forces functions with single-place support, i.e.\ the torsion of the classes $[P-\infty]$ themselves. That torsion exists, of order $6$: for $P=(-2,3i)$ the group law gives $2P=(0,i)$ and $3P=(1,0)$, a two-torsion point, and likewise at the conjugate places. The Miller-style function with divisor $6P-6\infty$ is
\[
h_P\;=\;\frac{\ell_P^{\,2}\;\ell_P'^{\,2}}{x^{2}\,(x-x_{3P})}\,,
\]
with $\ell_P$ the tangent at $P$, $\ell_P'$ the chord through $P$ and $2P$, and $x_{3P}$ the abscissa of the two-torsion point $3P$; and by Corollary~\ref{cor:divisors}, $\int f\,dx=\sum_P(\tau_P/6)\log(h_P)+C$ over the six places --- an identity that proves itself: the residues match by design, and the vertical parts and the single place at infinity enter with total weight $\sum_P\tau_P=0$, so the difference from the integrand is pole-free, hence constant.

The formal sum collapses. The product of tangent and chord factors as $\ell_P\,\ell_P'=x\,\bigl[(x-\mu_P)^2+c_P\,y\bigr]$ with $\mu_P=x_{3P}$, so $h_P=g_P^{\,2}/(x-\mu_P)$ for the \emph{linear-in-$y$} logand $g_P=(x-\mu_P)^2+c_Py$; the vertical logarithms cancel fibrewise ($\tau_P$ is odd in $y_0$), leaving $\int f\,dx=\tfrac13\sum_P\tau_P\log(g_P)+C$; and carrying out the sum over conjugate pairs yields the familiar form
\begin{align*}
\int\frac{x\;dx}{(x^3+8)\,\sqrt{x^3-1}}
\;=\;&\ \frac{\sqrt3}{108}\,\log\!\left(\frac{(x^2+x+1)(x^2+10x-8)+3\sqrt3\,(x^2+2x)\,y}{(x^2+x+1)(x^2+10x-8)-3\sqrt3\,(x^2+2x)\,y}\right)\\[2pt]
&-\frac1{27}\,\arctan\!\left(\frac{3\,y}{(x-1)^2}\right)
\;+\;\frac1{54}\,\arctan\!\left(\frac{3\,(2+2x-x^2)\,y}{x^4-7x^3+3x^2+2x+10}\right),
\end{align*}
with $y=\sqrt{x^3-1}$, verified by differentiation: the derivative minus the integrand reduces to $0$ exactly modulo $y^2=x^3-1$.

The pipeline computes it end to end. The torsion realisation is implemented for elliptic curves with one place at infinity ($\deg q=3$): a bounded order search by the chord-and-tangent law, and an additive Miller loop building the function with divisor $mP-m\infty$, with the coordinate arithmetic performed exactly in the algebraic extension field $\Q(i,\sqrt3)$:

\begin{verbatim}
>>> parallel_integrate_mixed((0, x/((x**3+8)*(x**3-1))),
...     Tower([x], [(S(1), S(0))], q=x**3-1), verbose=True)
  (x + 2): unramified, delta = 1, v_P(f) = -1
      places ['y=3*I', 'y=-3*I']: residues [I/18, -I/18]
  (x**2 - 2*x + 4): unramified, delta = 1, v_P(f) = -1
      places [...]: residues [-sqrt(3)/36 - I/36, sqrt(3)/36 + I/36,
                              sqrt(3)/36 - I/36, -sqrt(3)/36 + I/36]
  (x - 1), (x**2 + x + 1): branch, delta = 2  [sub-critical]
      torsion: [P - oo] of order 6 at (-2, 3*I);
          Miller logand with coefficient I/108
      torsion: [P - oo] of order 6 at (-2, -3*I);
          Miller logand with coefficient -I/108
      torsion: [P - oo] of order 6 at (1 + sqrt(3)*I, 3*I);
          Miller logand with coefficient -sqrt(3)/216 - I/216
      (and the three conjugate order-6 classes)
<the six Miller logands with coefficients tau_P/6>
\end{verbatim}

\noindent The returned expression is the formal sum $\sum_P(\tau_P/6)\log(h_P)$ of the derivation above, verified by differentiation; the displayed real form is its collapse over conjugate pairs.

Torsion of the classes $[P-\infty]$ themselves is the exception rather than the rule: what Corollary~\ref{cor:divisors} requires is the torsion of the degree-zero divisors $\Delta_k$, and a place can carry a residue without being a torsion point. When the places fail individually, Algorithm~3(d) works on the divisor: the leftover residues are written over a $\Q$-basis, each component $\sum_Pn_PP$ is summed by the group law to a point $S$, the order $\mu$ of $S$ is the order of the class, and the function with divisor $\mu\sum_Pn_P(P-\infty)$ is the product of the lines and verticals of the additions that take $\mu\sum_Pn_PP$ to $\infty$ --- the same accumulation that produces the Miller function when the sum is $\mu P$. The integral $\int\sqrt{1+\sec x\tan x}\,dx$ is the model case: with $t=\tan(x/2)$ the radicand is the quartic $t^4+2t^3-2t^2+2t+1$, whose cubic model $y^2=-8s^3+28s^2-30s+25$ carries the residues $-1,1$ at $(3/2,4)$ and $(-5/2,-20)$ --- over different primes, and neither class $[P-\infty]$ is torsion --- and $\pm i$ at $(3/2\pm2i,-8\mp4i)$. Both differences are $2$-torsion (the first sums to the point $(-20,0)$ of the monic model), and the two logands, with coefficients $-1/2$ and $i/2$, complete the integral, which is verified by differentiation on $0<x<\pi/2$:

\begin{verbatim}
  quartic radicand: changing to the cubic model
      y^2 = -8*s**3 + 28*s**2 - 30*s + 25
  (2*s - 3): places ['y=4', 'y=-4']: residues [-1, 0]
  (2*s + 5): places ['y=20', 'y=-20']: residues [0, 1]
  (4*s**2 - 12*s + 25): places [...]: residues [0, I, 0, -I]
      torsion: the divisor 1*(3/2, 4) + -1*(-5/2, -20) - (0) oo
          has order 2; logand with coefficient -1/2
      torsion: the divisor 1*(3/2 + 2*I, -8 - 4*I) + -1*(3/2 - 2*I, -8 + 4*I)
          - (0) oo has order 2; logand with coefficient I/2
\end{verbatim}

\noindent With this, no example or benchmark in this paper is left unrealised by the implementation.

\subsection{A nested radical over a conic: $\int\sqrt{x+\sqrt{1-x^2}}\,dx$}\label{ex:nested}

The radicand $x+\sqrt{1-x^2}$ is not a pure root of a generator, so Lemma~\ref{lem:flatten} does not apply, and two radicals would have to coexist. The inner one is the conic $y^2=1-x^2$ with the rational point $(0,1)$; the pencil through it, $x=2w/(1+w^2)$, $y=(1-w^2)/(1+w^2)$, identifies $\Q(x,y)$ with the rational function field $\Q(w)$, and the outer radical becomes a single radical over it:
\[
\sqrt{x+\sqrt{1-x^2}}=\frac{\sqrt{(1+2w-w^2)(1+w^2)}}{1+w^2}=\frac{y}{1+w^2},\qquad y^2=q(w)=-w^4+2w^3+2w+1,
\]
a curve of genus one, with $Dw=-(1+w^2)^2/\bigl(2(w^2-1)\bigr)$ the derivative of the parameter with respect to $x$. The tower is the single non-monomial generator $w$ over the curve, and Lemma~\ref{lem:rescale} turns it into the setting of Part I: with $D'=(Dw)^{-1}D$,
\[
\int\sqrt{x+\sqrt{1-x^2}}\,dx=\int\frac{2(1-w^2)}{(1+w^2)^3}\,y\,dw .
\]
The only finite pole is at the branch prime $w^2+1$, of order $5$ in the uniformiser, and its residue vanishes on both sheets; the continued fraction of $y$ finds no unit within the height bound, and $[\infty_+-\infty_-]$ is certified non-torsion by Proposition~\ref{prop:nontorsion} (orders $5,3,5,17$ modulo $5,13,17,29$); the linear system has no solution with the bound of Step~15 and none with the exact bounds of Part I, and Proposition~\ref{prop:certificates}(b) applies:

\begin{verbatim}
>>> integrate_surface(sqrt(x + sqrt(1 - x**2)), x, verbose=True)
  tower: generators [w1] with D = [((-w1**4 - 2*w1**2 - 1)/(2*w1**2 - 2), 0)],
      y^2 = -w1**4 + 2*w1**3 + 2*w1 + 1
  integrand: (0, 1/(w1**2 + 1))
  single generator w1 with Dw1 = (-w1**4 - 2*w1**2 - 1)/(2*w1**2 - 2):
      rescaled to d/dw1 (Lemma 3.4)
  (w1**2 + 1): branch, delta = 2, v_P(f) = -5
  unit search inconclusive; [oo+ - oo-] certified non-torsion by
      reduction mod p: [(5, 5), (13, 3), (17, 5), (29, 17)]
  ansatz: bounds [12], 26 unknowns (26 coefficients, 0 unit/S'-unit logs,
      0 special logs), 40 equations
  linsolve: 0.0s, no solution
  retrying with the exact degree bounds of Part I: [4, 2]
('not elementary', 'holomorphic remainder: residual second-kind
  differential is not exact (exact bounds of Part I)', [4, 2])
\end{verbatim}

\noindent The certificate is checked by hand in one line: with $v=4wy/\bigl(3(1+w^2)^2\bigr)=\tfrac23\,x\sqrt{x+\sqrt{1-x^2}}$,
\[
\frac{2(1-w^2)}{(1+w^2)^3}\,y\,dw-dv=\frac23\,\frac{dw}{y},
\]
the differential of the first kind on the curve, so that
\[
\int\sqrt{x+\sqrt{1-x^2}}\,dx=\frac23\,x\sqrt{x+\sqrt{1-x^2}}+\frac13\int\frac{dx}{\sqrt{1-x^2}\,\sqrt{x+\sqrt{1-x^2}}},
\]
with an elliptic integral of the first kind as the remainder. For the derivation $D$ itself the prime $w^2+1$, which divides the numerator of $Dw$, is special, and the certificate does not apply to a tower with a special prime; it is the rescaling that makes the decision possible. Mathematica returns the integral unevaluated; the Mathematica port certifies it in $0.2$\,s.

\subsection{A cancellation chain below the top}\label{ex:chain}

The tower $t_1=\log(x)$, $t_2=\log(xt_1+1)$ is the second example of Part I, \S6.3, where a degree drop of two in $t_1$ defeats the bound of \cite[(10.4)]{Bronstein05}. Example~\ref{ex:chainpreview} explains the drop as a cancellation chain whose links each cost one degree in $t_2$; the following integrand has a chain of four links. Let
\begin{multline*}
g\ =\ t_1^5-5t_2t_1^4+\tfrac{35}{12}t_1^4+\bigl(10t_2^2-\tfrac{20}{3}t_2+\tfrac{20}{9}+\tfrac5x\bigr)t_1^3\\
+\bigl(-10t_2^3-\tfrac{20}{x}t_2+\tfrac{20}{3x}-\tfrac{5}{2x^2}\bigr)t_1^2 ,
\end{multline*}
whose coefficients were solved for so that the leading five orders of $Dg$ in $t_1$ cancel, and let $f:=Dg$:
\begin{multline*}
f\ =\ \frac{-1}{3x^3(xt_1+1)}\Bigl(60x^3t_1^2t_2^3+90x^3t_1^2t_2^2-90x^2t_1^2t_2^2+180x^2t_1^2t_2-60xt_1^2t_2-10xt_1^2-15t_1^2\\
+60x^2t_1t_2^3+120xt_1t_2-40xt_1+15t_1\Bigr) .
\end{multline*}
The numerator of $f$ has degree $2$ in $t_1$ and its denominator degree $1$, while $g$ has degree $5$: the classical bound, and the ``$+2$'' the implementations used before Section~\ref{sec:lower}, fail. The session of the SymPy implementation with that guess runs the ladder $\{10,4,5\}$, $\{12,5,5\}$, $\{14,6,5\}$ (the exponents of the specials $x$ and $xt_1+1$ raised at each rung) and reports ``failed''; the Mathematica port does the same. With Algorithm~6:
\begin{verbatim}
  (t1*x + 1): special (multiplicity 1); candidate log(t1*x + 1)
  (x): special (multiplicity 3); candidate log(x)
  bounds: x: 8 [guess], t1: 8 [guess], t2: 4 [K2];
      specials: (t1*x + 1)^0 [K1], (x)^2 [K1]  (a guess is in play)
  ansatz: bounds [8, 8, 4], 407 unknowns (405 coefficients,
      0 unit/S'-unit logs, 2 special logs), 493 equations
  linsolve: 0.0s, solved
\end{verbatim}
The two special exponents are proved by (K1): at $(x)$, $s=-1$, $\rho_v=1$, and the only attaining generator is $t_1$ with $\lambda_1=1$, so the exponent is $\max(0,-1-v_x(f))=2$, the multiplicity $3$ of $x$ in the denominator of $f$ less one; at $(xt_1+1)$ the attaining generator is $t_2$ and the exponent is $0$. The place at infinity of $t_2$ is decided by (K2). The places at infinity of $x$ and of $t_1$ are undecided --- at $\infty_x$ the leading coefficient $(\bar t_1+1)/\bar t_1$ of $Dt_2$ involves the attaining $\bar t_1$; at $\infty_{t_1}$ the generator $t_2$ depends on $t_1$ --- and there the guess is used, with the chain allowance of one plus the bound in the variable above at $\infty_{t_1}$, whence $8$. The system solves at the first rung, and the integral is verified by differentiation; the bounds actually needed are $\{4,6,3\}$, and the integral is found with those as well.

\subsection{The arcsine with a parameter: $\int\arcsin(\sqrt{x+b})/\sqrt x\,dx$}\label{ex:arcsinb}

For $b=1$ this is the integrand of Example~\ref{ex:eulerbounds}: the derivative of $\arcsin(\sqrt{x+1})$ introduces $\sqrt{1-(x+1)}=i\sqrt x$, no new radical, and the Euler parameter $w=\sqrt x+\sqrt{x+1}$ leaves a transcendental tower. For a parameter $b$ the derivative introduces $\sqrt{1-x-b}$, and after the same parametrisation, $x=(w^2-b)^2/(4w^2)$, the tower is $w$, $y$, $t=\arcsin(\sqrt{x+b})$ with
\[
Dw=\frac{2w^3}{w^4-b^2},\qquad y^2=-4\bigl(w^4+(2b-4)w^2+b^2\bigr),\qquad Dt=-\frac{w^2\,y}{(w^2+b)\bigl(w^4+(2b-4)w^2+b^2\bigr)},
\]
and the integrand is $f=2tw/(w^2-b)$. The quartic has discriminant $2^{24}b^2(b-1)^2$: for $b\notin\{0,1\}$ the curve has genus one, and for $b=1$ it degenerates to $y=2i(w^2-1)$. The session:
\begin{verbatim}
  tower special: candidate log(w)
  bounds: w: 2 [K1], t: 2 [K2];  specials: (w)^1 [K1]  (all proved)
  ansatz: bounds [2, 2], 19 unknowns (18 coefficients,
      0 unit/S'-unit logs, 1 special logs), 64 equations
  linsolve: 0.0s, no solution
  ('failed', 'no solution within bounds', [2, 2])
\end{verbatim}
Every bound in force is proved: at the two places over $w=\infty$ and at the two over the special $w$ (which is unramified, $q(0)=-4b^2$), the shift is $2$, the uniformiser attains it, and the only other attaining generator is the primitive $t$ with a constant leading coefficient, so (K1) applies; at $t=\infty$, (K2). The system is inconsistent, and the residual is residue-free by construction (the pole of $f$ at $w^2=b$ is sub-critical). What Proposition~\ref{prop:certificates}(b) still requires is the completeness of the logand candidates: $w$ is a special prime over the curve variable, and an $S'$-unit with divisor supported on the four places over $w=0$ and the two at infinity is only searched within a bound (the norm search finds $-2ib-2iw^2\pm y$, of norm $-16w^2$), so the algorithm claims nothing and returns ``failed''. The verdict a complete unit search would certify is true: integrating by parts,
\[
\int\frac{\arcsin(\sqrt{x+b})}{\sqrt x}\,dx\;=\;2\sqrt x\,\arcsin(\sqrt{x+b})-\int\frac{x\,dx}{\sqrt{x(x+b)(1-x-b)}}\,,
\]
and $x\,dx/y$ on $y^2=x(x+b)(1-x-b)$ is a differential of the second kind, not exact for $b\notin\{0,1\}$, so the integral is not elementary; Mathematica returns it with \texttt{EllipticE}. At $b=1$ the remainder is $-2i\sqrt{x+1}$, and the integral is the one of Example~\ref{ex:eulerbounds}. The example marks the boundary precisely: the degree bounds are no longer the obstacle to a certificate above a curve of positive genus, the unit group is.

\section{Conclusions and open problems}\label{sec:conclusion}

Removing the closure hypothesis of Part I --- letting the radical sit anywhere in the tower, with the generators above it differentiating through $y$ --- leaves the parallel method intact: one ring, one ansatz, one linear system, no integral bases, no Puiseux expansions, no recursion. The denominator of the derivation became a divisor and the valuation lemma became purely local, $\delta_P=1+v_P(\fd_D)$, subsuming and (in the non-monomial case) correcting Part I. Three phenomena with no transcendental analogue organise the rest: tower-dependent normality (Proposition~\ref{prop:typeE}), the sub-critical window in which poles demand neither Hermite denominators nor residues, and the blindness of residues to units, splitting the logands into a determined part (residues), a searched part (units and norms) and a decided part (torsion) --- a trichotomy Chebyshev's integral (\S\ref{ex:chebyshev}) realises in one answer. Every integral exhibited in Section~\ref{sec:examples} is computed by the accompanying implementation and verified by differentiation, a discipline that incidentally surfaced misprints in two sources.

Two problems remain open. \emph{Nested radicals}: Lemma~\ref{lem:flatten} reduces a radical tower to a single radical only when all but one layer flattens, and a conic layer is parametrised away, Lemma~\ref{lem:rescale} restoring the setting of Part~I (\S\ref{ex:nested}); beyond these, the local theory of Sections~\ref{sec:valuations}--\ref{sec:structure} was proved for arbitrary Krull domains and so applies verbatim to the iterated closure, but the Trager basis, content theory and norm search need their Dedekind-module counterparts --- the valuations are ready; the representations are the work. \emph{Beyond simple radicals}: for a single $y$ with arbitrary minimal polynomial (the Bronstein-thesis setting, of which \S\ref{ex:bronsteinE} is the flattenable shadow) the explicit basis is unavailable, and the honest trade is to compute an integral basis once per tower while keeping the parallel control flow; tameness in characteristic $0$ makes us expect $\delta_P=e_P(1+v_p(\den_0))$ to persist. Beyond these: completeness --- the degree bounds and the special exponents are proved wherever the leading part of the derivation has no kernel (Section~\ref{sec:lower}), and what remains is the list of Remark~\ref{rem:decided}, each item a Risch-type decision in a residue field; certified non-elementarity where a logand divisor is supported at finite places of a curve of genus $\ge2$ or has non-constant coordinates, beyond the reach of the division polynomials and of Proposition~\ref{prop:nontorsion}; and the reduction of that list to the reduction systems of \cite{DuRaab25}, which decide exactly the membership questions it poses.

Parts I--II form an arc: the radical on top of the tower, then inside it. The local turn of Section~\ref{sec:valuations} was taken with the open problems above in view.

\section*{Data availability}
The SymPy implementation used for every session in Section~\ref{sec:examples} (\texttt{parallel\_mixed.py}, with \texttt{pell.py} for the continued-fraction unit search and \texttt{examples.py}, which reruns every example of the paper and reports the outcome claimed for it) is provided as supplementary material, together with the Mathematica port \texttt{ParallelMixed.wl} of the same algorithm and its example scripts. The full Python script runs in about a minute and a half, of which Chebyshev's integral (\S\ref{ex:chebyshev}) is forty seconds (its norm search); the two slowest benchmarks (\S\S\ref{ex:chebyshev}, \ref{ex:gunther}) can be skipped with \texttt{QUICK=1}.

\end{document}